\renewcommand{\doitext}{\noexpand\textsc{doi:}}
\newcommand{\arxiv}[2]{{\textsc{arXiv}}:\href{https://arxiv.org/abs/#1}{#1} {[#2]}}
\theoremstyle{definition}
\newtheorem{defi}{Definition}[section] 
\newtheorem{example}[defi]{Example} 
\theoremstyle{plain}
\newtheorem{thm}[defi]{Theorem} 
\newtheorem{prop}[defi]{Proposition} 
\newtheorem{cor}[defi]{Corollary} 
\theoremstyle{remark}
\newtheorem*{remark}{Remark}
\numberwithin{equation}{section}
\newcommand{\R}{\mathbb{R}}
\newcommand{\T}{\mathrm{T}}
\newcommand{\cH}{\mathcal{H}} 
\newcommand{\set}[1]{\left\{ #1 \right\}} 
\newcommand{\cL}{\mathcal{L}}
\newcommand{\Adm}{\mathrm{Adm}}
\newcommand{\g}{\mathfrak{g}}
\newcommand{\J}{\mathbf{J}}
\newcommand{\cD}{\mathcal{D}}
\newcommand{\cR}{\mathcal{R}}
\newcommand{\h}{\mathfrak{h}}
\newcommand{\p}{\mathfrak{p}}
\newcommand{\pr}{\mathrm{pr}}
\begin{document}

  \vskip0.5cm

  \begin{center}
 
 \noindent
 {\Large \bf  
Mixed superposition rules for Lie systems and  \\[5pt]
compatible geometric structures} 
 
   \end{center}

\medskip

\begin{center}

{\sc  Rutwig Campoamor-Stursberg$^{1,2}$, Oscar Carballal$^{1,2,3}$  \\[2pt]
Francisco J. Herranz$^{4}$ and Javier de Lucas$^{3}$}

\end{center}

\medskip

\noindent
$^{1}$ Instituto de Matem\'atica Interdisciplinar, Universidad Complutense de Madrid, E-28040 Madrid,  Spain

 \noindent
$^{2}$ Departamento de \'Algebra, Geometr\'{\i}a y Topolog\'{\i}a,  Facultad de Ciencias 
Matem\'aticas, Universidad Complutense de Madrid, Plaza de Ciencias 3, E-28040 Madrid, Spain

\noindent
{$^3$ Department of Mathematical Methods in Physics, University of Warsaw,
ul. Pasteura 5, 02-093 Warszawa, Poland}

\noindent 
$^{4}$ Departamento de F\'isica, Universidad de Burgos, 
E-09001 Burgos, Spain

 \medskip
 
\noindent  E-mail: {\small \href{mailto:rutwig@ucm.es}{\texttt{rutwig@ucm.es}}, \href{mailto:oscarbal@ucm.es}{\texttt{oscarbal@ucm.es}},   \href{mailto:fjherranz@ubu.es}{\texttt{fjherranz@ubu.es}}, \href{mailto:javier.de.lucas@fuw.edu.pl}{\texttt{javier.de.lucas@fuw.edu.pl}} 
}

\begin{abstract}  
\noindent
Mixed superposition rules are, in short, a method to describe the general solutions of a time-dependent system of first-order differential equations, a so-called Lie system, in terms of particular solutions of other ones. This article is concerned with the theory of mixed superposition rules and their connections with geometric structures. We provide methods to obtain mixed superposition rules for systems admitting an imprimitive finite-dimensional Lie algebra of vector fields or given by a semidirect sum. In particular, we develop a novel   mixed coalgebra method for Lie systems that are Hamiltonian relative to a Dirac structure, which is quite general, although we restrict to symplectic and contact manifolds in applications.  This provides us with practical methods to derive  mixed superposition rules and extends the coalgebra method to a new field of application while solving minor technical issues of the known formalism. Throughout the paper, we apply our results to  physical systems including Schr\"odinger Lie systems, Riccati systems, time-dependent Calogero--Moser systems with external forces, time-dependent harmonic oscillators, and time-dependent thermodynamical systems, where general solutions can be obtained from reduced system solutions. Our results are finally extended to Lie systems of partial differential equations and a new source of such PDE Lie systems, related to the determination of approximate solutions of PDEs, is provided. An example based on the Tzitz\'eica equation and a related system is given.
\end{abstract}

  \noindent
\textbf{Keywords}: Lie system; superposition rule; Lie-Hamilton system; Poisson coalgebra;\\  mixed superposition rules; PDE Lie systems.

\noindent
\textbf{MSC 2020 codes}: 34A26, 
17B66, 
53C15 (primary) 
37J15, 
 53D17, 
81R12 (secondary) 

\smallskip
\noindent
\textbf{PACS 2010 codes}:   
Primary: 
02.20.Sv, 
02.40.Yy, 
02.30.Hq 
Secondary:  
02.40.Ky, 
03.65.Fd

\medskip

 \newpage
  
\tableofcontents

\newpage


\section{Introduction}
The theory of Lie systems, originally developed through the seminal contributions of Lie, Vessiot, and Guldberg, among others \cite{Vessiot1894,Lucas2020,Carinena2011}, provides a  framework to study systems of time-dependent ordinary differential equations (ODEs) whose general solutions can be expressed in terms of a finite set of particular solutions and some parameters related to initial conditions. Lie systems are characterised by the Lie--Scheffers theorem, which states that a system admits a superposition rule if and only if it is associated with a finite-dimensional Lie algebra of vector fields, now known as a {\it Vessiot--Guldberg Lie algebra} \cite{Lucas2020,Carinena2011}. Over the past decades, this theoretical foundation has become central in understanding integrability properties, explicit solvability, and symmetries of nonlinear differential systems \cite{Carinena2000,Ramos2002,Lazaro-Cami2009,Lucas2020,Amirzadeh-Fard2022}.

Lie systems are characterised by admitting  superposition rules, enabling the derivation of their general solutions through a geometric and algebraic mechanism. This approach has been pivotal in fields ranging from control theory and classical mechanics to mathematical physics and geometry (see \cite{Lucas2020,Carinena2011,Carinena2000,Ramos2002} and references therein). Recent developments have further expanded this classical concept to encompass mixed superposition rules \cite{Grabowski2013}, Lie--Hamilton systems (where the vector fields are Hamiltonian relative to a Poisson or symplectic structure) \cite{Carinena2013}, and Dirac--Lie systems (which generalise the latter to Dirac structures), providing deeper connections with modern geometric frameworks~\cite{Carinena2014}. For instance, in the works of   Cari\~nena, de Lucas  and their collaborators, Lie--Hamilton systems have been thoroughly studied, highlighting their relevance in both finite- and infinite-dimensional settings and their role in the geometric analysis of integrable systems \cite{Carinena2013,Ballesteros2013,Ballesteros2015,Ballesteros2021,deLucas2024ASetting}.

Despite these advances,   classical superposition rules restrict the construction of general solutions to depend only on solutions of the same system. In various applied and geometric contexts, it is natural and sometimes necessary to consider combinations involving solutions of auxiliary systems, leading to what are termed {\it mixed superposition rules} \cite{Grabowski2013}. Early glimpses of such ideas appeared implicitly in studies of inhomogeneous linear systems, where particular solutions are combined with solutions of homogeneous counterparts, and Ermakov systems \cite{Ermakov1880,Leach1991,Maamache1995,Carinena2008} (or Milne--Pinney equations~\cite{Milne1930,Pinney1950}), whose solutions can be described via particular solutions of harmonic oscillators~\cite{LA_08}.  

In this work, we propose a comprehensive framework for mixed superposition rules for Lie systems, incorporating both algebraic and geometric perspectives, particularly via the formalism of Poisson coalgebras and comodules. Coalgebra techniques, initially introduced in the context of integrable systems and factorization problems, have demonstrated their efficacy in generating conserved quantities and encoding symmetries in an algebraic way. Notably, coalgebras and their dual structures have become essential in describing the algebraic backbone of classical and quantum   (super)integrable systems (see, e.g., \cite{BR1998,Ballesteros2009,annals2009,Turbiner1988,Tempesta2001}). Here, we extend these ideas to Lie systems, providing a natural language for describing mixed superposition structures.

A central novel contribution of this paper lies in elucidating the role of {\it imprimitive Vessiot--Guldberg Lie algebras}, which preserve certain regular, integrable distributions on manifolds \cite{GonzalezLopez1992,Shnider1984,Shnider1984a}. Such algebras facilitate reductions via foliations, allowing us to project complex systems onto lower-dimensional leaf spaces. This geometric mechanism is fundamental for constructing mixed superposition rules: by solving simpler projected systems (for example, Riccati equations or reduced oscillators), we can reconstruct the solutions of the original system through an explicit procedure. Concrete examples, such as the Riccati system and  time-dependent harmonic oscillators, showcase this approach and provide new geometric insights.

Moreover, we investigate Lie systems whose Vessiot--Guldberg Lie algebras admit {\it semidirect sum decompositions}, a structure well-known in the theory of affine and inhomogeneous systems \cite{Carinena2011,Grabowski2013,Carinena2003}. By exploiting the algebraic properties of such decompositions, one can derive mixed superposition rules in a purely algebraic fashion without recourse to geometric reduction. This unifies and generalises the superposition mechanisms of linear and affine systems, extending classical results on inhomogeneous linear ODEs.

Our approach also highlights connections to momentum maps and Poisson geometry, framing mixed superposition rules within a broader algebraic--geometric context. In particular, the constants of motion necessary for constructing superposition rules are obtained via Poisson algebra morphisms and Casimir functions of Poisson coalgebra structures. These constructions resonate with the methods used in integrable Hamiltonian systems and symmetry reduction \cite{Ortega2004}, thus opening avenues for future applications in geometric mechanics and mathematical physics.

Furthermore, we recall an extended Lie--Scheffers theorem for mixed superposition rules, establishing necessary and sufficient conditions for their existence in terms of a distributional injectivity criteria (see Theorem \ref{th:mixed_injective}). This result represents a significant generalisation of classical Lie theory and provides a rigorous foundation for analysing the solvability of more general differential systems.

Some of our fundamental results are described in Section \ref{Section:MixedNatural}. In particular, Theorem \ref{thm:imprimitive} describes practical conditions to ensure that a Lie system admits a mixed superposition rule in terms of solutions of the system and one of its projections onto a manifold of smaller dimension, which is eventually easier to study. Moreover, Theorem \ref{thm:imprimitive} analyses Lie systems admitting a Vessiot--Guldberg Lie algebra that is a semidirect sum of Lie algebras. This is used to obtain the general solution of the Lie system in terms of  solutions of the same Lie system and Lie systems associated with an ideal of the Vessiot--Guldberg Lie algebra. These results are very useful and indeed describe some of the basic mixed superposition rules known. For instance, the fact that the general solution of an inhomogeneous linear system  is a function of a particular solution of the inhomogeneous system and a family of solutions of its homogeneous counterpart is a particular case of this instance. Nevertheless, since our result is based on the Lie algebra structure of the Vessiot--Guldberg Lie algebra, being an inhomogeneous linear system does not play a role, and this result can be much generalised to a generally geometric realm.

Section~\ref{Section:Dirac} briefly surveys Dirac geometry and analyses several results on Lie systems with a compatible Vessiot--Guldberg Lie algebra of Hamiltonian vector fields relative to a Dirac structure. In particular, results in \cite{Grabowski2013} are generalised to cover the direct product of Dirac--Lie systems. For instance, the direct product of Dirac--Lie systems is proven to be a Dirac--Lie system. These results are necessary so as to provide a coalgebra method for mixed superposition rules. 

A coalgebra formalism for mixed superposition rules is presented in Section \ref{Section:coalgebras}. It is worth noting that this section fixes a technical problem presented in previous coalgebra methods. In particular, the usual approach in the coalgebra method relied on identifying $C^\infty(M)\otimes C^\infty(M)$ with $ C^\infty(M \times M)$. Nevertheless, this is not true for the standard tensor product, which only allows for finite sums of products of functions on the first and second copies of $M$. Meanwhile, we recall how the topology of nuclear and Fr\'echet spaces can be used to define a topological tensor product $\widehat{\otimes}$ yielding the isomorphism $C^{\infty}(M) \, \widehat{\otimes} \, C^{\infty}(M) \simeq C^{\infty}(M \times M)$ (see \cite{Grothendieck1952, Kriegl1997,Treves1967}). This shows that the coalgebra method is fully well-defined, thereby solving a long-standing technical issue of the method. In particular, it addresses one of the main reasons why certain parts of the theory of Lie systems with associated Poisson coalgebras were confined to  polynomial Casimir functions, although these remain the most common in applications. 

All results of this paper are exemplified with relevant physical examples, which illustrate their versatility. These include Schr\"odinger Lie systems,  Riccati systems, time-dependent Calogero--Moser systems with external forces, harmonic oscillators with a time-dependent frequency and time-dependent thermodynamical systems. 

  The present work is structured as follows.  In Section  \ref{section:fundamentals}, we review fundamental notions concerning Lie systems, Poisson coalgebras, and comodules, setting the algebraic and geometric stage of the paper. The notion of mixed superposition rules is introduced  in Section   \ref{Section:mixedsuperpositions} by studying some Schr\"odinger Lie systems,  allowing us to present  their formal definition in illustrative way, as well as to propose  a generalisation of the classical Lie--Scheffers theorem, developing  the necessary tools via the coalgebra techniques and momentum maps. Section \ref{Section:MixedNatural} focuses on Lie systems with imprimitive Vessiot--Guldberg algebras, detailing the geometric reduction mechanisms and explicit reconstruction formulas. We discuss concrete examples that illustrate the theory, such as the Riccati system and time-dependent harmonic oscillators. Section \ref{Section:Dirac} is devoted to Dirac--Lie systems, emphasising their compatibility with mixed superposition rules and the geometric structures involved. In Section \ref{Section:coalgebras}, we develop the coalgebra formalism for Dirac--Lie systems, providing a unified algebraic viewpoint on the constants of motion and their role in superposition constructions. Section \ref{Section:Applications} presents applications to various physical and geometric models, including time-dependent thermodynamical systems and partial differential equation (PDE) Lie systems. Finally, Section \ref{Section:Concluding} summarises our main findings, discusses potential extensions, and outlines open problems and future research directions.


\section{Fundamentals} 
\label{section:fundamentals}

Let us introduce the key concepts and definitions used throughout the paper. It is 
assumed that all structures are smooth. Manifolds are real, Hausdorff, paracompact, and separable, and are typically denoted by $M$ and $N$. All Lie algebras under consideration are real ones. Given two finite-dimensional Lie algebras $\h, \p$, and a Lie algebra morphism $R: \h \to \mathrm{Der}(\p)$ of $\h$ in the space of derivations on $\p$ relative to its Lie bracket, a so-called {\it representation}, the semidirect sum of $\mathfrak{h}$ with $\mathfrak{p}$ is denoted by $\h \,\overrightarrow{\oplus}_{R}\, \p$. When  $R$ is understood from the context,  $\overrightarrow{\oplus}_{R}$ is denoted simply by  $\overrightarrow{\oplus}$.


\subsection{Poisson (co)algebras and comodules} \label{subsection:coalgebras}

Let $(V, [\cdot, \cdot])$ denote a real Lie algebra with   Lie bracket $[\cdot, \cdot] : V \times V \to V$. If the Lie bracket is implicitly assumed,  the Lie algebra  simply denoted by $V$. Given a subset $\mathcal{A} \subset V$, we write $\mathrm{Lie}(\mathcal{A}, [\cdot, \cdot])$  for the smallest Lie subalgebra of $V$ (in the sense of the inclusion) containing $\mathcal{A}$.

A \textit{Poisson algebra} is a pair $ (\mathcal{A}, \set{\cdot, \cdot}_{\mathcal{A}}  )$, where $\mathcal{A}$ is an $\R$-algebra, and  $\{\cdot, \cdot\}_{\mathcal{A}}$ is a Lie bracket on $\mathcal{A}$ satisfying the \textit{Leibniz rule}, namely, 
$$
\{bc, a\}_{\mathcal{A}} = b\{c, a\}_{\mathcal{A}} + \{b, a\}_{\mathcal{A}}c, \qquad a,b,c \in \mathcal{A}.
$$
The bracket $\{\cdot,\cdot\}_{\mathcal{A}}$ is called a {\it Poisson bracket}. 
A \textit{Casimir element} of $\mathcal{A}$ is an element $C \in \mathcal{A}$ such that $\{C, a\}_{\mathcal{A}} = 0$ for all $a \in \mathcal{A}$. Given Poisson algebras $\mathcal{A}$ and $\mathcal{B}$, their tensor product $\mathcal{A} \otimes \mathcal{B}$ is a Poisson algebra with the Poisson bracket 
$$
\{a_1 \otimes b_1, a_2 \otimes b_2\}_{\mathcal{A} \otimes \mathcal{B}} := \{a_1, a_2\}_{\mathcal{A}} \otimes b_1 b_2 + a_1 a_2 \otimes \{b_1, b_2\}_{\mathcal{B}}, \qquad a_{1}, a_{2} \in \mathcal{A},\qquad b_{1}, b_{2} \in \mathcal{B}.
$$
A \textit{morphism of Poisson algebras} $\varphi: \mathcal{A} \to \mathcal{B}$ is a morphism of $\mathbb{R}$-algebras such that $
\varphi(\{a, b\}_{\mathcal{A}}) = \{\varphi(a), \varphi(b)\}_{\mathcal{B}}$ for all $a,b \in \mathcal{A}.$
A \textit{Poisson coalgebra} is a triple $(\mathcal{A}, \set{\cdot, \cdot}_{\mathcal{A}}, \Delta_{\mathcal{A}})$, where $(\mathcal{A}, \{\cdot, \cdot\}_{\mathcal{A}})$ is a Poisson algebra and  $\Delta_{\mathcal{A}}: \mathcal{A} \to \mathcal{A} \otimes \mathcal{A}$ is a morphism of Poisson algebras, the so-called \textit{coproduct}, which is {\it coassociative}, that is, 
\begin{equation*}
    \begin{tikzcd}
	{\mathcal{A}} & {\mathcal{A} \otimes \mathcal{A}} \\
	{\mathcal{A} \otimes \mathcal{A}} & {\mathcal{A} \otimes \mathcal{A} \otimes \mathcal{A}}
	\arrow["{\Delta_{\mathcal{A}}}", from=1-1, to=1-2]
	\arrow["{\Delta_{\mathcal{A}} }"', from=1-1, to=2-1]
	\arrow["\Delta_{\mathcal{A}} \otimes \mathrm{id}", from=1-2, to=2-2]
	\arrow["\mathrm{id} \otimes \Delta_{\mathcal{A}}"', from=2-1, to=2-2]
\end{tikzcd}
\end{equation*}
is a commutative diagram of morphism of Poisson algebras \cite{Chari1995}. A {\it (right) coaction} of a Poisson coalgebra $(\mathcal{B}, \set{\cdot, \cdot}_{\mathcal{B}}, \Delta_{\mathcal{B}})$ on a Poisson algebra $(\mathcal{A}, \set{\cdot, \cdot}_{\mathcal{A}})$ is a morphism of Poisson algebras $\varphi: \mathcal{A} \to \mathcal{A} \otimes \mathcal{B}$ such that 
\begin{equation*}
\begin{tikzcd}
	{\mathcal{A}} & {\mathcal{A} \otimes \mathcal{B}} \\
	{\mathcal{A} \otimes \mathcal{B}} & {\mathcal{A} \otimes \mathcal{B} \otimes \mathcal{B}}
	\arrow["\varphi", from=1-1, to=1-2]
	\arrow["\varphi"', from=1-1, to=2-1]
	\arrow["{\varphi \otimes \mathrm{id}}", from=1-2, to=2-2]
	\arrow["{\mathrm{id} \otimes \Delta_{\mathcal{B}}}"', from=2-1, to=2-2]
\end{tikzcd}
\end{equation*}
is a commutative diagram of Poisson algebra morphisms. In such case, $\mathcal{A}$ is said to be a {\it Poisson $\mathcal{B}$-comodule algebra} \cite{Ballesteros2002,Ballesteros2009,Agore2021}.


\subsection{Lie systems}
A {\it $t$-dependent vector field on $M$} is a map $X : \R \times M \ni (t, x) \mapsto X(t, x) \in \mathrm{T}M$ such that, for each $t \in \R$, the map $X_{t}:= X(t, \cdot): M \to \T M$ is a vector field \cite{Carinena2011,Lucas2020}.  Thus, $X$ amounts to a one-parameter family of vector fields $\set{X_{t}}_{t \in \R}$, and vice versa.  An {\it integral curve} of $X$ is map $\R \ni t  \mapsto x(t)\in M$ so that $\gamma : \R \ni t  \mapsto (t, x(t))\in \R \times M$ is an integral curve of the {\it autonomisation} of $X$, namely, the vector field $\partial / \partial t + X$ on $\R \times M$. Every $t$-dependent vector field $X$ on $M$ gives rise to its referred to as {\it associated system}, 
\begin{equation}
\dv{x}{t} = X(t, x), \qquad (t, x) \in \R \times M. 
    \label{eq:ODEs}
\end{equation}
Every solution $x(t)$ of this system is an integral curve of $X$, and any system of ODEs on $M$ in normal form, that is \eqref{eq:ODEs}, uniquely defines a $t$-dependent vector field $X$ on $M$. Hence, $X$ is unambiguously identified with its associated system, and we use $X$ to refer to both. The {\it smallest Lie algebra} of a $t$-dependent vector field $X$ is the smallest Lie algebra in the sense of inclusion, $V^{X}:= \mathrm{Lie}(\set{X_{t}}_{t \in \R})$, containing the vector fields $\set{X_{t}}_{t \in \R}$.

A {\it Lie system} is a $t$-dependent vector field $X$ on $M$ taking values in a finite-dimensional Lie algebra $V$ of vector fields, that is, $\set{X_{t}}_{t \in \R} \subset V$. In such a case, $V$ is called a {\it Vessiot--Guldberg \it Lie algebra} (VG Lie algebra) of $X$. Note that $X$ is a Lie system if and only if $V^{X}$ is finite-dimensional \cite{Carinena2011,Lucas2020}. 
The main property of Lie systems is the so-called superposition rule \cite{Carinena2011,Lucas2020,Carinena2007,Winternitz1983}. A {\it superposition rule} for a system $X$ depending on $k$ particular solutions is a map $\Phi: M^{k} \times M \to M$ such that the general solution $x(t)$ of $X$ can be expressed as $
x(t) = \Phi(x_{(1)}(t), \ldots, x_{(k)}(t); p),
$
where $x_{(1)}(t), \ldots, x_{(k)}(t)$ is a generic family of particular solutions of $X$, and $p \in M$ is a point related to some initial conditions. The celebrated Lie--Scheffers Theorem asserts that a system $X$ admits a superposition rule if and only if $X$ is a Lie system (see for instance \cite[Section 3.8]{Lucas2020}).

Every Lie algebra of vector fields $V$ on $M$ gives rise to its is referred to as {\it generalised distribution} $\mathcal{D}^{V} \subset \mathrm{T}M$ of the form $\mathcal{D}^{V}_{x} := \{X_{x} : X \in V\}$ for all $x \in M$. In general, the rank of $\mathcal{D}^V$ is not constant on $M$, so that $\mathcal{D}^V$ constitutes a generalised subbundle of $\mathrm{T}M$  \cite{Lewis2023}. Nevertheless, the rank of $\mathcal{D}^V$ is constant on each connected component of a certain open and dense subset of $M$, where $\mathcal{D}^V$ restricts to a regular and integrable distribution in the classical sense \cite{Carinena2011}. The case of principal relevance for us arises when $V$ is finite-dimensional, since this entails that $\mathcal{D}^V$ is integrable on the whole  $M$, in the sense of Stefan--Sussmann (cf.~\cite{Lavau2018} and references therein).

A finite-dimensional Lie algebra of vector fields $V$ on $M$ is  {\it locally automorphic} if $\dim V = \dim M$ and $\mathcal{D}^{V} = \mathrm{T}M$.
A {\it locally automorphic Lie system} is a triple $(M, X, V)$, where $X$ is a Lie system on $M$ and $V$ is a VG Lie algebra of $X$ that is locally automorphic \cite{Gracia2019}.

A {\it Lie symmetry} of a Lie algebra of vector fields $V$ on $M$ is a vector field $Z$ on $M$ such that $\mathcal{L}_{Y}Z = 0$ for every $Y \in V$. The space $\mathrm{Sym}(V)$ of Lie symmetries of $V$ forms a Lie algebra. A differential form $\alpha$ on $M$ is said to be {\it invariant} relative to $V$ if $\mathcal{L}_{Y} \alpha = 0$ for all $Y \in V$. If $(M, X, V)$ is a locally automorphic Lie system, then $\mathrm{Sym}(V)$ is isomorphic to $V$, and all invariant forms relative to $V$ can be computed effectively via geometric and algebraic methods \cite{Gracia2019}. This property allows one to construct geometric structures associated with a locally automorphic Lie system in a natural way \cite{Gracia2019,Lucas2023}. For non-locally automorphic Lie systems, no general procedure is yet known. However, the following useful result, previously observed in \cite{Carballal2025}, and whose proof is straightforward, can be applied in certain cases.
  \begin{prop} \label{prop:invariants}
  Let $V$ be a finite-dimensional Lie algebra of vector fields on $M$ that possesses a Lie subalgebra $V'$ that is locally automorphic.  Consider $\alpha_{1}, \ldots, \alpha_{n}$ to be the dual frame of a basis of $\mathrm{Sym}(V')$. Then, every invariant form relative to $V$ is a linear combination with real coefficients of $\alpha_{1}, \ldots, \alpha_{n}$ or their exterior products. 
    \end{prop}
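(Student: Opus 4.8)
The plan is to reduce the statement to a computation in the coframe dual to a basis of $\mathrm{Sym}(V')$, exploiting that $V' \subseteq V$ so that every $V$-invariant form is automatically $V'$-invariant. Accordingly, it suffices to prove the stronger assertion that \emph{every $V'$-invariant form} is a constant-coefficient combination of the $\alpha_{i}$ and their exterior products; the $V$-invariant forms, being a subset of the $V'$-invariant ones, then inherit this description.

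First I would record the structural input coming from local automorphy. Since $V'$ is locally automorphic, $\dim \mathrm{Sym}(V') = \dim V' = \dim M =: n$ and a basis $Z_{1}, \ldots, Z_{n}$ of $\mathrm{Sym}(V')$ constitutes a frame; this is precisely the parallelization underlying the phrase ``dual frame'' and follows from \cite{Gracia2019}. Let $\alpha_{1}, \ldots, \alpha_{n}$ be the dual coframe, so that $\alpha_{i}(Z_{j}) = \delta_{ij}$. The first genuine step is to check that each $\alpha_{i}$ is $V'$-invariant: for $Y \in V'$ one has $[Y, Z_{j}] = 0$ by definition of $\mathrm{Sym}(V')$, whence
\[
(\mathcal{L}_{Y} \alpha_{i})(Z_{j}) = Y(\alpha_{i}(Z_{j})) - \alpha_{i}([Y, Z_{j}]) = Y(\delta_{ij}) = 0 ,
\]
and since the $Z_{j}$ frame $\mathrm{T}M$ this forces $\mathcal{L}_{Y}\alpha_{i} = 0$. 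By the Leibniz rule for the Lie derivative, every exterior product $\alpha_{i_{1}} \wedge \cdots \wedge \alpha_{i_{k}}$ is then $V'$-invariant as well.

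Next I would expand an arbitrary $V'$-invariant $k$-form $\omega$ in this coframe, $\omega = \sum_{i_{1} < \cdots < i_{k}} f_{i_{1} \cdots i_{k}}\, \alpha_{i_{1}} \wedge \cdots \wedge \alpha_{i_{k}}$, with smooth coefficients $f_{i_{1} \cdots i_{k}} = \omega(Z_{i_{1}}, \ldots, Z_{i_{k}})$. Applying $\mathcal{L}_{Y}$ for $Y \in V'$ and using the invariance of the wedge products just established, all terms in which the derivative hits the forms drop out, leaving $\mathcal{L}_{Y}\omega = \sum_{i_{1} < \cdots < i_{k}} Y(f_{i_{1} \cdots i_{k}})\, \alpha_{i_{1}} \wedge \cdots \wedge \alpha_{i_{k}}$. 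Since the $\alpha_{i_{1}} \wedge \cdots \wedge \alpha_{i_{k}}$ are pointwise linearly independent, the condition $\mathcal{L}_{Y}\omega = 0$ forces $Y(f_{i_{1} \cdots i_{k}}) = 0$ for every $Y \in V'$ and every multi-index.

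The final step uses local automorphy once more: because $\mathcal{D}^{V'} = \mathrm{T}M$, the vanishing of $Y(f_{i_{1} \cdots i_{k}})$ for all $Y \in V'$ means that $\mathrm{d}f_{i_{1} \cdots i_{k}}$ annihilates the whole tangent space, so $\mathrm{d}f_{i_{1} \cdots i_{k}} = 0$ and each coefficient is locally constant, hence a real constant on each connected component of $M$. Thus $\omega$ is a real-linear combination of the $\alpha_{i}$ and their exterior products, as claimed. There is no serious obstacle here, the argument being essentially linear algebra in a fixed frame; the only points demanding care are the preliminary fact that a basis of $\mathrm{Sym}(V')$ is genuinely a frame, imported from the locally automorphic theory of \cite{Gracia2019}, and the harmless connectedness caveat in passing from ``locally constant'' to ``constant.''
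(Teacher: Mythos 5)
Your proof is correct and is precisely the straightforward argument the paper alludes to but does not write out (it only remarks that the proof ``is straightforward''): one shows the $\alpha_i$ and their wedges are $V'$-invariant because the $Z_j$ commute with $V'$, expands an arbitrary invariant form in this coframe, and uses $\mathcal{D}^{V'}=\mathrm{T}M$ to force the coefficients to be constant. Your observation that the statement actually holds for the larger class of $V'$-invariant forms, and your caveat about connected components, are both accurate and consistent with how the proposition is used in the paper's examples.
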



\subsection{{ Three-dimensional Riccati system}} 
\label{ex:Riccati_system}

Let us  illustrate the previous concepts by 
     considering the {\it Riccati system} on $\R^{3}$ with global coordinates $(x,y,z)$, given by \cite{Carinena2014,Suazo2011}
\begin{equation}
\begin{split}
    &\dv{x}{t} = - b_{2}(t) + 2 b_{3}(t) x + 4 b_{1}(t) x^{2}, \\[2pt]
    & \dv{y}{t} = \big(b_{3}(t) + 4 b_{1}(t) x\big) y, \\[2pt]
    & \dv{z}{t} = -\big(4b_{1}(t) x +2 b_{4}(t) \big)z,
    \end{split}
    \label{eq:ex:Riccati_system}
\end{equation}
where $b_{1}(t) ,\ldots,b_{4}(t) \in C^{\infty}(\R)$ are arbitrary $t$-dependent functions. If $(x(t), y(t), z(t))$ is a particular solution such that $y(t_{0}) = 0$ (resp., $z(t_{0}) = 0$) for some $t_{0} \in \R$, then $y(t) = 0$ (resp., $z(t) = 0$) identically. Therefore, we  restrict ourselves to studying system \eqref{eq:ex:Riccati_system} on the open subset $M := \set{(x, y, z) \in \R^{3}: yz \neq 0} \subset \R^{3}$. 

Geometrically, system \eqref{eq:ex:Riccati_system} describes the integral curves of the $t$-dependent vector field $X$ on $M$ given by 
\begin{equation}
X := \sum_{i = 1}^{4} b_{i}(t) X_{i},
    \label{eq:ex:Riccati_tdep}
\end{equation}
where the vector fields
\begin{equation}
\begin{aligned}
&X_{1} := 4 x^{2} \pdv{x} + 4xy \pdv{y} -4 xz \pdv{z}, &&\qquad X_{2}:= - \pdv{x},\\[2pt]
  & X_{3}:= 2 x  \pdv{x} + y \pdv{y},  &&\qquad X_{4}:= -2z \pdv{z},
\end{aligned}
\label{eq:ex:Riccati_vf}
\end{equation}
generate a four-dimensional Lie algebra, $V$, with non-vanishing commutation relations 
\begin{equation}
[X_{1}, X_{2}] = 4X_{3} + 2X_{4}, \qquad [X_{1}, X_{3}] = -2 X_{1}, \qquad [X_{2}, X_{3}] = 2 X_{2}.
\label{eq:ex:Riccati_cr}
\end{equation}
Hence, $X$ is a Lie system, and $V$ is an associated VG Lie algebra, isomorphic to $\mathfrak{sl}(2, \R) \oplus \R$, where $\langle X_{1}, X_{2}, X_{3} + \frac{1}{2}X_{4} \rangle \simeq \mathfrak{sl}(2, \R)$ and $ X_{4}$ spans the centre of $V$. Since $\dim V \neq \dim M$, the Lie algebra $V$ is not locally automorphic. Let $Y_{i} := X_{i}$ for $i = 1, 2, 4$, and $Y_{3}:= X_{3} + \frac{1}{2}X_{4}$. The Lie subalgebra $\langle Y_{1}, Y_{2}, Y_{3} \rangle \simeq \mathfrak{sl}(2, \R)$, although three-dimensional, is not locally automorphic, since $Y_{1} \wedge Y_{2} \wedge Y_{3}$ vanishes identically on $M$. 
However,
$$
Y_{2} \wedge Y_{3} \wedge Y_{4} = 4 y z \pdv{x} \wedge \pdv{y} \wedge \pdv{z}
$$
is non-vanishing on $M$, and $Y_{2}, Y_{3}, Y_{4}$ satisfy the sole non-vanishing commutation relation $[Y_{2}, Y_{3}] = 2 Y_{2}$. Therefore, $V':= \langle Y_{2}, Y_{3}, Y_{4} \rangle \simeq \mathfrak{b}_{2} \oplus \R$ is a Lie subalgebra of $V$ that is locally automorphic, where $\mathfrak{b}_{2}$ denotes the Borel subalgebra of $\mathfrak{sl}(2, \R)$. The Lie algebra of Lie symmetries of $V'$, let us say $\mathrm{Sym}(V')$, is spanned by
$$
Z_{1} := y^{2} \pdv{x}, \qquad Z_{2}:= y \pdv{y}, \qquad Z_{3}:= z \pdv{z},
$$
and is isomorphic to $\mathfrak{b}_{2} \oplus \R$ as well. Now, let  $\set{\alpha_{1}, \alpha_{2}, \alpha_{3}}$ be the dual frame to $\set{Z_{1}, Z_{2}, Z_{3}}$, namely 
$$
\alpha_{1} := \frac{1}{y^{2}} \dd x, \qquad \alpha_{2} := \frac{1}{y} \dd y, \qquad \alpha_{3} := \frac{1}{z} \dd z,
$$
satisfying
\begin{equation}
\dd \alpha_{1} = 2 \alpha_{1} \wedge \alpha_{2}, \qquad \dd \alpha_{2} = \dd \alpha_{3} = 0. 
    \label{eq:ex:Riccati_MC}
\end{equation}
Thus, $\alpha := \sum_{i = 1}^{3} c_{i} \alpha_{i}$, with $c_{i} \in \R$, is an invariant form for $V$ if and only if 
$$
\cL_{Y_{1}} \alpha = 4(c_{2}- c_{3}) \dd x = 0 \Longleftrightarrow  c_{3} = c_{2}.
$$
 Therefore, $\alpha = c_{1} \alpha_{1} + c_{2}(\alpha_{2} + \alpha_{3})$ is an invariant form for $V$ for all $c_{1}, c_{2} \in \R$. Among these, and taking into account \eqref{eq:ex:Riccati_MC}, we find that $\eta  := c_{1} \alpha_{1} + c_{2}(\alpha_{2} + \alpha_{3})$  is a {\it contact form}; that is, a one-form $\eta$ on $M$ such that $\eta \wedge \dd \eta$ is a volume form, whenever $c_{1} c_{2} \neq 0$. Indeed,
$$
\eta \wedge \dd \eta =    2 c_{1} c_{2} \alpha_{1} \wedge \alpha_{2} \wedge \alpha_{3} \neq 0 \Longleftrightarrow c_{1} c_{2} \neq 0. 
$$
For instance, choosing $c_{1} = c_{2} = 1$ yields the contact form 
$$
\eta = \alpha_{1} + \alpha_{2} + \alpha_{3} =   \frac{1}{y^2} \dd x +  \frac{1}{y} \dd y  + \frac{1}{z} \dd z,
$$
and the pair $(M, \eta)$ is a {\it co-orientable contact manifold}. 
The {\it Reeb vector field} $\cR \in \mathfrak{X}(M)$ of $(M, \eta)$, determined by the conditions $\iota_{\cR} \eta = 1$ and $\iota_{\cR} \dd \eta = 0$, is given by $\cR := Z_{3}$. This turns system \eqref{eq:ex:Riccati_system} into a {\it contact Lie system} \cite{Lucas2023,Campoamor-Stursberg2025a}. That is, the vector fields \eqref{eq:ex:Riccati_vf} are {\it contact Hamiltonian vector fields} with respect to the contact form $\eta$, satisfying
$$
\iota_{X_{i}} \eta = - h_{i}, \qquad \iota_{X_{i}} \dd \eta = \dd h_{i} - \cR h_{i}, 
$$
where $h_{i} \in C^{\infty}(M)$  given by 
\begin{equation}
h_{1} := -  \frac{4x^{2}}{y^{2}}, \qquad h_{2} := \frac{1}{y^{2}}, \qquad h_{3} := - \frac{2x}{y^{2}} -1, \qquad  h_{4} := 2,
    \label{eq:ex:Riccati_ham}
\end{equation}
are the {\it contact Hamiltonian functions} associated with the vector fields $X_{i}$, for $i = 1, \ldots, 4$. In this particular instance, the contact Hamiltonian functions \eqref{eq:ex:Riccati_ham} are constants of motion of the Reeb vector field $\cR$. A contact Lie system with this property is said to be of {\it Liouville type} \cite{Lucas2023}.

In addition,  let us consider the canonical projection
$$
\pi_{\cR}: M \ni (x,y,z) \mapsto (x,y) \in M/\cR \simeq \R^{2}_{y\ne 0}:= \set{(x, y) \in \R^{2}: y  \neq 0} \subset \R^{2}.
$$
  Since 
   $\cL_{\cR} \dd \eta = 0$ and $\iota_{\cR}\dd\eta=0$, there exists a  unique symplectic form $\omega \in \Omega^{2}\big(  \R^{2}_{y\ne 0} \big)$ such that $\pi_{\cR}^{*}(\omega) = \dd \eta$, which turns out to be
  $$
  \omega=\frac{2}{y^3}\dd x\wedge \dd y.
  $$
  The contact Hamiltonian functions  (\ref{eq:ex:Riccati_ham}) satisfy the following non-zero commutation rules with respect to the Poisson bracket $\set{\cdot, \cdot}_{\omega}$ induced by $\omega$:
 \begin{equation*}
\{ h_1,h_2\}_\omega = - 4 h_3 - 2 h_4 ,\qquad \{ h_1,h_3\}_\omega =  2 h_1 , \qquad \{ h_2,h_3\}_\omega =- 2 h_2 ,
\end{equation*}
  to be compared with (\ref{eq:ex:Riccati_cr}).      Consequently, applying \cite[Proposition~3.9]{Lucas2023}, every contact Lie system of Liouville type on $(M, \eta)$ is projected onto a Lie--Hamilton (LH) system on $\big(\R^{2}_{y\ne 0}, \omega\big)$, that is, on a Lie system endowed with a symplectic structure~\cite{Lucas2020,Carinena2013}.


\section{The geometry of mixed superposition rules}\label{Section:mixedsuperpositions}
 
 Prior to analysing in detail the geometrical features of mixed superposition rules,  it is worthy to  consider a relevant hierarchy of Lie systems related to the (extended) Schr\"odinger algebra in 1+1 dimensions that faithfully illustrates the procedure and the subtleties of the geometric interpretation that will be studied afterwards.


\subsection{{ The notion of mixed superposition rules: Schr\"odinger Lie systems}}
 \label{subsection:rodinger}
 
Consider the first-order system of ODEs on the real plane $\R^{2}$ with global coordinates $(x,y)$ given by 
\begin{equation}
    \dv{x}{t} = a_{1}^{1}(t) x + a_{1}^{2}(t) y + b_{1}(t), \qquad \dv{y}{t} = a_{2}^{1}(t) x - a_{1}^{1}(t) y  + b_{2}(t),
    \label{eq:mixed:inhom}
 \end{equation}
 where $a_{1}^{1}(t), a_{1}^{2}(t), a_{2}^{1}(t), b_{1}(t), b_{2}(t) \in C^{\infty}(\R)$ are arbitrary $t$-dependent functions. The general solution, $(x(t), y(t))$, of this system can  be written as 
 \begin{equation}
   (x(t), y(t)) = (x_{\mathrm{p}}(t), y_{\mathrm{p}}(t)) + k (x_{\mathrm{h}}(t), y_{\mathrm{h}}(t)), 
     \label{eq:mixed:msup}
 \end{equation}
 where $(x_{\mathrm{h}}(t), y_{\mathrm{h}}(t))$  is a  particular solution of the associated homogeneous system
\begin{equation}
    \dv{x}{t} = a_{1}^{1}(t) x + a_{1}^{2}(t) y, \qquad \dv{y}{t} =  a_{2}^{1}(t) x- a_{1}^{1}(t) y,
    \label{eq:mixed:hom}
 \end{equation}
while $(x_{\mathrm{p}}(t), y_{\mathrm{p}}(t))$ is a particular solution of \eqref{eq:mixed:inhom}, and $k \in \R$ is a certain constant. 

System \eqref{eq:mixed:inhom} is associated with the $t$-dependent vector field $X$ on $\R^{2}$ given by 
\begin{equation*}
    X = a_{1}^{1}(t) \left( x \pdv{x} - y \pdv{y} \right) + a_{1}^{2}(t) y \pdv{x} + a_{2}^{1}(t) x \pdv{y} +  b_{1}(t) \pdv{x} + b_{2}(t) \pdv{y},
\end{equation*}
which takes values in the five-dimensional Lie algebra of vector fields $V$  spanned by
\begin{equation}
    X_{1} := x \pdv{x} - y \pdv{y}, \qquad X_{2} := y \pdv{x}, \qquad X_{3} := x \pdv{y}, \qquad X_{4} := \pdv{x}, \qquad X_{5} := \pdv{y}, 
    \label{eq:mixed:vf}
\end{equation}
with the following non-vanishing commutation relations:\
\begin{equation}
\begin{aligned}
    &[X_{1}, X_{2}] = - 2 X_{2}, \qquad && [X_{1}, X_{3}] = 2 X_{3}, \qquad && [X_{1}, X_{4}] = - X_{4},  \qquad && [X_{1}, X_{5}] = X_{5}, \\[2pt]
    & [X_{2}, X_{3}] = - X_{1}, \qquad &&[X_{2}, X_{5}] = - X_{4}, \qquad && [X_{3}, X_{4}] = - X_{5}.  
      \end{aligned}
    \label{eq:mixed:VG_cr}
\end{equation}
Accordingly, $V$ is isomorphic to the Schr{\"o}dinger algebra $\mathcal{S}(1) = \mathfrak{sl}(2, \R) \,\overrightarrow{\oplus}_{R_{1/2}} \, \R^{2}$, where $R_{1/2}$ denotes the fundamental representation of $\mathfrak{sl}(2, \R)$ as traceless matrices on $\R^{2}$. In a similar vein, the homogeneous system \eqref{eq:mixed:hom} corresponds to the $t$-dependent vector field 
\begin{equation*}
    Y := a_{1}^{1}(t) X_{1} + a_{1}^{2}(t) X_{2} + a_{2}^{1}(t) X_{3},
\end{equation*}
which takes values in the Lie algebra $V_{\mathrm{h}} := \langle X_{1}, X_{2}, X_{3} \rangle \simeq \mathfrak{sl}(2, \R)$, this being precisely the Levi factor of $V$. 

Then, $X$ and $Y$ are Lie systems, and $V$ and $V_{\mathrm{h}}$ are VG Lie algebras of them, respectively. A superposition rule for $X$ was obtained in \cite{Blasco2015} in terms of three particular solutions of $X$. It is worth emphasising that this superposition rule was derived through the coalgebra method, which is applicable to LH systems \cite{Ballesteros2013,Blasco2015,Lucas2020}. More concretely,  \eqref{eq:mixed:vf} are Hamiltonian vector fields relative to the canonical symplectic form on $\R^{2}$ defined by
\begin{equation*}
    \omega := \dd x \wedge \dd y.
\end{equation*}
Some associated Hamiltonian functions for $X_{1}, \ldots, X_{5}$ are given by
\begin{equation}
    h_{1} := xy, \qquad h_{2} := \frac{1}{2} y^{2}, \qquad h_{3} := - \frac{1}{2} x^{2}, \qquad h_{4} := y, \qquad h_{5} := -x, 
    \label{eq:mixed:Ham}
    \end{equation}
and satisfy $\iota_{X_{i}} \omega = \dd h_{i}$ for $i= 1, \ldots, 5$. These Hamiltonian functions span a LH algebra $\cH_{\omega}$ with respect to the Poisson bracket $\set{\cdot, \cdot}_{\omega}$ induced by $\omega$, with the non-zero commutation relations:
\begin{equation}
\begin{aligned}
    &\set{h_{1}, h_{2}}_{\omega} = 2 h_{2}, \qquad & & \set{h_{1}, h_{3}}_{\omega} = -2 h_{3}, \qquad && \set{h_{1}, h_{4}}_{\omega} = h_{4}, \qquad &&\set{h_{1}, h_{5}}_{\omega} = - h_{5},\\[2pt]
    &\set{h_{2}, h_{3}}_{\omega} = h_{1}, \qquad && \set{h_{2}, h_{5}}_{\omega} = h_{4}, \qquad && \set{h_{3}, h_{4}}_{\omega} = h_{5}, \qquad &&\set{h_{4}, h_{5}}_{\omega} = h_{0},
    \end{aligned}
    \label{eq:mixed:LH_cr}
\end{equation}
where the constant function $h_{0}:= 1$ is a central element. Consequently, $\mathcal{H}_{\omega}$ is isomorphic to the centrally extended Schr{\"o}dinger algebra 
$\widehat{\mathcal{S}}(1) = \mathfrak{sl}(2, \R) \,\overrightarrow{\oplus}_{R_{1/2} \oplus R_{0}}\, \mathfrak{h}_{3}$, where $R_{0}$ denotes the trivial multiplet, and    $\mathfrak{h}_{3}=\langle h_4,h_5,h_0 \rangle$ is the three-dimensional Heisenberg  algebra. Note also that the oscillator Lie algebra also appears as the Lie  subalgebra  $\mathfrak{h}_{4}=\langle h_1,h_4,h_5,h_0 \rangle$. 
  It is worth remarking that $\widehat{\mathcal{S}}(1)$ is isomorphic to the so-called  two-photon Lie algebra $\mathfrak{h}_{6}\supset \mathfrak{h}_{4}\supset \mathfrak{h}_{3}$,  first considered in \cite{Zhang1990} in the context of coherent states,   which  has subsequently been  applied in classical integrable systems (see \cite{BBH09} and references therein) and, very recently, in relation 
to intrinsic LH systems and their reduction to  lower dimensional systems by   invariants \cite{Campoamor2025sp4}.

Formula \eqref{eq:mixed:msup} has a nice geometrical interpretation in terms of the so-called {\it mixed superposition rules} introduced in \cite{Grabowski2013}. Following the procedure  set out therein, we seek two constants of motion $I_{1}, I_{2}$ of the $t$-dependent vector field on $(\R^{2})^{4}$ given by 
 \begin{equation}
  \begin{split}
    X_{E}:=& \sum_{i = 1}^{4} \left[  a_{1}^{1}(t)\left( x_{(i)} \pdv{x_{(i)}} - y_{(i)} \pdv{y_{(i)}} \right) + a_{1}^{2}(t) y_{(i)} \pdv{x_{(i)}} + a_{2}^{1}(t)  x_{(i)} \pdv{y_{(i)}} \right]\\
   & + \sum_{i = 1}^{2}\left[ b_{1}(t) \pdv{x_{(i)}} + b_{2}(t)\pdv{y_{(i)}}\right],
\end{split}
 \label{XE4}
\end{equation}
that satisfy the regularity condition
\begin{equation}
    \pdv{(I_{1}, I_{2})}{\big(x_{(1)}, y_{(1)} \big)} \neq 0.
    \label{eq:mixed:reg}
\end{equation}
We denote $X_{E} := X \times X \times Y \times Y$  for reasons to be seen soon.  Now, consider the Hamiltonian functions on $(\R^{2})^{3}$ given by 
\begin{equation}
\begin{aligned}
    &h_{1}^{[3]} = \sum_{i = 1}^{3} x_{(i)} y_{(i)}, \qquad &&h_{2}^{[3]} = \frac{1}2
 \sum_{i = 1}^{3} y_{(i)}^{2}  , \qquad &&h_{3}^{[3]} = -\frac{1}{2} \sum_{i = 1}^{3} x_{(i)}^{2}, \\
 &h_{4}^{[2]} = \sum_{i = 1}^{2} y_{(i)} ,\qquad  && h_{5}^{[2]} = - \sum_{i= 1}^{2}
 x_{(i)}, \qquad
 && h_{0}^{[2]} = 2.
\end{aligned}
\label{eq:mixed:Ham3}
 \end{equation}
 A direct computation shows that 
 \begin{equation}
  \begin{split}
     I_{1}&:= 2\left (h_{3}^{[3]} \big(h_{4}^{[2]} \big)^{2} - h_{2}^{[3]}\big(h_{5}^{[2]}\big)^{2} - h_{1}^{[3]}h_{4}^{[2]}h_{5}^{[2]} \right) - h_{0}^{[2]}\left( \big(h_{1}^{[3]}\big)^{2} + 4 h_{2}^{[3]}h_{3}^{[3]} \right)  \\
   & \ = \big(x_{(3)}\big(y_{(1)}-y_{(2)}\big) - y_{(3)}\big(x_{(1)}-x_{(2)}\big)\big)^{2}
\end{split}
   \label{eq:mixed:I1}
\end{equation}
 is a constant of motion of the $t$-dependent vector field on $(\R^{2})^{3}$ given by 
\begin{equation}
  \begin{split}
    X \times X \times Y&:= \sum_{i = 1}^{3} \left[  a_{1}^{1}(t)\left( x_{(i)} \pdv{x_{(i)}} - y_{(i)} \pdv{y_{(i)}} \right) + a_{1}^{2}(t) y_{(i)} \pdv{x_{(i)}} + a_{2}^{1}(t)  x_{(i)} \pdv{y_{(i)}} \right]\\
  &\quad\  + \sum_{i = 1}^{2}\left[ b_{1}(t) \pdv{x_{(i)}} + b_{2}(t)\pdv{y_{(i)}}\right],
\end{split}
\nonumber
\end{equation}
 and considering previous structures as defined on $(\mathbb{R}^2)^4$ in the natural manner, it follows that $I_1$ is also a constant of motion of $X_{E}$. By applying the permutation $S_{34}$ on $(\mathbb{R}^2)^4$, which interchanges the variables $ (x_{(3)}, y_{(3)} ) \leftrightarrow  (x_{(4)}, y_{(4)} )$, the $t$-dependent vector field $X_E$ remains invariant and a second constant of motion
 \begin{equation}
   I_{2}:=  S_{34} (I_1) =  \big(x_{(4)}\big(y_{(1)}-y_{(2)}\big) - y_{(4)}\big(x_{(1)}-x_{(2)}\big)\big)^{2},
     \label{eq:mixed:I2}
 \end{equation}
 arises for $X_E$. 
 Since the regularity condition \eqref{eq:mixed:reg} is satisfied for $I_{1}, I_{2}$,  the system $I_{1} = k_{1}^{2}, I_{2} = k_{2}^{2}$, solved in terms of $(x_{(1)}, y_{(1)})$, gives rise to the general solution of $X$ in the form
 \begin{equation*}
 x_{(1)}(t) = x_{(2)}(t) + \frac{1}{W}\big(k_{2} x_{(3)}(t) - k_{1} x_{(4)}(t) \big), \qquad y_{(1)}(t) = y_{(2)}(t)+ \frac{1}{W}\big(k_{2} y_{(3)}(t) - k_{1} y_{(4)}(t)\big), 
     \label{eq:mixed:superposition1}
 \end{equation*}
where $(x_{(2)}(t), y_{(2)}(t))$ is a particular solution of $X$, and $W:= x_{(4)}(t) y_{(3)}(t) - x_{(3)}(t)y_{(4)}(t) \neq 0$ is the Wronskian of two independent particular solutions $(x_{(3)}(t), y_{(3)}(t)), (x_{(4)}(t), y_{(4)}(t))$ of $Y$.  Since  $W$ is a constant  for every pair of particular solutions to $Y$, we can introduce the rescaling $k_i/W\to k_i$ and it follows that the general solution to $X$ can be written in terms of a map $\Phi:(\mathbb{R}^2)^4\times \mathbb{R}^2\rightarrow \mathbb{R}^2$ of the form 
\begin{equation}
 x_{(1)}  = x_{(2)}  +  \big(k_{2} x_{(3)}  - k_{1} x_{(4)} \big), \qquad y_{ (1)} = y_{(2)} +  \big(k_{2} y_{(3)}  - k_{1} y_{(4)} \big), \label{eq:mixed:superposition}
 \end{equation}  for 
 $\big(x_{(2)},y_{(2)},x_{(3)},y_{(3)},x_{(4)},y_{(4)},k_1,k_2\big)\in (\mathbb{R}^2)^3\times \mathbb{R}^2$ as
 $$
 \big(x_{(1)}(t),y_{(1)}(t)\big)=\Phi\big(x_{(2)}(t),y_{(2)}(t),x_{(3)}(t),y_{(3)}(t),x_{(4)}(t),y_{(4)}(t),k_1,k_2\big),\qquad (k_1,k_2)\in \mathbb{R}^2.
 $$

 \begin{remark} 
    The $t$-dependent vector field $X_E$ (\ref{XE4})  admits a direct physical interpretation through its  corresponding LH system $h_E$ on $(\R^{2})^{4}$. Explicitly, let us consider the Hamiltonian functions   $\big\{ h_{1}^{[4]},h_{2}^{[4]},h_{3}^{[4]} ,h_{4}^{[2]},h_{5}^{[2]}\bigr\}$  similarly to (\ref{eq:mixed:Ham3}), identify the variables $\big(x_{(i)} ,y_{(i)}\big)$ ($i=1,2,3,4$) with the usual canonical position and conjugate momentum $(q_i,p_i)\in \T^{*} \R^{4}$,  and denote $a_1^2(t)=1/m(t)$,  $a_2^1(t)=-m(t) \Omega^2(t)$ and $b_2(t)=-f(t)$. Hence, we obtain  the $t$-dependent  Hamiltonian $h_E$   expressed as
 \begin{equation}
 h_E=\frac 1{2m(t)} \sum_{i=1}^4 p_i^2+ \frac 12 {m(t) \Omega^2(t)}   \sum_{i=1}^4 q_i^2+   a_1^1(t)  \sum_{i=1}^4 q_ip_i +b_1(t)  (p_1+p_2  ) + f(t)  (q_1+q_2  ),
 \nonumber
    \end{equation} 
which describes a four-dimensional isotropic oscillator with time-dependent mass $m(t)$ and frequency $\Omega(t)$ along  with   two momenta- or velocity-dependent potentials, determined by the coefficients $a_1^1(t)$ and $b_1(t)$, and a linear position term with a $t$-dependent force $f(t)$. Clearly, $h_E$ Poisson-commutes with the two $t$-independent constants of the motion $I_1$ (\ref{eq:mixed:I1}) and $I_2$  (\ref{eq:mixed:I2}), now reading as
$$
 I_1= \big(q_{3} (p_{1}-p_{2} ) - p_{3} (q_{1}-q_{2} )\big)^{2} ,\qquad   I_2= \big(q_{4} (p_{1}-p_{2} ) - p_{4} (q_{1}-q_{2} )\big)^{2}  ,
$$
with respect to the canonical symplectic form $\omega^{[4]}=\sum_{i=1}^4\dd q_i\wedge \dd p_i$.
  \end{remark}

Let us now show how the constant of motion \eqref{eq:mixed:I1} can be obtained via an algebraic approach that possesses a natural geometric counterpart. Consider the Lie algebra $\widehat{\mathcal{S}}(1)$ over a basis $\set{v_{0}, \ldots, v_{5}}$ satisfying the same commutation relations as $\set{h_{0}, \ldots, h_{5}}$. Define the morphism of Lie algebras $\phi: \widehat{\mathcal{S}}(1) \to \mathcal{H}_{\omega}$ determined by $\phi(v_{i}):= h_{i}$ for $i = 0, \ldots, 5$. This morphism gives rise to a momentum map $\J_{\phi}: \mathbb{R}^{2} \to \widehat{\mathcal{S}}(1)^{*}$, defined by  
$$
\J_{\phi}(x, y)(v_{i}) := h_{i}(x, y), \qquad (x, y) \in \R^{2}, \qquad i = 0, \ldots, 5. 
$$
Hence, its pull-back 
$$\J_{\phi}^{*}: C^{\infty}\big(\widehat{\mathcal{S}} (1)^{*} \big) \ni f  \mapsto f \circ \J_{\phi} \in C^{\infty}(\R^{2})$$
is a morphism of Poisson algebras, where $C^\infty\big(\widehat{\mathcal{S}}(1)^{*} \big)$ is endowed with the Kirillov--Konstant--Souriau (KKS) Poisson bracket. Consequently, $\widehat{\mathcal{S}}(1)^{*}$ becomes a Poisson manifold. As usual, consider $(v_{0}, \ldots, v_{5})$ as global linear coordinates on $\widehat{\mathcal{S}}(1)^{*}$. Clearly, 
$$
m: \widehat{\mathcal{S}}(1)^{*} \times \widehat{\mathcal{S}}(1)^{*} \ni (\xi_{1}, \xi_{2}) \mapsto \xi_{1} + \xi_{2} \in \widehat{\mathcal{S}}(1)^{*}
$$
is a Poisson map, where $\widehat{\mathcal{S}}(1)^{*}\times\widehat{\mathcal{S}}(1)^{*}$ is endowed with the product Poisson manifold structure induced by the KKS Poisson structure of $\widehat{\mathcal{S}}(1)^{*}$. Since $\langle v_{0}, v_{4}, v_{5} \rangle \simeq \h_{1}$ is an ideal of $\widehat{\mathcal{S}}(1)$, then $\set{v_{0} = v_{4} = v_{5} = 0}$ is a Poisson submanifold of  $\widehat{\mathcal{S}}(1)^{*}$, canonically  diffeomorphic to $\mathfrak{sl}(2, \R) ^{*}$ with the Poisson bivector induced by its KKS structure.  In fact, note that the Poisson bracket on $C^\infty\big(\widehat{\mathcal{S}}(1)^{*} \times \widehat{\mathcal{S}}(1)^{*} \big)$ can be restricted on functions of the form $f(0,v_1,v_2,v_3,0,0)$ in the coordinates $(v_0,\ldots,v_5)$. 
This yields a Poisson map  
$$
\overline{m}: \widehat{\mathcal{S}}(1)^{*} \times \mathfrak{sl}(2, \R)^{*} \ni (\xi_{1}, \xi_{2}) \mapsto \xi_{1} + \xi_{2} \in \widehat{\mathcal{S}}(1)^{*},
$$
and, consequently, 
$$
\overline{m} \circ (m \times \mathrm{id}): \widehat{\mathcal{S}}(1)^{*} \times \widehat{\mathcal{S}}(1)^{*} \times \mathfrak{sl}(2, \R)^{*} \to \widehat{\mathcal{S}}(1)^{*}
$$
is also a Poisson map.
Since $\mathfrak{sl}(2, \R) \simeq \langle v_{1}, v_{2}, v_{3} \rangle$ is a Lie subalgebra of $\widehat{\mathcal{S}}(1)$, the morphism  $\phi: \widehat{\mathcal{S}}(1) \to \cH_{\omega}$ restricts to a morphism of Lie algebras $\phi \vert_{\mathfrak{sl}(2, \R)} :\mathfrak{sl}(2, \R)\to \cH_{w}$, with an associated momentum map  $\J_{\phi \vert_{\mathfrak{sl}(2, \R)}}: \R^{2} \to \mathfrak{sl}(2, \R)^{*}$.

The Hamiltonian functions \eqref{eq:mixed:Ham3} read as 
\begin{equation}
\begin{aligned}
 h_{i}^{[3]} &= \left(\J_{\phi} \times \J_{\phi} \times \J_{\phi \vert_{\mathfrak{sl} (2, \R)}}\right)^{*}\big[(\overline{m} \circ (m \times \mathrm{id}))^{*}(v_{i}) \big],  \\
 h_{j}^{[2]} &= \left(\J_{\phi} \times \J_{\phi} \times \J_{\phi \vert_{\mathfrak{sl} (2, \R)}}\right)^{*}\big[(\overline{m} \circ (m \times \mathrm{id}))^{*}(v_{j}) \big], 
\end{aligned}
\nonumber
\end{equation}
where $ i = 1, 2, 3$ and $ j = 0, 4, 5$, while a  non-zero Casimir function on $\widehat{\mathcal{S}}(1)^{*}$ is given by \cite{Campoamor2005,Campoamor2020}
$$
C := 2\big(v_{3}v_{4}^{2} - v_{2}v_{5}^{2} - v_{1}v_{4}v_{5}\big) - v_{0}\big(v_{1}^{2} + 4 v_{2}v_{3}\big).
$$
Then, the constant of motion \eqref{eq:mixed:I1} is
$$
I_{1} = \left(\J_{\phi} \times \J_{\phi} \times \J_{\phi \vert_{\mathfrak{sl} (2, \R)}}\right)^{*}\big[(\overline{m} \circ (m \times \mathrm{id}))^{*}(C)\big].
$$

In consequence, we see that such a constant of motion can be described geometrically in terms of a Casimir, momentum maps and natural Poisson morphisms. 

Formulas \eqref{eq:mixed:msup}   and \eqref{eq:mixed:superposition}  represent a particular instance of the following definition, which is an immediate generalisation of that proposed in \cite{Grabowski2013} to more general manifolds than $\mathbb{R}^n$.

\begin{defi} \label{def:mixed}
    A {\it mixed superposition rule} for a system $X$ on $M$, in terms of some systems $X_{(1)}, \ldots, X_{(k)}$ on $M_{1}, \ldots, M_{k}$, is a map $\Phi: M_{1} \times \cdots \times M_{k} \times M \to M$ such that the general solution $x(t)$ of $X$ can be cast in the form 
    $$
    x(t) = \Phi\big(x_{(1)}(t), \ldots, x_{(k)}(t); p\big),
    $$
    where $x_{(i)}(t)$ is a particular solution of $X_{(i)}$, for $i = 1, \ldots, k$, and $p \in M$ is a point related to the initial conditions of $X$. 
\end{defi}
Standard superposition rules are particular cases of mixed superposition rules. Moreover, one of the main results of \cite[Theorem 13]{Grabowski2013} can be naturally generalised within our framework.
\begin{thm}[Extended Lie--Scheffers Theorem] \label{thm:extendedLS} A system $X$ on $M$ admits a mixed superposition rule if and only if is a Lie system.
\end{thm}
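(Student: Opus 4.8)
The plan is to prove the two implications separately; the converse is an immediate consequence of the classical theory, while the direct implication carries all the substance. For ``Lie system $\Rightarrow$ mixed superposition rule'', I would invoke the Lie--Scheffers theorem recalled above: a Lie system $X$ admits an ordinary superposition rule $\Phi\colon M^{k}\times M\to M$. Since, as noted right before the statement, an ordinary superposition rule is exactly the mixed superposition rule obtained by setting $M_{1}=\dots=M_{k}=M$ and $X_{(1)}=\dots=X_{(k)}=X$, the system $X$ admits a mixed superposition rule with no extra argument.

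For the converse, assume $X$ admits a mixed superposition rule $\Phi\colon M_{1}\times\dots\times M_{k}\times M\to M$ relative to systems $X_{(1)},\dots,X_{(k)}$. I would set $N:=M_{1}\times\dots\times M_{k}$, write $\overline{X}:=X_{(1)}\times\dots\times X_{(k)}$ for the product $t$-dependent vector field on $N$, and $\widetilde{X}_{t}:=\overline{X}_{t}\oplus X_{t}$ for the product system on $N\times M$. Differentiating the defining relation $x(t)=\Phi\big(x_{(1)}(t),\dots,x_{(k)}(t);p\big)$ along solutions, and using that the auxiliary curves sweep out $N$, gives the pointwise identity
$$
X_{t}\big(\Phi(\overline{n};p)\big)=(\partial_{\overline{n}}\Phi)(\overline{n};p)\,\overline{X}_{t}(\overline{n}),\qquad \overline{n}\in N,\ p\in M.
$$
This says precisely that the graph diffeomorphism $\Psi(\overline{n},p):=\big(\overline{n},\Phi(\overline{n};p)\big)$ conjugates $\overline{X}_{t}\oplus 0$ into $\widetilde{X}_{t}$, whence $\mathrm{Lie}(\{\widetilde{X}_{t}\})\cong\mathrm{Lie}(\{\overline{X}_{t}\})$. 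Dually, the requirement that $\Phi$ recover the \emph{general} solution of $X$ as $p$ ranges over $M$ forces $p\mapsto\Phi(\,\cdot\,;p)$ to be a local diffeomorphism, so that the implicit inverse $F\colon N\times M\to M$, $F(\overline{n},x)=p$, furnishes $\dim M$ functionally independent common first integrals of all the $\widetilde{X}_{t}$ with invertible Jacobian $\partial_{x}F$ in the $M$-directions.

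Finiteness of $V^{X}$ is then what I would extract. The projection $N\times M\to M$ relates $\widetilde{X}_{t}$ to $X_{t}$, so $V^{X}=\mathrm{Lie}(\{X_{t}\})$ is a quotient of $\mathrm{Lie}(\{\widetilde{X}_{t}\})$, while the invertibility of $\partial_{x}F$ makes the complementary projection onto $N$ injective on that Lie algebra. To convert this into $\dim V^{X}<\infty$, which is equivalent to $X$ being a Lie system, I would invoke the distributional injectivity criterion of Theorem~\ref{th:mixed_injective}: the invariant foliation cut out by $F$, together with the transversality encoded in $\partial_{x}F$ being invertible, ensures that only a finite-dimensional part of the auxiliary dynamics is transmitted to $M$, so that $V^{X}$ embeds into a finite-dimensional Lie algebra of vector fields.

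The hard part is exactly this last step. A priori the auxiliary systems $X_{(i)}$ need not be Lie systems, so $\mathrm{Lie}(\{\overline{X}_{t}\})$ may be infinite-dimensional and the crude bound $\dim V^{X}\le\dim\mathrm{Lie}(\{\overline{X}_{t}\})$ is vacuous; moreover, the identity above only shows that the $X_{t}$ lie in a fixed finite-dimensional \emph{vector space} of vector fields, which does not by itself force their Lie closure to be finite-dimensional. Overcoming this requires using the full regularity condition to show that the leaves of $F$ project isomorphically onto $M$, so that the $\R$-span of the $X_{t}$ is in fact closed under the Lie bracket; this is precisely where the distributional criterion, rather than a naive dimension count, becomes indispensable.
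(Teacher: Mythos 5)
Your first direction is exactly the paper's: a Lie system admits an ordinary superposition rule by the classical Lie--Scheffers theorem, and ordinary superposition rules are mixed ones, so nothing more is needed there. The converse, however, is where all the content lies, and your argument does not close it --- as you concede yourself. The paper handles this direction by deferring to the foliation argument of \cite[Theorem 13]{Grabowski2013}, transported to general manifolds via the direct-product formalism of Subsection \ref{subsection:direct_products}, and the step you leave open is precisely the content of that argument. Concretely: the $n=\dim M$ first integrals provided by your map $F$ cut out a foliation of $M\times M_{1}\times\cdots\times M_{k}$ whose leaves have dimension $\dim(M_{1}\times\cdots\times M_{k})$ and project diffeomorphically onto $M_{1}\times\cdots\times M_{k}$ --- not onto $M$, as you write. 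Each leaf tangent space is therefore the graph of a linear map $A_{(x,\overline{n})}\colon \T_{\overline{n}}(M_{1}\times\cdots\times M_{k})\to\T_{x}M$. The generators $X_{t}\times\overline{X}_{t}$ are tangent to the leaves, and --- this is the idea you are missing --- since the foliation is integrable, the leaf-tangent vector fields form a Lie subalgebra of $\mathfrak{X}(M\times M_{1}\times\cdots\times M_{k})$; moreover, by \eqref{eq:bracket_direct} every element of $V^{X_{E}}$ is again a direct product $Y\times Z$, and $Y\times Z\mapsto Y$ is a Lie algebra morphism onto $V^{X}$. Hence \emph{every} $Y\in V^{X}$, not merely the $X_{t}$, satisfies $Y(x)=A_{(x,\overline{n}_{0})}\big(Z(\overline{n}_{0})\big)$ for a fixed generic $\overline{n}_{0}$, so $V^{X}$ lies in the image of the linear map $v\mapsto A_{(\cdot,\overline{n}_{0})}(v)$ and $\dim V^{X}\le\dim(M_{1}\times\cdots\times M_{k})<\infty$. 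No bound on $\mathrm{Lie}(\{\overline{X}_{t}\})$ is needed, and one does not show that the $\R$-span of the $X_{t}$ is bracket-closed; one shows that the bracket-closure stays inside a fixed finite-dimensional space of vector fields because tangency to an integrable foliation survives taking Lie brackets.

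Your proposed escape route --- invoking Theorem \ref{th:mixed_injective} --- does not repair this: that theorem presupposes the auxiliary systems and characterises when a mixed superposition rule in terms of them exists via the injectivity of \eqref{eq:Injectivity}; its proof rests on the very same foliation and graph construction, so appealing to it here is circular rather than a shortcut. As it stands, your converse establishes only that the $X_{t}$ span a finite-dimensional vector space (and even that is asserted rather than derived, since your displayed identity must first be shown to persist for all elements of the generated Lie algebra), which, as you correctly observe, is weaker than $X$ being a Lie system.
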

Its proof follows analogously to that of \cite[Theorem 13]{Grabowski2013},  once the notion of {\it direct products} is appropriately defined in the general setting.


\subsection{On direct products} \label{subsection:direct_products}
Let us formalise the notion of direct products of vector bundles and their corresponding sections, which shall serve as a key ingredient in the subsequent analysis of mixed superposition rules (see \cite[Chapter 3]{Husemoller1994} for further details on vector bundles).

Let $\tau_{i}: E_{i} \to M_{i}$ be vector bundles for $i = 1, \ldots, k$. Their {\it direct product}, also referred to as their {\it external Whitney sum} \cite[p. 21]{Karoubi1978},  is the vector bundle over $M_{1} \times \cdots \times M_{k}$ given by the projection  
$$
\tau_{1} \times \cdots \times \tau_{k}: E_{1} \times \cdots \times E_{k} \to M_{1} \times \cdots \times M_{k},
$$
 endowed with the usual fibre-wise vector space structure
 $$(E_{1} \times \cdots \times E_{k})_{(x_{(1)}, \ldots, x_{(k)})} = (E_{1})_{x_{(1)}} \times \cdots \times (E_{k})_{x_{(k)}}. $$ 
It is worth emphasising that the direct product $E_{1} \times \cdots \times E_{k}$ may equivalently be defined as a Whitney sum of vector bundles over $M_{1} \times \cdots \times M_{k}$ as follows. Let $\pr_{i}: M_{1} \times \cdots \times M_{k} \to M_{i}$ denote the canonical projection onto the $i^{\mathrm{th}}$-factor for $i = 1, \ldots, k$. The pull-back of the bundle $\tau_i: E_i \to M_i$ along $\pr_{i}$ yields a vector bundle $\pr_{i}^{*}E_{i}$ over $M_1 \times \cdots \times M_k$. Consequently, the direct product can be identified with the Whitney sum
\begin{equation}
E_1 \times \cdots \times E_k = \pr_1^* E_1 \oplus \cdots \oplus \pr_k^* E_k.
    \label{eq:dir_prod}
\end{equation}
Let $s_{(i)}: M_{i} \to E_{i}$ be sections of $E_{i}$ for $i = 1, \ldots, k$. Then, 
$$s_{(1)} \times \cdots \times s_{(k)}: M_{1} \times \cdots \times M_{k} \to E_{1} \times \cdots \times E_{k}$$
is a section of the direct product $E_{1} \times \cdots \times E_{k}$, referred to as the {\it direct product of $s_{(1)}, \ldots, s_{(k)}$}. As a consequence of  \eqref{eq:dir_prod}, this section can equivalently be expressed as
\[
s_{(1)} \times \cdots \times s_{(k)} = \pr_1^* s_{(1)} + \cdots + \pr_k^* s_{(k)},
\]
where each individual section $s_{(i)}$ can naturally be regarded as a section $\widetilde{s}_{(i)}$ of  $E_{1} \times \cdots \times E_{k}$ by taking its direct product with the zero sections of the remaining bundles. That is,
\begin{equation}
\widetilde{s}_{(i)} := 0 \times \cdots \times s_{(i)} \times \cdots \times 0 = \pr_i^* s_{(i)}.
    \label{eq:lift}
\end{equation}
It is readily seen that if $\big\{s_{(i)}^{j}: j = 1, \ldots, \mathrm{rk}\, E_{i} \big\}$ constitutes a basis of local sections of $E_{i}$, then the family  $\bigcup_{i = 1}^{k}\! \big\{\widetilde{s}_{(i)}^{\,j}: j = 1, \ldots,  \mathrm{rk}\, E_{i} \big\}$ is a local basis of $E_{1} \times \cdots \times E_{k}$.
Given functions $f_{(i)} \in C^{\infty}(M_{i})$, for $i = 1, \ldots, k$, their {\it direct product} is the function 
\begin{equation}
 \lambda\big(f_{(1)}, \ldots, f_{(k)}\big):= \pr_{1}^{*}f_{(1)} + \cdots + \pr_{k}^{*}f_{(k)} \in C^{\infty}(M_{1} \times \cdots \times M_{k}).
     \nonumber
\end{equation}
Note that, when each $E_i = \mathrm{T} M_i$ or $E_i = \mathrm{T}^* M_i$, the canonical identification \eqref{eq:dir_prod} yields
\begin{equation}
\T M_{1} \times \cdots \times \T M_{k} = \T (M_{1} \times \cdots \times M_{k}), \qquad \T^{*} M_{1} \times \cdots \times\T^{*} M_{k} = \T^{*}(M_{1} \times \cdots \times M_{k}).
    \label{eq:can_iden}
\end{equation}
Accordingly, if $X_{(i)}$ are vector fields on $M_i$, their direct product defines a vector field on $M_1 \times \cdots \times M_k$ via
$$\big[X_{(1)} \times \cdots \times X_{(k)} \big]\big(x_{(1)}, \ldots, x_{(k)} \big) = X_{(1)}\big(x_{(1)} \big) + \cdots + X_{(k)}\big(x_{(k)}\big).$$
Similarly, for differential forms $\eta_{(i)}$ on $M_i$, their direct product yields the differential form on $M_1 \times \cdots \times M_k$ given by
$$\big[\eta_{(1)} \times \cdots \times \eta_{(k)} \big]\big(x_{(1)}, \ldots, x_{(k)}\big) = \eta_{(1)}\big(x_{(1)}\big) + \cdots + \eta_{(k)}\big(x_{(k)}\big).$$
Following \eqref{eq:lift},  $\widetilde{X}_{(i)}$ and $\widetilde{\eta}_{(i)}$ denote the vector field and the differential form on $M_{1} \times \cdots \times M_{k}$ corresponding to $X_{(i)}$ and $\eta_{(i)}$, respectively. 
 
Let $V_{i}$ be a vector space of vector fields on $M_{i}$, for $i = 1, \ldots, k$. The {\it direct product} of $V_{1}, \ldots, V_{k}$ is the vector space of vector fields on $M_{1} \times \cdots \times M_{k}$ spanned by the direct products of the elements of $V_{1}, \ldots, V_{k}$. The case of special interest for us is when each $V_{i}$ is a finite-dimensional Lie algebra of vector fields. We denote by $\widetilde{V}_{i}$ the Lie algebra on $M_{1} \times \cdots \times M_{k}$ generated by the vector fields $\widetilde{X}_{(i)}$, where $X_{(i)} \in V_{i}$. Clearly, $\widetilde{V}_{i}$ is isomorphic to $V_{i}$ as a real Lie algebra, and $\widetilde{V}_{i} \cap \widetilde{V}_{j} = 0$ for all $i \neq j$.  If $X_{(i)}, Y_{(i)}$ are vector fields on $M_{i}$, the Lie bracket of the direct products $X_{(1)}\times \cdots \times X_{(k)}$ and $Y_{(1)} \times \cdots \times Y_{(k)}$ is given by 
\begin{equation}
\big[X_{(1)} \times \cdots \times X_{(k)}, Y_{(1)} \times \cdots \times Y_{(k)}\big] = \big[X_{(1)}, Y_{(1)}\big] \times \cdots \times \big[X_{(k)}, Y_{(k)}\big]. 
    \label{eq:bracket_direct}
\end{equation}

As a straightforward consequence, we have the following result.

\begin{prop}
Let $V_{i}$ be a Lie algebra of vector fields on $M_{i}$ for $ i = 1, \ldots, k$. Then,
   \begin{enumerate}[label=(\arabic*), font=\normalfont]
       \item If $Z_{(i)}$ is a Lie symmetry of $V_{i}$, for $ i = 1, \ldots, k$, their  direct product $Z_{(1)} \times \cdots \times Z_{(k)}$ is a Lie symmetry of $V_{1} \times \cdots \times V_{k}$; and
    \item  If each $V_i$ is a finite-dimensional Lie algebra, then so is their direct product $V_1 \times \cdots \times V_k$.
\end{enumerate}
\end{prop}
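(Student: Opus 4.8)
The plan is to treat the two claims separately, in both cases reducing everything to the fibre-wise bracket formula \eqref{eq:bracket_direct} and to the identification of a direct product of vector fields with a sum of lifts recorded in \eqref{eq:lift}. Throughout I keep in mind that, by definition, a Lie symmetry $Z$ of $V$ is any vector field with $\cL_{Y} Z = [Y, Z] = 0$ for all $Y \in V$ (with $Z$ not necessarily lying in $V$), and that $V_{1} \times \cdots \times V_{k}$ is, as an $\R$-vector space, the span of the direct products $X_{(1)} \times \cdots \times X_{(k)}$ with $X_{(i)} \in V_{i}$.

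For (1), first I would observe that since $\cL_{W} Z = [W, Z]$ is $\R$-linear in $W$, it suffices to verify $[W, Z_{(1)} \times \cdots \times Z_{(k)}] = 0$ only for $W$ ranging over the spanning direct products $X_{(1)} \times \cdots \times X_{(k)}$, $X_{(i)} \in V_{i}$. Applying \eqref{eq:bracket_direct} with the second argument $Z_{(1)} \times \cdots \times Z_{(k)}$ gives
\[
\big[X_{(1)} \times \cdots \times X_{(k)}, Z_{(1)} \times \cdots \times Z_{(k)}\big] = [X_{(1)}, Z_{(1)}] \times \cdots \times [X_{(k)}, Z_{(k)}].
\]
Each factor vanishes: since $X_{(i)} \in V_{i}$ and $Z_{(i)}$ is a Lie symmetry of $V_{i}$, one has $[X_{(i)}, Z_{(i)}] = \cL_{X_{(i)}} Z_{(i)} = 0$. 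Hence the direct product on the right is the zero vector field, and $Z_{(1)} \times \cdots \times Z_{(k)}$ commutes with every element of $V_{1} \times \cdots \times V_{k}$, as required.

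For (2), the key identity is $X_{(1)} \times \cdots \times X_{(k)} = \widetilde{X}_{(1)} + \cdots + \widetilde{X}_{(k)}$, a restatement of \eqref{eq:lift}. It shows at once that every spanning direct product lies in $\widetilde{V}_{1} + \cdots + \widetilde{V}_{k}$; conversely, setting all but the $i$-th factor equal to zero recovers an arbitrary $\widetilde{X}_{(i)} \in \widetilde{V}_{i}$, so the span is exactly $\widetilde{V}_{1} + \cdots + \widetilde{V}_{k}$. Thus $V_{1} \times \cdots \times V_{k} = \widetilde{V}_{1} + \cdots + \widetilde{V}_{k}$ as vector spaces. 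Because each $\widetilde{V}_{i} \simeq V_{i}$ is finite-dimensional, this finite sum of finite-dimensional subspaces is finite-dimensional. I would finish by noting that the sum is in fact direct: if $\widetilde{X}_{(1)} + \cdots + \widetilde{X}_{(k)} = 0$, then, since $\widetilde{X}_{(i)} = \pr_{i}^{*} X_{(i)}$ differentiates only in the $M_{i}$-directions, each summand must vanish separately, whence $\dim\big(V_{1} \times \cdots \times V_{k}\big) = \sum_{i=1}^{k} \dim V_{i}$.

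Neither part presents a genuine obstacle, the proposition being a direct reading of the constructions recalled above. The only point demanding a little care is the bookkeeping behind the identification $V_{1} \times \cdots \times V_{k} = \widetilde{V}_{1} + \cdots + \widetilde{V}_{k}$ and the directness of this sum, which rests on the observation that lifted vector fields supported on distinct factors are determined by disjoint sets of coordinate directions. Everything else is a mechanical application of \eqref{eq:bracket_direct} together with the $\R$-linearity of the Lie bracket.
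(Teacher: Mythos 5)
Your proof is correct and follows essentially the same route as the paper's: part (1) is the identity \eqref{eq:bracket_direct} applied to spanning direct products, and part (2) embeds $V_{1}\times\cdots\times V_{k}$ into the finite-dimensional space $\widetilde{V}_{1}\oplus\cdots\oplus\widetilde{V}_{k}$. Your only refinement is to observe, via \eqref{eq:lift}, that the direct product actually \emph{equals} $\widetilde{V}_{1}\oplus\cdots\oplus\widetilde{V}_{k}$ (so its dimension is $\sum_{i}\dim V_{i}$), where the paper contents itself with the weaker statement that it is a Lie subalgebra thereof; both yield the claim at once.
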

\begin{proof}
Assertion (1) is an immediate consequence of identity~\eqref{eq:bracket_direct}. As for (2), it suffices to observe that \eqref{eq:bracket_direct} also shows that $V_1 \times \cdots \times V_k$ is  a Lie subalgebra of $\widetilde{V}_1 \oplus \cdots \oplus \widetilde{V}_k$, from which the claim follows.
\end{proof}

\begin{remark}
Given a vector bundle $E \to M$, the direct product $E \times \overset{k}{\cdots} \times E$ is commonly referred to as the {\it diagonal prolongation} $E^{[k]}$ of $E$  \cite{Carinena2011,Carinena2014}. In this context, the diagonal prolongation to $M^{k}$ of a vector field $X$ and  a differential form $\eta$ are given by the products  
$$
X^{[k]}:= X \times \overset{k}{\cdots} \times X, \qquad \eta^{[k]}:= \eta \times \overset{k}{\cdots} \times \eta,
$$
respectively. In contrast, the diagonal prolongation to $M^{k}$ of a vector space of vector fields $V$ on $M$ is defined as the vector space $V^{[k]}$ spanned by the diagonal prolongations of the elements of $V$. In the case that $V$ is a Lie algebra of vector fields, this construction yields a Lie algebra isomorphic to $V$. It thus follows that, in general, the diagonal prolongation $V^{[k]}$ does not coincide with the direct product $V \times \overset{k}{\cdots} \times V$. These constructions have been extensively used in Lie systems theory (see \cite{Carinena2011, Carinena2014, Lucas2023} and references therein).
\end{remark}

The following definition is a natural generalisation of \cite[Definition 7]{Grabowski2013} to the setting of manifolds that are not necessarily diffeomorphic to $\mathbb{R}^n$.

\begin{defi} \label{defi:direc_prod}
    Let $X_{(1)}, \ldots, X_{(k)}$ be $t$-dependent vector fields on $M_{1}, \ldots, M_{k}$. Their {\it direct product} is the $t$-dependent vector field $X_{(1)} \times \cdots \times X_{(k)}$ on $M_{1} \times \cdots \times M_{k}$ such that 
    $$
    \big(X_{(1)} \times \cdots \times X_{(k)}\big)_{t} := \big(X_{(1)}\big)_{t} \times \cdots \times \big(X_{(k)}\big)_{t}
    $$
    for all $t \in \R$. 
\end{defi}
Equivalently, the direct product $X_{(1)} \times \cdots \times X_{(k)}$ may be characterised as the unique $t$-dependent vector field on $M_{1} \times \cdots \times M_{k}$ satisfying
\[
(\pr_{i})_{*}\big[\big(X_{(1)} \times \cdots \times X_{(k)}\big)_{t\,} \big] = \big(X_{(i)}\big)_{t},
\]
for every $t \in \mathbb{R}$ and for each $i = 1, \ldots, k$.

Before reformulating one of the principal results of \cite{Grabowski2013}, we recall the following notation from therein.  Given manifolds $M_{1}, \ldots, M_{s}$, let us denote by $\pr_{\hat{i}}$ the natural projection
$$
\pr_{\hat{i}}: M_{1} \times \cdots \times  M_{s} \to M_{1} \times \cdots \times \widehat{M}_{i} \times \cdots \times M_{s}
$$
where $\widehat{M}_{i}$ indicates that this manifold is not included in the Cartesian product.  
\begin{thm} \label{th:mixed_injective}
    Let $X_{(i)}$ be systems on $M_{i}$, for $i = 0, \ldots, k$, and let  $$X_{E}:= X_{(0)} \times \cdots \times X_{(k)}$$ denote their direct product. Consider $\cD^{X_{E}}$ the generalised distribution generated by the Lie algebra $V^{X_{E}}$. Then, $X_{(i)}$ admits a mixed superposition rule in terms of $X_{(0)}, \ldots, X_{(i-1)}, X_{(i+1)}, \ldots, X_{(k)}$ if and only if  
   \begin{equation}
   \label{eq:Injectivity}\T \,\pr_{\hat{i}} \vert_{\cD^{V^{X_{E}}}}: \cD^{V^{X_{E}}} \to \T \big(M_{0} \times \cdots \times \widehat{M}_{i} \times \cdots M_{k}\big)
   \end{equation} 
   is an injective map on an open and dense subset of $M_{0} \times \cdots \times M_{k}$. 
\end{thm}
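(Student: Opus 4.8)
The plan is to translate the existence of a mixed superposition rule into the existence of a suitable family of common first integrals of $V^{X_{E}}$, and then to read off the stated injectivity as a transversality condition. Write $M_{E} := M_{0} \times \cdots \times M_{k}$ and $N := M_{0} \times \cdots \times \widehat{M}_{i} \times \cdots \times M_{k}$, so that $\pr_{\hat{i}} : M_{E} \to N$. Since each $X_{(j)}$ is a Lie system, the direct product $X_{E}$ is again a Lie system and $V^{X_{E}}$ is a finite-dimensional Lie algebra of vector fields on $M_{E}$; hence $\cD^{V^{X_{E}}}$ is integrable in the Stefan--Sussmann sense and restricts to a regular, integrable distribution of constant rank $r$ on a connected component of an open dense subset $U \subseteq M_{E}$, where it defines a foliation $\mathcal{F}$. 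Two observations drive the argument. First, a curve $c(t) = (x_{(0)}(t), \ldots, x_{(k)}(t))$ solves $X_{E}$ if and only if each $x_{(j)}(t)$ solves $X_{(j)}$, and any such $c(t)$ is tangent to $\cD^{V^{X_{E}}}$, hence confined to a single leaf of $\mathcal{F}$. Second, a $t$-independent function $I$ is constant along every solution of $X_{E}$ if and only if $(X_{E})_{t} I = 0$ for all $t$; because $V^{X_{E}} = \mathrm{Lie}(\{(X_{E})_{t}\}_{t \in \R})$ and the set of vector fields annihilating $I$ is closed under Lie brackets, this is in turn equivalent to $I$ being a common first integral of $V^{X_{E}}$, i.e.\ a function constant on the leaves of $\mathcal{F}$.

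The core is then a pointwise linear-algebra computation on $U$. Let $\mathcal{V}_{i} := \ker \T \pr_{\hat{i}}$ be the vertical distribution tangent to the fibres $\{q\} \times M_{i}$, so $\dim \mathcal{V}_{i} = \dim M_{i}$. Injectivity of $\T \pr_{\hat{i}}|_{\cD^{V^{X_{E}}}}$ is exactly the condition $\cD^{V^{X_{E}}} \cap \mathcal{V}_{i} = 0$. Dualising, the restriction map $(\cD^{V^{X_{E}}})^{\circ} \to \mathcal{V}_{i}^{*}$, $\alpha \mapsto \alpha|_{\mathcal{V}_{i}}$, has kernel $(\cD^{V^{X_{E}}} + \mathcal{V}_{i})^{\circ}$, and a dimension count shows it is surjective precisely when $\cD^{V^{X_{E}}} \cap \mathcal{V}_{i} = 0$. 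Since the differentials of the common first integrals of $V^{X_{E}}$ span the annihilator $(\cD^{V^{X_{E}}})^{\circ}$ at each regular point, surjectivity onto $\mathcal{V}_{i}^{*}$ is equivalent to the existence of $\dim M_{i}$ common first integrals $I_{1}, \ldots, I_{\dim M_{i}}$ whose Jacobian with respect to the $M_{i}$-coordinates (the fibre coordinates of $\pr_{\hat{i}}$) is nonsingular on an open dense subset.

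With this dictionary both implications are short. For the ``if'' direction, assume the injectivity, take the first integrals $I_{1}, \ldots, I_{\dim M_{i}}$ just described, and invoke the implicit function theorem to solve $I_{a}(\bar{x}, x_{(i)}) = \lambda_{a}$ for the fibre variable, obtaining $x_{(i)} = \Phi(\bar{x}; \lambda)$ with $\bar{x} \in N$ and $\lambda$ playing the role of the point $p$. Along any solution $c(t)$ of $X_{E}$ the values $\lambda_{a} = I_{a}(c(t))$ are constant, so $x_{(i)}(t) = \Phi(\pr_{\hat{i}}(c(t)); \lambda)$ expresses the general solution of $X_{(i)}$ in terms of solutions of the remaining systems; this is the required mixed superposition rule. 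For the ``only if'' direction, a mixed superposition rule $\Phi : N \times M_{i} \to M_{i}$ for which $p$ is recoverable from the initial data means that $\Phi(\bar{x}; \cdot)$ is a local diffeomorphism for generic $\bar{x}$; setting $I(\bar{x}, x_{(i)})$ to be the unique $p$ with $\Phi(\bar{x}; p) = x_{(i)}$ yields an $M_{i}$-valued function which, being equal to the constant $p$ along every solution, is a common first integral of $V^{X_{E}}$ with nonsingular fibre Jacobian. By the dictionary this forces $\cD^{V^{X_{E}}} \cap \mathcal{V}_{i} = 0$, i.e.\ the injectivity \eqref{eq:Injectivity} on an open dense set.

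The step I expect to be the main obstacle is the second observation of the first paragraph: the identification of $t$-independent constants of motion of the single $t$-dependent field $X_{E}$ with common first integrals of the whole Lie algebra $V^{X_{E}}$. This rests on $\{(X_{E})_{t}\}_{t\in\R}$ generating $V^{X_{E}}$ together with the fact that $\{Y : YI = 0\}$ is closed under brackets, and it must be combined with care about the open dense regular locus of $\cD^{V^{X_{E}}}$ and about the precise sense in which $p$ is ``related to the initial conditions'' so that $\Phi(\bar{x}; \cdot)$ is genuinely invertible. The remaining transversality and implicit-function-theorem arguments are then routine, and the whole scheme parallels the proof of \cite[Theorem 13]{Grabowski2013}, now phrased intrinsically so as to apply to arbitrary manifolds $M_{0}, \ldots, M_{k}$.
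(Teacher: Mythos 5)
Your argument is essentially the proof the paper has in mind: the paper itself only states that the theorem follows by adapting \cite{Grabowski2013} to suitable local coordinates, and the dictionary you set up --- $t$-independent constants of motion of $X_{E}$ coincide with common first integrals of $V^{X_{E}}$; injectivity of $\T\,\pr_{\hat{i}}$ on $\cD^{V^{X_{E}}}$ is, by the duality computation, equivalent to the existence of $\dim M_{i}$ first integrals with nonsingular fibre Jacobian; the implicit function theorem then produces $\Phi$ --- is precisely that adaptation and matches the construction the paper sketches immediately after the statement. The only blemish is your opening assumption that each $X_{(j)}$ is a Lie system, which is not a hypothesis of the theorem; it is also unnecessary, since involutivity of $V^{X_{E}}$ together with local constancy of the rank of $\cD^{V^{X_{E}}}$ on an open dense subset already yields, via Frobenius on the regular stratum, the spanning family of first integrals your argument requires.
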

The proof of this result is obtained by adapting the arguments developed in \cite{Grabowski2013} to appropriately chosen local coordinates on $M_{0} \times \cdots \times M_{k}$.

Let us now sketch how to derive a mixed superposition rule.  Assume that $X := X_{(0)}$ is a Lie system on $M := M_{0}$, and that the systems $X_{(j)}$ on $M_{i}$, with $j=1,\ldots,k$, satisfy the injectivity condition stated in Theorem~\ref{th:mixed_injective} for $i = 0$. Suppose further that  $\dim M_{i} = n_{i}$ for all $i = 0, \ldots, k$, and let  $x_{(i)}:= \big(x_{(i)}^{1}, \ldots, x_{(i)}^{n_{i}}\big)$ be local coordinates on each $M_{i}$. Then,
$$
\big(x_{(0)}, \ldots, x_{(k)}\big)=\big (x_{(0)}^1, \ldots, x_{(0)}^{n_0}, \ldots, x_{(k)}^1, \ldots, x_{(k)}^{n_{k}} \big)$$
provides a local coordinate system on $M_{0} \times \cdots \times M_{k}$.
To construct a mixed superposition rule for $X$ in terms of $k$ particular solutions of $X_{(1)}, \ldots, X_{(k)}$, respectively, one seeks $n_{0}$ functionally independent, $t$-independent constants of motion  $I_{1}, \ldots, I_{n_{0}}$ of  $X_{E} = X \times X_{(1)} \times \cdots \times X_{(k)}$, such that 
\begin{equation}\label{eq:PartCon}
\pdv{(I_{1}, \ldots, I_{n_{0})}}{\big(x^{1}_{(0)}, \ldots, x^{n_{0}}_{(0)}\big)} \neq 0.
\end{equation}
The equations $I_{1} = k_{1}, \ldots, I_{n_{0}} = k_{n_{0}}$, where $k_{1}, \ldots, k_{n_{0}} \in \R$ are  constants, can then be solved locally for the variables $x_{(0)}^{1}, \ldots, x_{(0)}^{n_{0}}$ in terms of $x_{(j)}^{1}, \ldots, x_{(j)}^{n_{j}}$, for $j = 1, \ldots, k$, and the constants $k_{1}, \ldots, k_{n_{0}}$. This yields a map
 $$
 x_{(0)} = \Phi\big(x_{(1)}, \ldots, x_{(k)}; k_{1}, \ldots, k_{n_{0}}\big),
 $$
 which provides the searched mixed superposition rule. Note that condition \eqref{eq:Injectivity} ensures that the rank of the integrable distribution $\mathcal{D}^{V^{X_E}}$ is not larger than the dimension of $M_1\times\cdots \times M_k$ and $X_E$ has at least $m_0$ functionally independent constants of motion. Moreover, it can be proved that at least $m_0$ of them must satisfy condition \eqref{eq:PartCon} (see \cite{Carinena2011} for details). This procedure coincides with the ansatz previously employed to derive the mixed superposition rule \eqref{eq:mixed:superposition} for system \eqref{eq:mixed:inhom}.


\section{Lie systems with natural mixed superposition rules}\label{Section:MixedNatural}

As a consequence of the Extended Lie--Scheffers Theorem~\ref{thm:extendedLS}, every Lie system admits a mixed superposition rule. Nevertheless, in the general setting, there exists no systematic procedure for constructing alternative systems, distinct from the original one, that can be used to derive such a mixed superposition rule. Let us analyse two families of Lie systems which naturally admit mixed superposition rules formulated in terms of systems other than the original. The first class admits a geometric characterisation, whereas the second is described by means of algebraic properties of certain VG Lie algebras.


\subsection{On imprimitive Vessiot--Guldberg Lie algebras}
The first class of Lie systems which admit a natural mixed superposition rule is that of Lie systems possessing an {\it imprimitive} VG Lie algebra \cite{GonzalezLopez1992,Shnider1984,Shnider1984a}. To formulate the problem precisely, let us begin by considering a representative example.

Let us consider the Riccati system discussed in Subsection~\ref{ex:Riccati_system}, defined on the open subset $M = \set{(x,y, z) \in \R^{3}: yz \neq 0} \subset \R^{3}$, and given by \eqref{eq:ex:Riccati_system}. If $(x(t), y(t), z(t))$ is a particular solution of system \eqref{eq:ex:Riccati_system} in $M$, then $y(t)$ and $z(t)$ can be expressed in terms of $x(t)$, where the latter satisfies the Riccati equation
\begin{equation}
\dv{x}{t} = - b_{2}(t) + 2 b_{3}(t) x + 4 b_{1}(t) x^{2},
    \label{eq:Riccati_line}
\end{equation}
and the functions $b_{1}(t), b_{3}(t), b_{4}(t)$ by means of quadratures. In fact, note that the differential equations for $y(t)$ and $z(t)$ in \eqref{eq:ex:Riccati_system} can be integrated straightforwardly once $x(t)$ is known. This naturally raises the question as to whether the Riccati system \eqref{eq:ex:Riccati_system} admits a mixed superposition rule in terms only of particular solutions of the Riccati equation \eqref{eq:Riccati_line}. We proceed to demonstrate that this is not the case by applying Theorem~\ref{th:mixed_injective}. 

Let $X$ be the $t$-dependent vector field \eqref{eq:ex:Riccati_tdep} associated with the Riccati system \eqref{eq:ex:Riccati_system}, and consider the projection $\pi: M \ni (x, y, z) \mapsto x \in \R$. Then, 
\begin{equation}
\pi_{*}X = \sum_{i = 1}^{4}b_{i}(t) \pi_{*}X_{i} = \big(4 b_{1}(t) x^{2}  - b_{2}(t)  + 2 b_{3}(t) x\big) \pdv{x}
    \label{eq:Riccati_Line_tdep}
\end{equation}
is the $t$-dependent vector field associated with \eqref{eq:Riccati_line}. Consider now the  direct product $X_{E}$ of $X$ and $k \geq 1$ copies of $\pi_{*}X$, explicitly given by 
\begin{multline}
    X_{E}  = X \times (\pi_{*}X)^{[k]} = b_{1}(t) \left( 4 x_{(1)}y_{(1)} \pdv{y_{(1)}} - 4 x_{(1)}z_{(1)} \pdv{z_{(1)}}  + \sum_{i = 1}^{k+1} 4x_{(i)}^{2} \pdv{x_{(i)}}\right)   \\
  - b_{2}(t)\sum_{i = 1}^{k+1}  \pdv{x_{(i)}} + b_{3}(t) \left( y_{(1)} \pdv{y_{(1)}} +\sum_{i = 1}^{k+1} 2 x_{(i)} \pdv{x_{(i)}} \right) - 2 b_{4}(t) z_{(1)} \pdv{z_{(1)}},
\label{eq:direct_counter}
\end{multline}
and let $\cD^{V^{X_{E}}} \subset \T \big(M \times \R^{k}\big)$ be the generalised distribution generated by $V^{X_{E}}$. It follows that, whenever $b_{4}(t) \neq 0$, the vector field $z_{(1)} \pdv{z_{(1)}} $ takes values on $\cD^{V^{X_{E}}}$, implying that the restriction of the tangent map $\T \, \pr_{\hat{1}}: \T \big(M \times \R^{k}\big) \to \T \R^{k}$ to $\cD^{X_{E}}$, where $\pr_{\hat{1}}: M \times \R^{k} \to \R^{k}$ is the canonical projection onto the second factor, fails to be injective. Consequently, by Theorem~\ref{th:mixed_injective}, the Riccati system \eqref{eq:ex:Riccati_system} does not admit a mixed superposition rule depending solely on particular solutions of the Riccati equation \eqref{eq:Riccati_line}.

Nevertheless, the situation changes abruptly when, instead of the direct product \eqref{eq:direct_counter}, one considers the direct product $X_{E} := X^{[2]} \times \pi_{*}X $  on $M^{2} \times \R$, explicitly given by 
\begin{align}
    X_{E} & = X^{[2]} \times \pi_{*}X\nonumber \\ 
    &= b_{1}(t) \left[\sum_{i = 1}^{2} \left(4x_{(i)}^{2} \pdv{x_{(i)}} + 4 x_{(i)}y_{(i)} \pdv{y_{(i)}} - 4 x_{(i)}z_{(i)} \pdv{z_{(i)}} \right) + 4 x_{(3)}^{2} \pdv{x_{(3)}} \right] \nonumber\\
&\quad - b_{2}(t) \sum_{i = 1}^{3} \pdv{x_{(i)}}     + b_{3}(t)  \left[\sum_{i = 1}^{2} \left( 2x_{(i)} \pdv{x_{(i)}} + y_{(i)} \pdv{y_{(i)}} \right)+2x_{(3)} \pdv{x_{(3)}} \right] - 2 b_{4}(t)  \sum_{i = 1}^{2} z_{(i)} \pdv{z_{(i)}} .
\label{eq:direct_counter2}
\end{align} 
In this context, the restriction of $\T \, \pr_{\hat{1}}: \T \big(M \times \R^{k}\big) \to \T \R^{k}$ to the generalised distribution $\cD^{X_{E}}$ spanned by $V^{X_{E}}$ is injective on the open and dense subset 
$$
U := \set{\big(x_{(1)}- x_{(2)}\big)\big(x_{(1)}-x_{(3)}\big)\big(x_{(2)} - x_{(3)}\big) \neq 0} \subset M^{2} \times \R.
$$ 
Moreover, the restriction of $\cD^{V^{X_{E}}}$ to $U$ is a rank-four regular and integrable distribution $\cD^{V^{X_{E}}} \vert_{U} \subset \T U$. Hence, its annihilator $\big(\cD^{V^{X_{E}}} \vert_{U}\big)^{\circ} \subset \T^{*} U$ is a rank-three vector subbundle of $\T^{*} U$. Employing standard techniques from the theory of exterior differential systems \cite{Bryant1991,Flanders1963}, one obtains that the functions $I_{1}, I_{2}, I_{3} \in C^{\infty}(U)$ given by
\begin{equation}
I_{1} := \frac{y_{(1)}\big(x_{(2)}- x_{(3)}\big)}{y_{(2)}\big(x_{(1)}- x_{(3)}\big)}, \qquad I_{2}:= \frac{z_{(2)}\big(x_{(1)}- x_{(2)}\big)}{z_{(1)}y_{(1)}^{2}}, \qquad I_{3}:= \frac{y_{(1)} y_{(2)}}{x_{(1)}- x_{(2)}}
    \label{eq:constants}
\end{equation}
are such that their differentials $\dd I_{1}, \dd I_{2}, \dd I_{3}$ generate $\big(\cD^{V^{X_{E}}} \vert_{U}\big)^{\circ}$. Therefore, the functions \eqref{eq:constants} are $t$-independent constants of motion of $X_{E}$ (\ref{eq:direct_counter2}). Since they fulfil the regularity condition $$\pdv{(I_{1}, I_{2}, I_{3})}{\big(x_{(1)}, y_{(1)}, z_{(1)}\big)} \neq 0,$$ the system $I_{1} = k_{1}\neq 0, I_{2} = k_{2}\neq 0, I_{3} = k_{3}\neq 0$, yields  the general solution of the Riccati system \eqref{eq:ex:Riccati_system} in the form
  \begin{equation}
\begin{split}
   & x_{(1)}(t) = x_{(2)}(t)+ \frac{ k_{1} \big[x_{(2)}(t) - x_{(3)}(t)\big]y_{(2)}^{2}(t)   }{k_{3}\big[x_{(2)}(t) - x_{(3)}(t)\big] - k_{1}y_{(2)}^{2}(t)}, \\[2pt]
    &y_{(1)}(t) = \frac{k_{1}k_{3} y_{(2)}(t) \big[x_{(2)}(t) - x_{(3)}(t)\big]}{k_{3}\big[x_{(2)}(t) - x_{(3)}(t)\big] - k_{1}y_{(2)}^{2}(t)}, \\[2pt]
   & z_{(1)}(t) = \frac{z_{(2)}(t)}{k_{2}k_{3}} \left[ \frac{1}{k_{1}} - \frac{y_{(2)}^{2}(t)}{k_{3}\big[x_{(2)}(t) - x_{(3)}(t)\big]} \right],
    \end{split}
    \label{eq:mixed:Riccati_system}
\end{equation}
where $\big(x_{(2)}(t), y_{(2)}(t), z_{(2)}(t)\big)$ is a particular solution of the Riccati system \eqref{eq:ex:Riccati_system}, while $x_{(3)}(t)$ is a particular solution of the Riccati equation \eqref{eq:Riccati_line}, and $k_{1}, k_{2}, k_{3}$ are non-zero real constants.  

The aforementioned success naturally prompts the question of whether the techniques employed might also be extended to Lie systems exhibiting analogous properties to the Riccati system \eqref{eq:ex:Riccati_system}. In this regard, we stress that the Riccati equation \eqref{eq:Riccati_line}, utilised in the mixed superposition rule \eqref{eq:mixed:Riccati_system} for the Riccati system \eqref{eq:ex:Riccati_system}, admits a geometric derivation as follows.   More precisely, let $\cD \subset \T M$ be the regular and integral distribution spanned by the vector fields $\partial_{y}, \partial_{z}$. The vector fields \eqref{eq:ex:Riccati_vf}, spanning the VG Lie algebra $V$ of the Riccati system, leave the distribution $\cD$ invariant, that is, $[V, \Gamma(\cD)] \subset \Gamma(\cD)$. Consider next the foliation $\mathcal{F}$ of $M$ induced by $\cD$, whose leaves are the connected components of $\set{x = c} \cap M$, with $c \in \R$. The corresponding leaf space  $M/\mathcal{F}$ is the real line $\R$, and the projection $\pi: M \to \R = M/ \mathcal{F}$ is a submersion through which the $t$-dependent vector field $X$ associated with the Riccati system projects onto the $t$-dependent vector field \eqref{eq:Riccati_Line_tdep} associated with the Riccati equation. 

The property that the VG Lie algebra $V$ associated with the Riccati system \eqref{eq:ex:Riccati_system} leaves invariant a {\it proper}, namely of rank $1 \leq \mathrm{rk} \, \cD \leq \dim M -1$, regular and integrable distribution $\cD$ is commonly referred to as $V$ being {\it imprimitive} \cite{GonzalezLopez1992,Shnider1984, Shnider1984a}.

\begin{defi}
    A finite-dimensional Lie algebra of vector fields $V$ on $M$ is said to be {\it imprimitive} if there exists a  regular and integrable distribution $\cD \subset \T M$ that is proper such that $[V, \Gamma(\cD)] \subset \Gamma(\cD)$. Otherwise, $V$ is termed {\it primitive}. 
\end{defi}

These two classes of Lie algebras play a fundamental role in the classification of finite-dimensional Lie algebras of vector fields on the real plane at generic points \cite{GonzalezLopez1992}. It was observed in  \cite{Shnider1984,Shnider1984a} that Lie systems admitting an imprimitive VG Lie algebra possess a mixed superposition in the sense of  Definition~\ref{def:mixed}. Nevertheless, the argument presented therein relies on the local Lie group action induced by such an imprimitive VG Lie algebra, the explicit determination of which is, in frequent practical situations, non-trivial.

Let us now show how the later result can be addressed by means of our mixed superposition rules formalism. 

\begin{prop} \label{prop:exact}
    Let $V$ be an imprimitive Lie algebra on $M$ such that there exists a regular, proper, integral distribution $\cD \subset \T M$ 
     such that $[V, \Gamma(\cD)] \subset \Gamma(\cD)$, and let $\mathcal{F}$ be the foliation induced by $\cD$. Assume that the leaf space $M / \mathcal{F}$ is a manifold and that the projection $\pi: M \to M / \mathcal{F}$ is a submersion. Then,
\begin{equation}
\begin{tikzcd}
	0 & {V \cap \Gamma(\mathcal{D})} & V & {\pi_{*}V} & 0
	\arrow[hook,from=1-1, to=1-2]
	\arrow[hook, from=1-2, to=1-3]
	\arrow["{\pi_{*}}", from=1-3, to=1-4]
	\arrow[from=1-4, to=1-5]
\end{tikzcd}
\label{eq:exact}
\end{equation}
is an exact sequence of Lie algebra morphisms. 
\end{prop}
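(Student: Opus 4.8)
The plan is to establish that \eqref{eq:exact} is an exact sequence by verifying, in order, that each arrow is a well-defined Lie algebra morphism and that the image of each map coincides with the kernel of the next. The central object is the pushforward $\pi_{*}$, so the first task is to confirm that $\pi_{*}$ maps $V$ into a well-defined Lie algebra of vector fields $\pi_{*}V$ on $M/\mathcal{F}$. Since $\pi$ is a surjective submersion whose fibres are the leaves of $\mathcal{F}$, a vector field $Y \in V$ descends to a vector field on $M/\mathcal{F}$ precisely when $Y$ is projectable, i.e.\ when $Y$ is $\pi$-related to some vector field on the base; the obstruction to projectability is exactly that $[Y, \Gamma(\cD)] \subset \Gamma(\cD)$, which is guaranteed by the hypothesis $[V, \Gamma(\cD)] \subset \Gamma(\cD)$ together with the fact that $\Gamma(\cD)$ consists of the vertical (leaf-tangent) vector fields. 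This is where I would spend the most care.

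Once projectability is secured, the morphism property of $\pi_{*}$ is routine: for $Y_{1}, Y_{2} \in V$, the identity $\pi_{*}[Y_{1}, Y_{2}] = [\pi_{*}Y_{1}, \pi_{*}Y_{2}]$ is the standard naturality of the Lie bracket under $\pi$-relatedness, and surjectivity of $\pi_{*}\colon V \to \pi_{*}V$ holds by the very definition of $\pi_{*}V$ as the image. Next I would identify the kernel: $\pi_{*}Y = 0$ means $Y$ is tangent to every leaf of $\mathcal{F}$, i.e.\ $Y_{x} \in \cD_{x}$ for all $x$, which is precisely the condition $Y \in \Gamma(\cD)$. Hence $\ker \pi_{*} = V \cap \Gamma(\cD)$, and one checks that $V \cap \Gamma(\cD)$ is a Lie subalgebra of $V$ (indeed an ideal): for $Y \in V$ and $W \in V \cap \Gamma(\cD)$ we have $[Y,W] \in V$ since $V$ is a Lie algebra, and $[Y,W] \in \Gamma(\cD)$ by the invariance hypothesis, so $[Y,W] \in V \cap \Gamma(\cD)$.

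Exactness at each node then follows by assembling these observations. At $V \cap \Gamma(\cD)$ the inclusion is injective, giving exactness at the left. At $V$ we have just shown $\mathrm{im}(\text{inclusion}) = V \cap \Gamma(\cD) = \ker \pi_{*}$. At $\pi_{*}V$ exactness is the surjectivity of $\pi_{*}$ onto its image, and the trailing arrow to $0$ records this. The hypothesis that $M/\mathcal{F}$ is a manifold with $\pi$ a submersion is used only to ensure that $\pi_{*}V$ is genuinely a Lie algebra of smooth vector fields on a smooth base, so that the whole diagram lives in the category of Lie algebras; without this the pushforward would have no well-defined target.

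The main obstacle, and the one warranting an explicit argument rather than an appeal to generalities, is the well-definedness of $\pi_{*}$ on all of $V$: a priori a single vector field $Y$ need not project, and even if each $Y_{x}$ descends pointwise one must verify that the resulting assignment is a smooth, globally defined vector field on the quotient and is independent of the chosen leaf representative. I would handle this by working in adapted (foliated) local coordinates $(x^{1},\dots,x^{r}, u^{1},\dots,u^{s})$ in which $\cD = \langle \partial_{u^{1}},\dots,\partial_{u^{s}}\rangle$ and the leaves are the level sets of the $x^{j}$; in such charts the condition $[Y,\Gamma(\cD)]\subset\Gamma(\cD)$ forces the coefficients of the $\partial_{x^{j}}$ components of $Y$ to be independent of the $u$-variables, which is exactly the local criterion for $Y$ to push forward to the well-defined vector field $\pi_{*}Y = \sum_{j} Y^{x^{j}}(x)\,\partial_{x^{j}}$ on the base. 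Smoothness and independence of representative are then immediate in these coordinates, completing the verification.
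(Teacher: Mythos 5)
Your proposal is correct: the paper states this proposition without giving a proof (it is treated as routine), and your argument supplies exactly the standard verification one would expect — projectability of each $Y\in V$ from the invariance condition $[V,\Gamma(\cD)]\subset\Gamma(\cD)$ read in foliated coordinates, naturality of the bracket under $\pi$-relatedness, and the identification $\ker\pi_{*}=V\cap\Gamma(\cD)$ with the observation that this intersection is an ideal. The only point worth making fully explicit is that constancy of the horizontal components along an entire leaf (not just within one adapted chart) follows from connectedness of the leaves, which is what guarantees $\pi_{*}Y$ is globally well defined on $M/\mathcal{F}$.
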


\begin{remark}
    The fact that \eqref{eq:exact} is an exact sequence of Lie algebra morphisms  means, precisely, that $V$ is a Lie algebra extension of $\pi_{*}V$ by $V \cap \Gamma(\cD)$  \cite{Alekseevsky2004}. Observe that, if $V$ is a simple Lie algebra, then $\pi_{*}V$ is isomorphic to $V$ or zero. 
\end{remark}

Let $X$ be a Lie system admitting a VG Lie algebra $V$ that satisfies the hypotheses of Proposition~\ref{prop:exact}. Our aim is now to demonstrate the existence of a natural mixed superposition rule for $X$, in terms of a suitable number of particular solutions of $X$ and $\pi_{*}X$. 
Prior to this, let us recall some relevant concepts (see \cite{Grabowski2013} for further details). 

\begin{defi}
    A vector space of vector fields $V$ on $M$ is said to possess or admit a {\it modular basis} if there exists a basis of $V$ formed by linearly independent vector fields at a  {\it generic point} of $M$, that is, at points of an open and dense subset of $M$. 
\end{defi}
The following result is a  straightforward consequence of the above definition.  
\begin{prop} \label{prop:modular}
    Let $V$ be a Lie algebra of vector fields on $M$ that possesses a modular basis. Then,
   \begin{enumerate}[label=(\arabic*), font=\normalfont]
    \item Every basis of $V$ is modular; and
    \item Every Lie subalgebra $V' \subset V$ also admits a modular basis.
\end{enumerate}
\end{prop}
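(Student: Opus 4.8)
The plan is to exploit the fact that, since $V$ is a finite-dimensional \emph{real} vector space of vector fields, any two bases are related by a constant invertible matrix, so that pointwise linear independence becomes a basis-independent property on the relevant open and dense set.

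For assertion (1), I would start from a modular basis $\set{X_{1}, \ldots, X_{n}}$ of $V$, which by hypothesis consists of vector fields that are linearly independent at every point of some open and dense subset $U \subset M$. Given any other basis $\set{Y_{1}, \ldots, Y_{n}}$ of $V$, there exists a constant matrix $A = (A_{ij}) \in GL(n, \R)$ with $Y_{i} = \sum_{j} A_{ij} X_{j}$. Evaluating at a point $x \in U$, the invertibility of $A$ together with the linear independence of $X_{1}(x), \ldots, X_{n}(x)$ in $\T_{x}M$ forces $Y_{1}(x), \ldots, Y_{n}(x)$ to be linearly independent as well. Hence $\set{Y_{1}, \ldots, Y_{n}}$ is linearly independent on the same set $U$, and is therefore modular. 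The essential point—and the only place where the vector-space structure is genuinely used—is that the change-of-basis matrix has constant entries, because $V$ is a vector space over $\R$ and not over $C^{\infty}(M)$.

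For assertion (2), let $V' \subset V$ be a Lie subalgebra (in fact a vector subspace suffices for the argument), say with $\dim V' = m \leq \dim V = n$, and choose any basis $\set{Z_{1}, \ldots, Z_{m}}$ of $V'$. I would extend it to a basis $\set{Z_{1}, \ldots, Z_{m}, Z_{m+1}, \ldots, Z_{n}}$ of $V$. By assertion (1), this extended basis of $V$ is modular, hence its members are linearly independent at every point of some open and dense subset $U' \subset M$. Restricting attention to the first $m$ vectors, we conclude that $Z_{1}(x), \ldots, Z_{m}(x)$ are linearly independent for every $x \in U'$, so that $\set{Z_{1}, \ldots, Z_{m}}$ is a modular basis of $V'$.

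There is no substantial obstacle here; the proposition follows directly from the definition together with elementary linear algebra. The only conceptual subtlety worth flagging is precisely the constancy of the transition matrices, which is what allows pointwise linear independence to transfer between bases and to descend to subalgebras. Were the coefficients permitted to vary as functions over $M$, neither conclusion would hold in general.
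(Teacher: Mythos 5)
Your proof is correct, and it is precisely the elementary linear-algebra argument the paper has in mind (the paper states the proposition without proof, calling it a straightforward consequence of the definition). Your observation that the constancy of the change-of-basis matrix over $\R$ is the crux is exactly the right point to flag.
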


Let us now prove the main result of this subsection.
 
\begin{thm} \label{thm:imprimitive} Let $X$ be a Lie system on $M$ admitting a VG Lie algebra $V$ which is imprimitive and leaves invariant a regular, proper, and integrable distribution $\mathcal{D} \subset \T M$, and let $\mathcal{F}$ be the foliation induced by $\mathcal{D}$. Assume that the leaf space $M / \mathcal{F}$ is a manifold and that the projection $\pi: M \to M/\mathcal{F}$ is a submersion.  
Then, the following statements hold:
   \begin{enumerate}[label=(\arabic*), font=\normalfont]
    \item If $V \cap \Gamma(\cD) = 0$, then $X$ admits a mixed superposition rule depending on $k$ particular solutions of $\pi_{*}X$, where $k \geq 1$ is such that $(\pi_{*}V)^{[k]}$ possess a modular basis; and 
    \item If $V \cap \Gamma(\cD) \neq 0$, then $X$  admits a mixed superposition rule in terms of $s$ particular solutions of $X$ and $k$ particular solutions of $\pi_{*}X$, where $s, k \geq 1$.  Moreover, if $V^{X} \cap \Gamma(\cD) \neq 0 $, then $X$ does not admit a mixed superposition rule depending solely on particular solutions of  $\pi_{*}X$.
\end{enumerate}
\end{thm}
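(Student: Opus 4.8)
The plan is to reduce every assertion to the injectivity criterion of Theorem~\ref{th:mixed_injective}, translating the invariance of $\cD$ and the extension of Proposition~\ref{prop:exact} into a modular-basis condition for a suitable prolongation. Fix a basis $Y_1,\dots,Y_n$ of $V$ and write $X=\sum_{\alpha}b_\alpha(t)Y_\alpha$, so that $\pi_*X=\sum_\alpha b_\alpha(t)\,\pi_*Y_\alpha$. For integers $s\ge 0$ and $k\ge 1$ I would work on $M^{s+1}\times(M/\mathcal F)^{k}$ with the direct product $X_E:=X^{[s+1]}\times(\pi_*X)^{[k]}$ and set $\widehat W_\alpha:=Y_\alpha^{[s+1]}\times(\pi_*Y_\alpha)^{[k]}$. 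Combining the bracket rule \eqref{eq:bracket_direct} with $\pi_*[Y_\alpha,Y_\beta]=[\pi_*Y_\alpha,\pi_*Y_\beta]$ shows that the lift $Y\mapsto\widehat Y:=Y^{[s+1]}\times(\pi_*Y)^{[k]}$ is a Lie-algebra monomorphism (injectivity is read off the first $M$-component), so $\widehat V:=\langle\widehat W_1,\dots,\widehat W_n\rangle\cong V$ and, since the $(X_E)_t$ generate, $V^{X_E}=\widehat{V^X}\subseteq\widehat V$.

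Next I would compute the kernel appearing in \eqref{eq:Injectivity} for $i=0$. Deleting the first $M$-factor sends $\widehat W_\alpha$ to $\bar W_\alpha:=Y_\alpha^{[s]}\times(\pi_*Y_\alpha)^{[k]}$, so a vector $\sum_\alpha\lambda_\alpha\widehat W_\alpha$ of $\cD^{\widehat V}$ lies in $\ker\T\,\pr_{\hat 0}$ exactly when $\sum_\alpha\lambda_\alpha\bar W_\alpha=0$. Hence, if the projected Lie algebra $\bar V:=\langle\bar W_\alpha\rangle$ on $M^{s}\times(M/\mathcal F)^{k}$ admits a modular basis, then $\lambda=0$ at a generic point and $\T\,\pr_{\hat 0}\vert_{\cD^{\widehat V}}$---and a fortiori its restriction to the sub-distribution $\cD^{V^{X_E}}$---is injective on an open and dense set; Theorem~\ref{th:mixed_injective} then yields the mixed superposition rule, with $\Phi$ obtained by the post-theorem procedure of solving $I_1=k_1,\dots$ for the first block of coordinates. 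For statement~(1) I take $s=0$: since $V\cap\Gamma(\cD)=0$, Proposition~\ref{prop:exact} makes $\pi_*\colon V\to\pi_*V$ an isomorphism, whence $\{\pi_*Y_\alpha\}$ is a basis of $\pi_*V$, the algebra $\bar V=(\pi_*V)^{[k]}$ has the modular basis assumed in the statement, and no copy of $X$ enters. For the existence part of~(2) I would produce admissible $s,k\ge 1$ by exhibiting a point of full rank: because no nonzero element of $V$ vanishes identically, $\bigcap_{x\in M}\ker(\mathrm{ev}_x)=0$ with $\mathrm{ev}_x\colon V\to\T_xM$, so finitely many points $x_1,\dots,x_s$ already have trivial common kernel; at $(x_1,\dots,x_s,u)$ the $\bar W_\alpha$ are independent, and since $\bar V$ is a finite-dimensional Lie algebra of vector fields its maximal rank is attained on an open and dense subset (Section~\ref{section:fundamentals}), giving the required modular basis. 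The more economical choice uses $k$ to separate the base directions $\pi_*V$ and $s$ to separate the fibre directions $V\cap\Gamma(\cD)$.

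For the non-existence claim in~(2) I would set $s=0$, i.e. test whether $\pi_*X$ alone suffices, taking $X_E=X\times(\pi_*X)^{[k]}$ for arbitrary $k$; as above $V^{X_E}=\widehat{V^X}$. Choosing $0\neq Y\in V^X\cap\Gamma(\cD)$, tangency to the leaves gives $\pi_*Y=0$, so $\widehat Y=Y\times 0^{[k]}$ is a nonzero element of $V^{X_E}$ taking the value $Y(x)$ at $(x,u_1,\dots,u_k)$, hence lying in $\ker\T\,\pr_{\hat 0}$ (the subbundle tangent to the first $M$-factor). On the nonempty open set $\{Y\neq 0\}\times(M/\mathcal F)^{k}$ the map $\T\,\pr_{\hat 0}\vert_{\cD^{V^{X_E}}}$ is therefore not injective, so it cannot be injective on any open and dense subset; by Theorem~\ref{th:mixed_injective}, $X$ admits no mixed superposition rule in terms of $\pi_*X$ alone, for any $k$.

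The step I expect to be most delicate is securing the modular basis after deletion of the first factor, i.e. passing from one full-rank point to open-dense independence: this forces one to invoke that the regular locus of a finite-dimensional Lie algebra of vector fields is open and dense, and to check that $\bar V$ is genuinely such an algebra (closure under brackets being exactly \eqref{eq:bracket_direct}). A second delicate point---harmless for existence but essential for non-existence---is that the distribution governing \eqref{eq:Injectivity} is $\cD^{V^{X_E}}$ and not $\cD^{\widehat V}$: for existence one may enlarge to $\widehat V$, whereas the obstruction to a $\pi_*X$-only rule must be produced inside $V^{X_E}=\widehat{V^X}$, which is precisely why the hypothesis is phrased with the system algebra $V^X$ rather than the VG algebra $V$. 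The degenerate case $\pi_*V=0$ (that is $V\subseteq\Gamma(\cD)$, so $\pi_*X$ is trivial) falls outside the interesting range and reduces~(2) to the ordinary superposition rule for $X$.
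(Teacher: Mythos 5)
Your proof is correct and follows essentially the same route as the paper's: both reduce everything to the injectivity criterion of Theorem~\ref{th:mixed_injective} applied to $X_E=X^{[s+1]}\times(\pi_*X)^{[k]}$, establish the injectivity of $\T\,\pr_{\hat{0}}$ on $\cD^{V^{X_E}}$ by exhibiting a modular basis of the ``deleted'' algebra (your \emph{more economical choice} is precisely the paper's splitting of a basis of $V$ into $\pi_*V$-directions, separated by $k$ copies of $M/\mathcal{F}$, and $\Gamma(\cD)$-directions, separated by $s$ copies of $M$), and derive the non-existence claim in (2) from the fact that any $0\neq Y\in V^{X}\cap\Gamma(\cD)$ lifts to the nonzero element $Y\times 0^{[k]}$ of $V^{X_E}$ lying in $\ker\T\,\pr_{\hat{0}}$ over the open set $\{Y\neq 0\}$. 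The one place where you invoke an unproved density statement (that linear independence of the relevant prolongations at a single point propagates to an open and dense subset) is exactly the step the paper delegates to \cite[Proposition 1.17]{Carinena2011}, so no gap beyond the paper's own level of detail is introduced.
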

\begin{proof} Let us prove both points. 
    \begin{itemize}
    \item[(1)] Since $V \cap \Gamma(\cD) = 0$, it follows that $\pi_{*}V$ is isomorphic to $V$. Let $\{X_1,\ldots,X_r\}$ be a basis of $V$ and  let  $\set{(\pi_{*}X_{1})^{[k]}, \ldots, (\pi_{*}X_{r})^{[k]}}$ be a modular basis of $(\pi_{*}V)^{[k]}$ around a generic point of $(M / \mathcal{F})^{k}$ for some $k$. Note that $k$ exists because $\pi_*X_1,\ldots,\pi_*X_r$ form a basis of $\pi_*V$ (see \cite[Proposition 1.17]{Carinena2011}  for details). 
Consider the Lie algebra of vector fields $V_{E}$ on $M \times (M / \mathcal{F})^{k}$ spanned by the vector fields $X_{1} \times (\pi_{*}X_{1})^{[k]}, \ldots, X_{r} \times (\pi_{*}X_{r})^{[k]}$.  Moreover,  $\big\{ X_{1} \times (\pi_{*}X_{1})^{[k]}, \ldots, X_{r} \times (\pi_{*}X_{r})^{[k]}\big\}$ is a modular basis of $V_{E}$, which is also isomorphic to $V$.  Hence, $ \T \, \pr_{\hat{1}} \vert_{\cD^{V^{X_{E}}}}: \cD^{V^{X_{E}}} \to \T (M / \mathcal{F})^{k}$ is injective on an open and dense subset of $M \times (M /\mathcal{F})^{k}$ and $X$ admits a mixed superposition rule in terms of $k$ particular solutions of $\pi_*X$.

\item[(2)] Let $\{X_1,\ldots,X_r\}$ be again a basis of $V$ so that $\pi_*X_1,\ldots,\pi_*X_m$ is a basis of $\pi_*V$ and $X_{m+1},\ldots,X_r\in\Gamma(\mathcal{D})$. Let $k$ be such that $\big\{(\pi_*X_1)^{[k]}, \ldots, (\pi_*X_{m})^{[k]}\big\}$ is a modular basis of $V^{[k]}$. Consider also some $s$ so that $X^{[s]}_{m+1},\ldots,X^{[s]}_r$  are linearly independent at a generic point of $M^s$. 
The direct products $ X^{[s+1]}_{1}  \times (\pi_{*}X_{1})^{[k]}, \ldots ,X^{[s+1]}_{r} \times (\pi_{*}X_{r})^{[k]}$ form a modular basis of a Lie algebra $V_E$, which is therefore isomorphic to $V$. 
         Moreover, $V_{E}$ is a VG Lie algebra of the direct product $X_{E}:= X^{[s+1]}  \times (\pi_{*}X)^{[k]}$.        
                   It  follows that $\T \, \pr_{\hat{1}} \vert_{\cD^{V_{{E}}}}: \cD^{V_{E}} \to \T M^s\times \T(M/ \mathcal{F})^{k}$ is injective at a generic point. Therefore, a mixed superposition rule depending on $s$ particular solutions of $X$ and $k$ of $\pi_*X$ can be obtained.  Note that $s>0$, as otherwise the vector fields $X_{m+1}\times (\pi_*X_{m+1})^{[k]} ,\ldots, X_r\times (\pi_*X_r)^{[k]}$ project onto zero via $\T{\rm pr}_{\hat{1}}|_{\mathcal{D}^{V_E}}: \mathcal{D}^{V_{E}} \subset \T M\times\T (M/\mathcal{F})^k\rightarrow \T (M/\mathcal{F})^k$.    If $V^X\cap \Gamma(\mathcal{D})\neq 0$, the previous situation will exclude mixed superposition rules for $s=0$ and every possible VG Lie algebra $V$ of $X$.
    \end{itemize}
\end{proof}
The mixed superposition rule \eqref{eq:mixed:Riccati_system} for the Riccati system \eqref{eq:ex:Riccati_system}, involving of one particular solution of  \eqref{eq:ex:Riccati_system} and one particular solution of the Riccati equation \eqref{eq:Riccati_line}, exemplifies the applicability of Theorem~\ref{thm:imprimitive} when $V \cap \Gamma(\cD)  \neq 0$. Let us now provide an example illustrating  the complementary case, namely $V \cap \Gamma(\mathcal{D}) = 0$.

\begin{example} \label{ex:ho}
Let us consider the $t$-dependent harmonic oscillator system on $\T^{*} \R_{0}$, given by the $t$-dependent Hamiltonian
$$
h := \frac{1}{2}p^{2} + \frac{1}{2} \Omega^{2}(t) q^{2},
$$
where $\Omega(t)$ is a $t$-dependent frequency. The associated Hamilton's equations with respect to the canonical symplectic form $\omega = \dd q \wedge \dd p$ on $\T^{*}\R_{0}$ read 
\begin{equation}
\dv{q}{t} = p, \qquad \dv{p}{t} = -\Omega^{2}(t) q. 
    \label{eq:tdho:system}
\end{equation}
It is well known that this a LH system relative to $\omega = \dd q \wedge \dd p$, with associated   $t$-dependent vector field  $X := X_{3} +\Omega^{2}(t) X_{1}$, where the vector fields 
\begin{equation}
X_{1}:= - q \pdv{p}, \qquad X_{2}:= -q \pdv{q} + p \pdv{p}, \qquad X_{3}:= p \pdv{q}
    \label{eq:tdho:vf}
\end{equation}
span a VG Lie algebra $V \simeq \mathfrak{sl}(2, \R)$ of $X$ \cite{Ballesteros2013,Ballesteros2015,Blasco2015} with commutation relations given by
\begin{equation}
[X_{1},X_{2} ]=2 X_1,\qquad [X_{1},X_{2} ]=  X_2,\qquad 
[X_{2},X_{3} ]=2 X_3. 
\label{eq:ho}
\end{equation} 
 Under the diffeomorphism
\begin{equation}
\T^{*} \R_{0} \ni (q, p) \mapsto (x:= -p/q, y:= 1/q) \in \R^{2}_{y \neq 0},
    \label{eq:tdho:diff}
\end{equation}
the vector fields \eqref{eq:tdho:vf} take the form
\begin{equation}
X_{1} = \pdv{x}, \qquad X_{2} = 2 x \pdv{x} + y \pdv{y}, \qquad X_{3} = x^{2}\pdv{x} + xy \pdv{y}.
    \label{eq:tdho:I5_vf}
\end{equation}
Remarkably, they are the generators of the so-called I$_{5}$-LH class on the plane \cite{Ballesteros2015}. It is now clear that the VG Lie algebra $V$ spanned by $X_{1}, X_{2}$ and $X_{3}$ is imprimitive, since leaves invariant the regular and integrable rank-one distribution $\mathcal{D} \subset \R^{2}_{y \neq 0}$ spanned by the vector field $\partial_{y}$.  The leaves of the foliation $\mathcal{F}$ induced by $\mathcal{D}$ are given by the pairs of semi-lines $\set{(c, y): y \neq 0}$, with $c \in \R$. The leaf space $M/ \mathcal{F}$ is the real line $\R$ and $\pi: \R^{2}_{y \neq 0} \ni (x, y) \mapsto x \in   M/ \mathcal{F} = \R$ is the canonical projection, through which the $t$-dependent vector field $X =  X_{3} +\Omega^{2}(t) X_{1}$ projects onto 
\begin{equation}
\pi_{*}X =  \pi_{*}X_{3} +\Omega^{2}(t) \pi_{*}X_{1}.
    \label{eq:tdho:Riccati_tdep}
\end{equation}
The projected vector fields, 
\begin{equation}
\pi_{*}X_{1} = \pdv{x}, \qquad \pi_{*}X_{2} = 2x \pdv{x}, \qquad \pi_{*}X_{3} = x^{2} \pdv{x},
    \label{eq:tdho:Riccativf}
\end{equation}
span the projection $\pi_{*}V$ of $V$, which is isomorphic to $\mathfrak{sl}(2, \R)$ as well, since $V$ is simple, and  hence fulfilling the commutators  (\ref{eq:ho}).
 The ODE  associated with $\pi_{*}X$ is a Riccati equation of the form 
\begin{equation}
\dv{x}{t} =  x^{2} +\Omega^{2}(t). 
    \label{eq:tdho:Riccati}
\end{equation}
The diagonal prolongation $(\pi_{*}V)^{[3]}$ of $\pi_{*}V$ to $\R^{3}$ admits a modular basis. Hence, Theorem~\ref{thm:imprimitive} ensures that the system $X$ describing the Hamilton equations \eqref{eq:tdho:system} of the $t$-dependent harmonic oscillator admits a mixed superposition rule in terms of three particular solutions of the Riccati equation \eqref{eq:tdho:Riccati}. This feature was already observed in \cite{Carinena2008}, but is now addressed geometrically by means of Theorem~\ref{thm:imprimitive}. 
\end{example}


\subsection{On semidirect sums of Vessiot--Guldberg Lie algebras}
Let us examine Lie systems possessing a VG Lie algebra that admits a semidirect sum decomposition, thereby enabling the construction of mixed superposition rules via a natural procedure stemming from the underlying algebraic structure.

Inhomogeneous linear systems of ODEs, such as system \eqref{eq:mixed:inhom}, admit a mixed superposition rule. In the context of Lie systems theory, such systems are referred to as {\it affine Lie systems}, since their associated $t$-dependent vector fields can be regarded as curves taking values in affine Lie algebras \cite{Carinena2003, Carinena2011, Colombo2025}. More specifically, an {\it affine Lie system} on $\mathbb{R}^{n}$ is a Lie system $X$ of the form
\begin{equation}
    \dv{t} \begin{pmatrix}
x_{1} \\
\vdots \\
x_{n}
\end{pmatrix} = A(t) \begin{pmatrix}
x_{1} \\
\vdots \\
x_{n}
\end{pmatrix} + \begin{pmatrix}
b_{1}(t) \\
\vdots \\
b_{n}(t)
\end{pmatrix},
\label{eq:affine_LieSystem}
\end{equation}
where $\R \ni t \mapsto A(t) \in \mathfrak{gl}(n,\R)$ is a curve taking values in the Lie algebra of $(n \times n)$-matrices with real coefficients, and $b_{1}(t), \ldots, b_{n}(t) \in C^{\infty}(\mathbb{R})$ are arbitrary $t$-dependent functions. System \eqref{eq:affine_LieSystem} admits a VG Lie algebra isomorphic to the affine Lie algebra $\mathfrak{gl}(n, \mathbb{R}) \overrightarrow{\oplus} \mathbb{R}^{n}$, which is spanned by the vector fields $x_{i} \partial_{x_{j}}, \partial_{x_{i}}$, for $i, j = 1, \ldots, n$ \cite{Carinena2011}.

Let $Y$ denote the homogeneous system associated with $X$, namely
\begin{equation}
\dv{t} \begin{pmatrix}
x_{1} \\
\vdots \\
x_{n}
\end{pmatrix} = A(t) \begin{pmatrix}
x_{1} \\
\vdots \\
x_{n}
\end{pmatrix}.
    \label{eq:affine:inhomogeneous}
\end{equation}
Then, the general solution $x(t)$ of $X$ can be written as
$$x(t) = x_{(0)}(t) + \sum_{i = 1}^{n} k_{i} x_{(i)}(t),$$
where $x_{(0)}(t)$ is a particular solution of $X$, the functions $x_{(1)}(t), \ldots, x_{(n)}(t)$ are linearly independent particular solutions of $Y$, and $k_{1}, \ldots, k_{n}$ are constants. Furthermore, if the Wronskian $W := \det(x_{(1)}(t), \ldots, x_{(n)}(t))$ is non-zero at a certain $t_{0} \in \mathbb{R}$, then $W(t) \neq 0$ for all $t \in \mathbb{R}$. Accordingly, the expression
$$x(t) = x_{(0)}(t) + \frac{1}{W} \sum_{i = 1}^{n} k_{i} x_{(i)}(t),$$
where   $W$ is a constant different from zero  for every family of linearly independent particular solutions $x_{(1)}(t),\ldots,x_{(n)}(t)$, 
also defines a mixed superposition rule   $\Phi:\mathbb{R}^{ n(n+1)} \times \mathbb{R}^n\rightarrow \mathbb{R}^n$ for $X$ of the form
$$
x=x_{(0)}+\sum_{i=1}^nk_ix_{(i)},\qquad (k_1,\ldots,k_n)\in \mathbb{R}^n,
$$
provided that we have applied the rescaling $k_i/W\to k_i$.
This coincides with the mixed superposition rule presented in equation \eqref{eq:mixed:superposition} for the case of system \eqref{eq:mixed:inhom}.

We now show that, beyond affine Lie systems, any Lie system $X$ possessing a VG Lie algebra $V$ that is a semidirect sum, namely $V = V_{1}  \overrightarrow{\oplus} V_{2}$, also admits a mixed superposition rule involving particular solutions of $X$ and particular solutions of $Y$, where $Y$ is the part of $X$ taking values in $V_{1}$, that is, $Y(t) \in V_{1}$ and $X(t) - Y(t) \in V_{2}$ for all $t \in \R$.
\begin{prop} \label{prop:mixed}
    Let $X$ be a Lie system on $M$ that possesses a VG Lie algebra given by a semidirect sum $V = V_{1} \overrightarrow{\oplus} V_{2}$, and let $Y$ be the part of $X$ taking values in $V_{1}$.  Then, $X$ admits a mixed superposition rule depending on $s$ particular solutions of $X$ and $k$ particular solutions of $Y$, where $s, k \geq 1$ are such that $V_{1}^{[k]}$ and $V_{2}^{[s]}$ possess a modular basis. 
\end{prop}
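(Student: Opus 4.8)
The plan is to realise $X$ as one factor of a suitable direct product and then verify the injectivity criterion of Theorem~\ref{th:mixed_injective}. First I would choose a basis $\{X_1,\ldots,X_m,X_{m+1},\ldots,X_{m+\ell}\}$ of $V$ adapted to the decomposition, so that $X_1,\ldots,X_m$ span $V_1$ and $X_{m+1},\ldots,X_{m+\ell}$ span the ideal $V_2$. Writing $X_t=Y_t+Z_t$ with $Y_t\in V_1$ and $Z_t\in V_2$ for every $t$, the part $Y$ takes values in the finite-dimensional Lie algebra $V_1$ and is therefore a Lie system, which makes the statement meaningful. I would then form the direct product
$$
X_E := X^{[s+1]}\times Y^{[k]}
$$
on $M^{s+1}\times M^{k}$, where the first $s+1$ copies of $M$ accommodate the unknown solution of $X$ together with $s$ given particular solutions of $X$, and the remaining $k$ copies carry particular solutions of $Y$.

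The first substantive step is to identify a convenient VG Lie algebra of $X_E$. Expanding $X_E$ in the chosen basis shows that it takes values in the span $V_E$ of the vector fields
$$
A_a := X_a^{[s+1]}\times X_a^{[k]},\qquad B_b := X_{m+b}^{[s+1]}\times 0,
$$
for $a=1,\ldots,m$ and $b=1,\ldots,\ell$; that is, each $A_a$ is the diagonal prolongation of $X_a\in V_1$ to all $s+1+k$ copies, while each $B_b$ is the prolongation of $X_{m+b}\in V_2$ to the first $s+1$ copies only, vanishing identically on the $Y$-copies. Using \eqref{eq:bracket_direct} together with the facts that $V_1$ is a subalgebra and $V_2$ an ideal, one checks that $[A_a,A_{a'}]$ and $[B_b,B_{b'}]$ are again of type $A$ and $B$ respectively, and that $[A_a,B_b]$ is of type $B$ because $[X_a,X_{m+b}]\in V_2$; here the ideal property is essential, since it keeps the brackets localised on the first $s+1$ copies. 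Hence $X_a\mapsto A_a$, $X_{m+b}\mapsto B_b$ preserves the structure constants of $V$, so $V_E$ is a finite-dimensional Lie algebra isomorphic to $V$ and a VG Lie algebra of $X_E$; in particular $V^{X_E}\subset V_E$ and $\cD^{V^{X_E}}\subset\cD^{V_E}$.

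It then suffices to prove that $\T\,\pr_{\hat 0}\vert_{\cD^{V_E}}$ is injective on an open and dense subset, where $\pr_{\hat 0}$ forgets the copy of $M$ carrying the unknown solution. A generic element of $\cD^{V_E}$ is $\sum_a\lambda_a A_a+\sum_b\mu_b B_b$; its components on the $k$ copies of $Y$ are $\sum_a\lambda_a X_a$ evaluated at those copies, while its components on the $s$ remaining copies of $X$ are $\sum_a\lambda_a X_a+\sum_b\mu_b X_{m+b}$. Requiring these to vanish, I would argue in two stages: vanishing on the $Y$-copies forces $\sum_a\lambda_a X_a^{[k]}=0$, whence $\lambda_a=0$ for all $a$ at a generic point because $V_1^{[k]}$ admits a modular basis (and, by Proposition~\ref{prop:modular}, so does the basis $\{X_a^{[k]}\}$); substituting back, vanishing on the $s$ remaining $X$-copies forces $\sum_b\mu_b X_{m+b}^{[s]}=0$, whence $\mu_b=0$ for all $b$ at a generic point because $V_2^{[s]}$ admits a modular basis. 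The clean separation of these two stages is precisely what the ideal structure guarantees, since the $B_b$ drop out on the $Y$-copies. Intersecting the two open and dense generic sets yields the required injectivity, and since $\cD^{V^{X_E}}\subset\cD^{V_E}$, the criterion of Theorem~\ref{th:mixed_injective} applies and delivers the mixed superposition rule for $X$ in terms of $s$ particular solutions of $X$ and $k$ of $Y$.

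I expect the main obstacle to be the bookkeeping in the bracket computation and, above all, making precise that the ideal condition $[V_1,V_2]\subset V_2$ is what confines the $B$-type generators to the first $s+1$ copies, so that the two-stage injectivity argument goes through; the modular-basis hypotheses on $V_1^{[k]}$ and $V_2^{[s]}$ then supply the two independent non-degeneracy statements needed to conclude.
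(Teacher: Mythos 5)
Your proposal is correct and follows essentially the same route as the paper: you build the same direct product (your $X^{[s+1]}\times Y^{[k]}$ is the paper's $X\times Y^{[k]}\times X^{[s]}$ up to reordering of factors), identify the same extended VG Lie algebra spanned by full diagonal prolongations of the $V_1$-generators and by $V_2$-generators prolonged only to the $X$-copies, and conclude via the injectivity criterion of Theorem~\ref{th:mixed_injective}. Your explicit bracket computation and two-stage vanishing argument merely spell out what the paper compresses into the assertion that the projected generators form a modular basis on $M^{k+s}$.
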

\begin{proof} 
Let $\big\{Y_{1}^{[k]}, \ldots, Y_{\ell}^{[k]}\big\}$ and $\big\{X_{1}^{[s]}, \ldots, X_{m}^{[s]}\big\}$ be modular bases of $V_{1}^{[k]}$ and $V_{2}^{[s]}$, respectively. 
Recall that their existence is ensured by \cite[Proposition 1.17]{Carinena2011}. Consider the vector fields on $M\times M^k\times M^s$ given by
$$
Y_i\times Y^{[k]}_i\times Y^{[s]}_i,\qquad X_j
\times 0\times X^{[s]}_j,\qquad i=1,\ldots,\ell,\qquad j=1,\ldots,m.
$$
They form a modular basis for a VG Lie algebra  $V_E$ isomorphic to $V$ and related to  $X_E=X\times Y^{[k]}\times X^{[s]}$. Moreover,
$$
Y^{[k]}_i\times Y^{[s]}_i,\qquad  0\times X^{[s]}_j,\qquad i=1,\ldots,\ell,\qquad j=1,\ldots,m,
$$
form a modular basis on $M^{k+s}$. 
The map $\T \, \pr_{\hat{1}} \vert_{\cD^{V_{{E}}}}: \cD^{V_{{E}}} \to \T (M^{k+s})$ is  injective  on an open and dense subset of $M^{k+s+1}$, where $\pr_{\hat{1}}: M^{k+s+1} \to M^{k+s}$ denotes the projection onto the last $(k+s)$-copies. The conclusion then follows from Theorem~\ref{th:mixed_injective}. 
\end{proof}
In particular, from this result, one easily recovers the following
\begin{cor}
    Every affine Lie system \eqref{eq:affine_LieSystem} admits a mixed superposition rule in terms of one particular solution of \eqref{eq:affine_LieSystem} and $n$ particular solutions of the associated linear homogeneous system \eqref{eq:affine:inhomogeneous}. 
\end{cor}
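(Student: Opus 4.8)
The plan is to realise this corollary as the special case of Proposition~\ref{prop:mixed} obtained by choosing the semidirect sum decomposition dictated by the affine structure, and then to verify the modular-basis hypotheses with the concrete values $s = 1$ and $k = n$.

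First I would recall that \eqref{eq:affine_LieSystem} admits the VG Lie algebra $\mathfrak{gl}(n,\R)\,\overrightarrow{\oplus}\,\R^{n}$, spanned by $x_{i}\partial_{x_{j}}$ and $\partial_{x_{i}}$ for $i,j = 1,\ldots,n$. Setting $V_{1} := \langle x_{i}\partial_{x_{j}} : i,j = 1,\ldots,n\rangle \simeq \mathfrak{gl}(n,\R)$ and $V_{2} := \langle \partial_{x_{1}},\ldots,\partial_{x_{n}}\rangle \simeq \R^{n}$, we obtain $V = V_{1}\,\overrightarrow{\oplus}\,V_{2}$, and the part $Y$ of $X$ taking values in $V_{1}$ is precisely the homogeneous system \eqref{eq:affine:inhomogeneous}. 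Thus a mixed superposition rule furnished by Proposition~\ref{prop:mixed} is automatically expressed in terms of solutions of \eqref{eq:affine_LieSystem} and of \eqref{eq:affine:inhomogeneous}, as required.

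The remaining task is to check the two modular-basis conditions. For $V_{2}$ this is immediate with $s = 1$, since $\partial_{x_{1}},\ldots,\partial_{x_{n}}$ are linearly independent at every point of $\R^{n}$, so $V_{2}^{[1]} = V_{2}$ possesses a modular basis. For $V_{1}$ I would take $k = n$ and study a vanishing linear combination $\sum_{i,j} c_{i}^{j}\,(x_{i}\partial_{x_{j}})^{[n]} = 0$ of the diagonal prolongations $(x_{i}\partial_{x_{j}})^{[n]} = \sum_{a = 1}^{n} x_{(a)}^{i}\,\partial_{x_{(a)}^{j}}$ on $(\R^{n})^{n}$. Collecting the coefficients of $\partial_{x_{(a)}^{j}}$ shows that, for each fixed $j$, the covector $(c_{1}^{j},\ldots,c_{n}^{j})$ annihilates all $n$ vectors $x_{(a)} := (x_{(a)}^{1},\ldots,x_{(a)}^{n}) \in \R^{n}$. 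On the open and dense subset of $(\R^{n})^{n}$ where $\det(x_{(1)},\ldots,x_{(n)}) \neq 0$ these vectors form a basis of $\R^{n}$, forcing every $c_{i}^{j}$ to vanish; hence $V_{1}^{[n]}$ admits a modular basis.

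With both hypotheses in place, Proposition~\ref{prop:mixed} applies with $s = 1$ and $k = n$ and delivers the asserted mixed superposition rule. The only genuinely non-formal step is the generic linear independence of the prolonged generators of $V_{1}$, which I have reduced to the elementary observation that $n$ generic vectors in $\R^{n}$ form a basis; this is also the point at which $k = n$ emerges as the minimal value for which $V_{1}^{[k]}$ acquires a modular basis (for $k < n$ one has $\dim(\R^{n})^{k} = nk < n^{2}$, so the $n^{2}$ prolonged generators cannot be independent), and it constitutes the main---though mild---obstacle of the argument.
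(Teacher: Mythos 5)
Your proposal is correct and follows exactly the route the paper intends: the corollary is stated as an immediate specialisation of Proposition~\ref{prop:mixed} to the decomposition $V=\mathfrak{gl}(n,\R)\,\overrightarrow{\oplus}\,\R^{n}$, with the paper leaving the verification of the modular-basis hypotheses implicit. Your explicit check that $V_{2}^{[1]}$ is modular and that $V_{1}^{[n]}$ acquires a modular basis on the set where $\det\big(x_{(1)},\ldots,x_{(n)}\big)\neq 0$ correctly fills in those omitted details, yielding $s=1$ and $k=n$ as claimed.
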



\subsubsection{A time-dependent Calogero--Moser system subjected to an external force}
Let us now illustrate how Proposition~\ref{prop:mixed} can be applied to a system of physical interest, which serves as a representative example of the applicability of the preceding formalism.

Consider the system of ODEs on $\T^{*} \R^{2}_{q_{1}- q_{2} \neq 0}$, with coordinates $(q_{1}, q_{2}, p_{1}, p_{2})$, given by 
\begin{equation}
\begin{aligned}
    &\dv{q_{1}}{t} = p_{1}, \qquad &&\dv{q_{2}}{t} = p_{2},\\[2pt]
    &\dv{p_{1}}{t} = - \Omega^{2}(t)q_{1} + \frac{c}{(q_{1}- q_{2})^{3}} - f(t), \qquad && \dv{p_{2}}{t} = - \Omega^{2}(t) q_{2} - \frac{c}{(q_{1}-q_{2})^{3}} -f(t),
    \end{aligned}
    \label{eq:CM:system}
\end{equation}
where $\Omega(t), f(t) \in C^{\infty}(\R)$ are arbitrary $t$-dependent functions, and $c \in \R$ is a constant. This system corresponds to the $t$-dependent vector field 
$$
X = X_{2} - \Omega^{2}(t) X_{3} - f(t) X_{5},
$$
where the vector fields 
     \begin{align}        X_{1} &:= q_{1} \pdv{q_{1}} + q_{2} \pdv{q_{2}} - p_{1} \pdv{p_{1}} - p_{2} \pdv{p_{2}}, \qquad X_{2} := p_{1} \pdv{q_{1}} + p_{2} \pdv{q_{2}} + \frac{c}{(q_{1}- q_{2})^{3}} \left( \pdv{p_{1}} - \pdv{p_{2}} \right), \nonumber\\[2pt]
        X_{3}&:= q_{1} \pdv{p_{1}} + q_{2} \pdv{p_{2}},  \qquad X_{4}:= \pdv{q_{1}} + \pdv{q_{2}}, \qquad
        X_{5} := \pdv{p_{1}} + \pdv{p_{2}},
    \label{eq:CM:vf}
\end{align}
have non-vanishing commutation relations \eqref{eq:mixed:VG_cr}. Consequently, system \eqref{eq:CM:system} is a Lie system, with the vector fields above generating a VG Lie algebra $V$ isomorphic to the Schr{\"o}dinger algebra $\mathcal{S}(1)$. Moreover, the vector fields \eqref{eq:CM:vf}  are Hamiltonian vector fields relative to the canonical symplectic form  on $\T^{*}\R^{2}_{q_{1}-q_{2} \neq 0}$, given by
$$
\omega := \dd q_{1} \wedge \dd p_{1} + \dd q_{2} \wedge \dd p_{2}.
$$
Associated Hamiltonian functions turn out to be
\begin{equation*}
    \begin{aligned}
        &h_{1} := q_{1}p_{1} + q_{2}p_{2}, \quad &&h_{2}:= \frac{1}{2} \left( p_{1}^{2}+ p_{2}^{2} + \frac{c}{(q_{1}-q_{2})^{2}} \right),\\
        &h_{3}:= - \frac{1}{2} \big(q_{1}^{2} + q_{2}^{2}\big), \quad  && h_{4}:= p_{1}+ p_{2}, \qquad h_{5}:= - q_{1}- q_{2},
        \end{aligned}
\end{equation*}
satisfying $\iota_{X_{i}} \omega = \dd h_{i}$, for $i = 1, \ldots, 5$. 
Together with the constant Hamiltonian function $h_{0}:= 2$, these functions span a LH algebra $\cH_{\omega}$ with non-vanishing commutation relations \eqref{eq:mixed:LH_cr}, which is thus isomorphic to the centrally extended Schr{\"o}dinger algebra $\widehat{\mathcal{S}}(1)$. Hence, system \eqref{eq:CM:system} are the Hamilton's equations of the $t$-dependent Hamiltonian 
$$
h := h_{2} - \Omega^{2}(t)h_{3} - f(t) h_{5} = \frac{1}{2} \left( p_{1}^{2} + p_{2}^{2} + \frac{c}{(q_{1}-q_{2})^{2}} \right) + \frac{1}{2} \Omega^{2}(t) \big(q_{1}^{2} + q_{2}^{2}\big) + f(t) (q_{1}+ q_{2}).
$$
Observe that
$$
h^{\mathrm{CM}} := h_{2} - \Omega^{2}(t) h_{3} = \frac{1}{2} \left( p_{1}^{2} + p_{2}^{2} + \frac{c}{(q_{1}-q_{2})^{2}} \right) + \frac{1}{2} \Omega^{2}(t) \big(q_{1}^{2} + q_{2}^{2}\big)
$$
is a {\it $t$-dependent two-body rational Calogero--Moser Hamiltonian} \cite{Calogero1969,Calogero1971,Moser1975}. It is worth noting that Calogero--Moser systems have been extensively investigated from various mathematical perspectives (see \cite{Etingof2007} and references therein). Furthermore, the term
$$F^{\mathrm{ext}} := -f(t) h_{5} = f(t)(q_{1}+ q_{2})$$ 
can be regarded as a {\it $t$-dependent external force} \cite{Godbillon1969,deLeon1989,deLeon2021},  since $\dd h_{5} = \dd q_{1} + \dd q_{2}$ is a semi-basic form on $\T^{*} \R^{2}_{q_{1}- q_{2} \neq 0}$; that is, it annihilates all vertical vector fields of the bundle $\T^{*} \R^{2}_{q_{1}-q_{2} \neq 0} \to \R^{2}_{q_{1}- q_{2} \neq 0}$. Therefore, the Hamiltonian
$$
 h = h^{\mathrm{CM}} + F^{\mathrm{ext}}
$$
describes a $t$-dependent two-body rational Calogero--Moser Hamiltonian subjected to a $t$-dependent external force. 

Whenever $c \neq 0$, system \eqref{eq:CM:system} is not an affine Lie system. Nevertheless, since it possess a VG Lie algebra $V \simeq \mathcal{S}(1) = \mathfrak{sl}(2, \R)  \overrightarrow{\oplus} \R^{2}$,  Proposition~\ref{prop:mixed} can be applied. 
One has that $\langle X_{1}, X_{2}, X_{3} \rangle \simeq \mathfrak{sl}(2, \R)$ and $\langle X_{4}, X_{5} \rangle \simeq \R^{2}$ are modular Lie algebras of vector fields on  $\T^{*}\R^{2}_{q_{1} - q_{2} \neq 0}$. Therefore, system \eqref{eq:CM:system} admits a  mixed superposition rule involving one particular solution of $X$ and one particular solution of $Y$, where $Y = X_{2} - \Omega^{2}(t)X_{3}$ is the part of $X$ taking values in $\langle X_{1}, X_{2}, X_{3} \rangle$.


\section{Dirac--Lie systems} \label{Section:Dirac}
This section  recalls the main properties of the so-called {\it Dirac--Lie} systems introduced in \cite{Carinena2014}, and we examine direct products thereof. A salient feature of this class of Lie systems is that it encompasses several subclasses of Lie systems compatible with other geometric structures, such as  that of contact Lie systems of Liouville type (see Subsection~\ref{ex:Riccati_system}) and that of LH systems (cf.  system~\eqref{eq:mixed:inhom}), among others. Consequently, Dirac--Lie systems fit within a broad geometric framework for the study of Lie systems with compatible geometric structures. Let us begin by recalling the main aspects of Dirac manifolds \cite{Courant1990,Bursztyn2013,Carinena2014}. Recall that there are different conventions relative to the signs in the definitions of the structures used hereafter depending on the aims of purposes of each work \cite{Courant1990,Carinena2014}.

Let $\mathbb{T}M := \T M \oplus_{M} \T^{*}M$ be the  generalised tangent bundle of $M$. Consider the following two structures on $\mathbb{T}M$. First, the symmetric pairing defined at each $x \in M$ by 
\begin{equation}
\langle X_{x} + \alpha_{x}, Y_{x} + \beta_{x} \rangle_{+} := \frac{1}{2} (\alpha_{x}(Y_{x}) + \beta_{x}(X_{x})), \qquad X_{x}+ \alpha_{x}, Y_{x}+ \beta_{x} \in \mathbb{T}_{x}M = \T_{x}M \oplus \T_{x}^{*}M, 
    \label{eq:Dirac:pairing}
\end{equation}
and second, the so-called {\it Dorfman} bracket $\llbracket \cdot, \cdot \rrbracket: \Gamma(\mathbb{T}M) \times \Gamma(\mathbb{T}M) \to \Gamma(\mathbb{T}M)$ defined for all $X + \alpha, Y + \beta \in \Gamma(\mathbb{T}M)$ by \cite{Dorfman1987}
$$
\llbracket X+ \alpha, Y+ \beta \rrbracket := [X, Y] + \cL_{X} \beta - \iota_{Y} \dd \alpha. 
$$
An {\it almost Dirac manifold} is a pair $(M, L)$, where $L \subset \mathbb{T}M$ is a maximally isotropic subbundle with respect to the paring \eqref{eq:Dirac:pairing}. If, in addition, $L$ is involutive relative to the Dorfman bracket,  $(M, L)$ is a {\it Dirac manifold} \cite{Courant1990,Bursztyn2013,Carinena2014}. 

Let $(M, L)$ be a Dirac manifold. A vector field $X$ on $M$ is   {\it $L$-Hamiltonian} if there exists an $f \in C^{\infty}(M)$ such that $X + \dd f \in \Gamma(L)$. In this case, $f$ is an {\it $L$-Hamiltonian function for $X$} and an {\it admissible function} of $(M, L)$. The space $\mathrm{Adm}(M,L)$ of admissible functions of $(M, L)$ is a Poisson algebra relative to the pointwise multiplication of functions and  the Poisson bracket defined by\footnote{Note that $\{f,g\}_L:=Xg$ is another possible definition and, in general, there are different conventions in the signs used in the Poisson bracket, the Hamiltonian vector fields and other related structures.}
$$
\{f, g\}_{L} :=- X g,  
$$
where $X$ is an $L$-Hamiltonian vector field for $f$. 

Particular examples of Dirac manifolds include presymplectic and Poisson manifolds. A {\it presymplectic manifold} is a pair $(M, \omega)$, where $\omega$ is a closed two-form on $M$, not necessarily of constant rank. The possible degeneracy of $\omega$ provides a natural generalisation of the symplectic case. The associated vector bundle morphism $
\omega^{\flat} : \T M \ni v_{x} \mapsto \iota_{v_{x}}\omega_{x} \in \T^{*}M$
defines a Dirac structure via its graph
\[
L^{\omega} := \mathrm{graph}\big(\omega^{\flat}\big) = \{ v_{x} + \iota_{v_{x}} \omega_{x}: v_{x} \in \T_{x}M, \; \forall x \in M \}.
\]

Similarly, a Poisson manifold $(M, \Lambda)$ determines a Dirac structure through the graph of the vector bundle morphism $\Lambda^{\sharp}: \T^{*} M \ni \alpha_{x} \mapsto -\iota_{\alpha_{x}}\Lambda_{x} \in \T M$,
given explicitly by\footnote{The minus sign is given to recover the geometric mechanics relation between Hamiltonian vector fields  and the differential of their Hamiltonian functions $-\iota_{\dd f}\Lambda=X_f$.}
\[
L^{\Lambda} := \mathrm{graph}\big(\Lambda^{\sharp}\big) = \{ \alpha_{x} - \iota_{\alpha_{x}} \Lambda_{x} : \alpha_{x} \in \T_{x}^{*}M , \; \forall x \in M\}.
\]

A {\it Jacobi manifold} is a triple $(M, \Lambda, E)$, where $\Lambda \in \mathfrak{X} ^{2}(M)$ is a bivector field on $M$ and $E \in \mathfrak{X}(M)$ is a vector field on $M$, hereafter referred to as the {\it Reeb vector field} of the Jacobi manifold $(M, \Lambda, E)$,  such that 
$$
[\Lambda, \Lambda] = 2 E \wedge \Lambda, \qquad [E, \Lambda] = 0 
$$
with respect to the Schouten--Nijenhuis bracket $[\cdot, \cdot]$. Every Poisson manifold $(M, \Lambda)$ is a Jacobi manifold $(M, \Lambda, E = 0)$. Note that the definition of the Schouten--Nijenhuis bracket is the classical one used, for instance, in \cite{Schouten1953, Nijenhuis1955,Marle1997}. There exists a modern definition of the Schouten--Nijenhuis bracket, which differs from ours on a global proportional sign depending on the degree of $\Lambda$ (see \cite[Example 2.20]{Grabowski2013a} and references therein).

The class of Jacobi manifolds encompasses a relevant type of manifolds: the so-called {\it co-orientable contact manifolds}. A {\it (co-orientable) contact manifold} is a pair $(M, \eta)$, where $M$ is a $(2n+1)$-dimensional manifold and $\eta \in \Omega^{1}(M)$ is a {\it contact form} on $M$; that is,  a one-form $\eta$ on $M$ so that $\eta \wedge (\dd \eta)^{n}$ is a volume form on $M$. A contact manifold $(M, \eta)$ determines a unique vector field $\cR \in \mathfrak{X}(M)$, referred to as the {\it Reeb vector field} of $(M, \eta)$, characterised by the conditions
\begin{equation}
\iota_{\cR} \eta = 1, \qquad \iota_{\cR} \dd \eta = 0.
    \label{eq:Reeb}
\end{equation}
The contact form $\eta$ induces a vector bundle isomorphism 
$$\flat: \T M \ni v_{x} \mapsto \iota_{v_{x}} (\dd \eta)_{x} + (\iota_{v_{x}} \eta_{x})\eta_{x} \in \T^{*}M,$$ 
and denote by $\sharp = \flat^{-1}$ its inverse. One may then define the bivector field $\Lambda \in \mathfrak{X}(M)$ such that $\Lambda^{\sharp}(\alpha_{x}) = \sharp \alpha_{x} - (\iota_{\cR_{x}} \alpha_{x}) \cR_{x}$ and the vector field $E:= - \cR$. In this way, $(M, \Lambda, E)$ is a Jacobi manifold.

Assuming certain regularity conditions on the Reeb vector field $E$ of a Jacobi manifold $(M, \Lambda, E)$, we can also obtain a Dirac structure in a natural way \cite{Courant1990,Liu2000}. Let $(M, \Lambda, E)$ be a Jacobi manifold such that $E \neq 0$. In such case $\langle E \rangle$ defines a vector subbundle of $\T M$.  Denote by $\langle E \rangle^{\circ} \subset \mathrm{T}^{*}E$ is annihilator. Then 
$$
L^{\Lambda, E} := \langle E \rangle \oplus_{M} \mathrm{graph}\big(\Lambda^{\sharp} \vert_{\langle E \rangle^{\circ}} \big)
$$
is a Dirac structure on $M$. The space of {\it good Hamiltonian functions} of $(M, \Lambda, E)$ is \cite{Herranz2015}
$$
\Adm\big(M, L^{\Lambda, E}\big) = \set{f \in C^{\infty}(M): Ef = 0}.
$$
 Consider now $f \in \Adm\big(M, L^{\Lambda, E}\big)$, and
$$
-\overline{f}E - \Lambda^{\sharp}(\dd f) + \dd f \in \Gamma\big(L^{\Lambda, E}\big)
$$
for every $\overline{f} \in C^{\infty}(M)$. In particular, 
$$
X_{f} := -f E - \Lambda^{\sharp}(\dd f)
$$
is the Hamiltonian vector field associated to $f$ with respect to the Jacobi manifold $(M, \Lambda, E)$. Moreover, if $f, g\in \Adm\big(M, L^{\Lambda, E}\big)$ are two admissible functions, their Poisson bracket is 
\begin{equation*}
    \set{f,g}_{L^{\Lambda,E}} =- X_{f}g = \Lambda(\dd f, \dd g).
\end{equation*}
Identities \eqref{eq:Reeb} demonstrate that every contact manifold $(M, \eta)$ induces a Jacobi manifold $(M, \Lambda, E)$ with $E \neq 0$, and hence, from the previous construction, a Dirac structure $L^{\Lambda, E}$ on $M$. Note that the two-form $\omega := \dd \eta$ is an {\it exact presymplectic form} on $M$, and consequently the contact manifold $(M, \eta)$ also gives rise to a Dirac structure $L^{\omega}$ on $M$. The following result shows that both Dirac structures are identical. 
\begin{prop}
    Let $(M, \eta)$ be a contact manifold, and let $L^{\omega}$ and $L^{\Lambda, E}$ be the Dirac structures on $M$ induced by the presymplectic form $\omega = \dd \eta$ and the Jacobi structure $(\Lambda, E)$, respectively. Then, $L^{\omega} = L^{\Lambda, E}$. 
\end{prop}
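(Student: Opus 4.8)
The plan is to prove the equality by establishing the single inclusion $L^{\omega}\subseteq L^{\Lambda,E}$ and then invoking the fact that both are Dirac structures, hence maximally isotropic subbundles of $\mathbb{T}M$ of the same rank $\dim M$; an inclusion of vector subbundles of equal finite rank forces equality. The whole argument reduces to two pointwise identities for the musical isomorphism $\flat:\T M\to \T^{*}M$ attached to the contact form, together with careful bookkeeping of the conventions $E=-\cR$ and $\langle E\rangle^{\circ}=\set{\alpha:\alpha(\cR)=0}$.

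First I would record the two identities. Evaluating $\flat$ on the Reeb field and using \eqref{eq:Reeb} gives $\flat(\cR)=\iota_{\cR}\dd\eta+(\iota_{\cR}\eta)\eta=\eta$, so that $\sharp\eta=\cR$. Applying $\sharp$ to the defining relation $\flat(v)=\iota_{v}\dd\eta+\eta(v)\eta$ then yields, for every $v\in\T_{x}M$, the key identity $\sharp(\iota_{v}\dd\eta)=v-\eta(v)\cR$. I would also note that any $\alpha\in\langle E\rangle^{\circ}$ satisfies $\iota_{\cR}\alpha=\alpha(\cR)=0$, whence the extra term in $\Lambda^{\sharp}$ drops out and $\Lambda^{\sharp}\alpha=\sharp\alpha-(\iota_{\cR}\alpha)\cR=\sharp\alpha$.

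Second I would prove the inclusion. Take a generator $v+\iota_{v}\dd\eta\in L^{\omega}$ and set $\alpha:=\iota_{v}\dd\eta$. Then $\alpha(\cR)=\dd\eta(v,\cR)=-(\iota_{\cR}\dd\eta)(v)=0$, so $\alpha\in\langle E\rangle^{\circ}$ and, by the observation above, $\Lambda^{\sharp}\alpha=\sharp\alpha=v-\eta(v)\cR$. Hence $v=\eta(v)\cR+\Lambda^{\sharp}\alpha$, which exhibits $v+\iota_{v}\dd\eta=\eta(v)\cR+\big(\Lambda^{\sharp}\alpha+\alpha\big)$ as an element of $\langle E\rangle\oplus_{M}\mathrm{graph}\big(\Lambda^{\sharp}\vert_{\langle E\rangle^{\circ}}\big)=L^{\Lambda,E}$. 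This gives $L^{\omega}\subseteq L^{\Lambda,E}$, and the rank count closes the argument. If one prefers a symmetric proof, the reverse inclusion follows from the same identities: for $c\cR+\Lambda^{\sharp}\alpha+\alpha\in L^{\Lambda,E}$ with $\alpha(\cR)=0$, the vector part $w:=c\cR+\sharp\alpha$ satisfies $\iota_{w}\dd\eta=\iota_{\sharp\alpha}\dd\eta=\alpha$, the last step using $\iota_{\sharp\alpha}\eta=\alpha(\cR)=0$.

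I do not expect a genuine obstacle here: once $\sharp\eta=\cR$ and $\sharp(\iota_{v}\dd\eta)=v-\eta(v)\cR$ are in place, everything is fibrewise linear algebra. The only point demanding care is the interplay of the sign and annihilator conventions—namely $E=-\cR$, the condition $\alpha(\cR)=0$ defining $\langle E\rangle^{\circ}$, and the term $-(\iota_{\cR}\alpha)\cR$ in the definition of $\Lambda^{\sharp}$—which conspire so that $\Lambda^{\sharp}$ and $\sharp$ agree precisely on $\langle E\rangle^{\circ}$, exactly the locus on which the two Dirac structures must be compared.
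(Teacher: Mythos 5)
Your proposal is correct and follows essentially the same route as the paper: both inclusions rest on the fibrewise decomposition $v=\eta(v)\,\cR+\Lambda^{\sharp}(\iota_{v}\dd\eta)$ together with the observation that $\Lambda^{\sharp}$ and $\sharp$ agree on $\langle E\rangle^{\circ}$, and your explicit identity $\sharp(\iota_{v}\dd\eta)=v-\eta(v)\cR$ is exactly the computation the paper performs implicitly. The only (harmless) variation is that you offer the rank/maximal-isotropy count as an alternative to writing out the reverse inclusion, which the paper instead proves directly.
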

\begin{proof}
    Let $v_{x} + \iota_{v_{x}} \omega_{x} \in L_{x}^{\omega}$.  Since $\Lambda_{x}^{\sharp}(\alpha_{x}) \in \ker \eta_{x}$ and  $\alpha_{x}:= \iota_{v_{x}} \omega_{x} \in \langle E_{x} \rangle^{\circ}$, it follows that 
    $v_{x} = (\iota_{v_{x}} \eta_{x}) E_{x} + \Lambda^{\sharp}_{x}(\alpha_{x})$. Then, 
$$ v_{x} + \iota_{v_{x}} \omega_{x} = (\iota_{v_{x}}\eta_{x}) E_{x} + \alpha_{x}  + \Lambda_{x}^{\sharp}(\alpha_{x}) \in L^{\Lambda,E}_{x}.$$
Conversely, let $a E_{x} + \alpha_{x} + \Lambda^{\sharp}_{x}(\alpha_{x}) \in L^{\Lambda, E}$, where $a \in \R$, and consider $v_{x} := a E_{x} + \Lambda_{x}^{\sharp}(\alpha_{x})$. Since $\iota_{E_{x}} \alpha_{x} = 0$, we have that $\Lambda^{\sharp}_{x}(\alpha_{x}) = \sharp \alpha_{x}$. Thus, $ \iota_{\sharp \alpha_{x}} \eta_{x} =  \iota_{\Lambda^{\sharp}_{x}(\alpha_{x})} \eta_{x}  =  0$, yielding $\iota_{\sharp \alpha_{x}} \omega_{x} = \alpha_{x}$. Then, $\iota_{v_{x}} \omega_{x} = \alpha_{x}$ and, consequently, 
$$
a E_{x} + \alpha_{x} + \Lambda_{x}^{\sharp}(\alpha_{x})  = v_{x} + \iota_{v_{x}} \omega_{x} \in L_{x}^{\omega}. 
$$
\end{proof}

The relationships among these various geometric structures, relative to the standard inclusions discussed above, are summarised in the following diagram.


\begin{center}
    
 \tikzset{every picture/.style={line width=0.75pt}} 

\begin{tikzpicture}[x=0.75pt,y=0.75pt,yscale=-1,xscale=1]

\draw   (104,48.84) .. controls (104,24.66) and (123.61,5.05) .. (147.79,5.05) -- (496.71,5.05) .. controls (520.89,5.05) and (540.5,24.66) .. (540.5,48.84) -- (540.5,180.21) .. controls (540.5,204.39) and (520.89,224) .. (496.71,224) -- (147.79,224) .. controls (123.61,224) and (104,204.39) .. (104,180.21) -- cycle ;
\draw   (121.5,137.23) .. controls (121.5,127.18) and (129.65,119.03) .. (139.69,119.03) -- (509.31,119.03) .. controls (519.35,119.03) and (527.5,127.18) .. (527.5,137.23) -- (527.5,191.81) .. controls (527.5,201.85) and (519.35,210) .. (509.31,210) -- (139.69,210) .. controls (129.65,210) and (121.5,201.85) .. (121.5,191.81) -- cycle ;
\draw   (169.5,135.83) .. controls (169.5,129.45) and (174.67,124.29) .. (181.04,124.29) -- (252.96,124.29) .. controls (259.33,124.29) and (264.5,129.45) .. (264.5,135.83) -- (264.5,170.46) .. controls (264.5,176.83) and (259.33,182) .. (252.96,182) -- (181.04,182) .. controls (174.67,182) and (169.5,176.83) .. (169.5,170.46) -- cycle ;
\draw   (402.2,133.93) .. controls (402.2,129.56) and (405.74,126.02) .. (410.11,126.02) -- (502.29,126.02) .. controls (506.65,126.02) and (510.19,129.56) .. (510.19,133.93) -- (510.19,157.65) .. controls (510.19,162.01) and (506.65,165.55) .. (502.29,165.55) -- (410.11,165.55) .. controls (405.74,165.55) and (402.2,162.01) .. (402.2,157.65) -- cycle ;
\draw   (145.09,42.18) .. controls (145.09,27.29) and (157.16,15.22) .. (172.06,15.22) -- (252.96,15.22) .. controls (267.85,15.22) and (279.92,27.29) .. (279.92,42.18) -- (279.92,160.03) .. controls (279.92,174.93) and (267.85,187) .. (252.96,187) -- (172.06,187) .. controls (157.16,187) and (145.09,174.93) .. (145.09,160.03) -- cycle ;
\draw   (396.5,38.37) .. controls (396.5,25.37) and (407.04,14.83) .. (420.04,14.83) -- (490.65,14.83) .. controls (503.65,14.83) and (514.19,25.37) .. (514.19,38.37) -- (514.19,163.08) .. controls (514.19,176.08) and (503.65,186.61) .. (490.65,186.61) -- (420.04,186.61) .. controls (407.04,186.61) and (396.5,176.08) .. (396.5,163.08) -- cycle ;

\draw (313.19,9.45) node [anchor=north west][inner sep=0.75pt]   [align=left] {Dirac};
\draw (291.45,129.59) node [anchor=north west][inner sep=0.75pt]   [align=left] {presymplectic};
\draw (172.5,133.83) node [anchor=north west][inner sep=0.75pt]   [align=left] {\begin{minipage}[lt]{64.53pt}\setlength\topsep{0pt}
co-orientable 
\begin{center}
contact
\end{center}

\end{minipage}};
\draw (431.86,29.31) node [anchor=north west][inner sep=0.75pt]   [align=left] {Poisson};
\draw (422.06,132.24) node [anchor=north west][inner sep=0.75pt]   [align=left] {symplectic};
\draw (161.12,27.89) node [anchor=north west][inner sep=0.75pt]   [align=left] {\begin{minipage}[lt]{74.76pt}\setlength\topsep{0pt}
\begin{center}
Jacobi\\
with $E \neq 0$
\end{center}

\end{minipage}};

\end{tikzpicture}
\end{center}

Let us now recall the basic aspects of {\it Dirac--Lie systems} \cite{Carinena2014}. 

A {\it Dirac--Lie system} is a triple $(M, L, X)$, where $(M, L)$ is a Dirac manifold and $X$ is a Lie system on $M$ which admits a VG Lie algebra of $L$-Hamiltonian vector fields. From the previous discussion it follows that the Riccati system   \eqref{eq:ex:Riccati_system} and the inhomogeneous linear system \eqref{eq:mixed:hom} studied before are Dirac--Lie systems. Indeed, every Lie system $X$ is related to a Dirac--Lie system $(M,\T M,X)$, although this is rather useless, as $L$-Hamiltonian functions are the key structures to study Dirac--Lie systems, and every vector field is $\T M$-Hamiltonian with zero $\T M$-Hamiltonian function.

\begin{remark}
Let $\big(M, L^{\Lambda, E}\big)$ be the Dirac manifold induced by a Jacobi manifold $(M, \Lambda, E)$ with $E \neq 0$. Then, a Dirac--Lie system $\big(M, L^{\Lambda, E}, X\big)$ is a particular case of a Jacobi--Lie system \cite{Herranz2015}. If  $(M, \Lambda, E)$ is the Jacobi manifold induced by a co-orientable contact manifold $(M, \eta)$, then Dirac--Lie systems $\big(M, L^{\Lambda, E}, X\big)$ coincide with contact Lie systems of Liouville type relative to the contact form $\eta$. Similarly, if $\big(M, L^{\Lambda}\big)$ is the Dirac manifold arising from a Poisson bivector $\Lambda$,  then the Dirac--Lie systems $\big(M, L^{\Lambda}, X\big)$ reduce to LH systems with respect to $\Lambda$.
\end{remark}

It was proven in \cite[Theorem 6.4 ]{Carinena2014} that, for every finite-dimensional Lie algebra of $L$-Hamiltonian vector fields, there exists a finite-dimensional Lie algebra $\mathfrak{W} \subset \big(\Adm(M, L), \set{\cdot, \cdot}_{L}\big)$ consisting on $L$-Hamiltonian functions for the elements of $V$. Nevertheless, as also noted therein, the converse does not hold in general; that is, the fact that a Lie algebra of $L$-admissible functions is finite-dimensional does not imply that a corresponding Lie algebra of $L$-Hamiltonian  vector fields is finite-dimensional.


\subsection{Direct products of Dirac--Lie systems}

Let us now briefly show that the results proven in \cite{Carinena2014} for diagonal prolongations of Dirac structures hold analogously for direct products.

Recall that a {\it forward Dirac map} between two Dirac manifolds $(M, L_{M})$ and $(N, L_{N})$ is a  map $\Psi: M \to N$ such that 
$$
(L_{N})_{\Psi(x)} = (\T_{x}\Psi)_{!}(L_{M})_{x} := \set{\T_{x} \Psi(v_{x}) + \alpha_{x}\in  \mathbb{T}_{\Psi(x)}N: v_{x} + \T_{x}^{*}\Psi (\alpha_{x}) \in (L_{M})_{x}}
$$
for all $x \in M$. 

\begin{prop} \label{prop:Dirac_prod}
  Let $(M_{i}, L_{i})$ be Dirac manifolds for $i = 1, \ldots, k$. Then, the direct product $L_{1} \times \cdots \times L_{k}$ is a Dirac structure on $M_{1} \times \cdots \times M_{k}$. Moreover, the forward image of $L_{1} \times \cdots \times L_{k}$ through $\pr_{i}: M_{1} \times \cdots \times M_{k} \to M_{i}$ equals $L_{i}$, with $i = 1, \ldots, k$. 
\end{prop}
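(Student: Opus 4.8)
The plan is to prove the two assertions separately, exploiting the two equivalent descriptions of the direct product established in the preceding subsection. First I would reduce the isotropy and maximality claims to a pointwise, fibre-by-fibre statement. Since the pairing $\langle\cdot,\cdot\rangle_+$ on $\mathbb{T}(M_1\times\cdots\times M_k)$ splits, under the canonical identifications \eqref{eq:can_iden}, as the orthogonal direct sum of the pairings on the factors $\mathbb{T}M_i$, a subspace $(L_1)_{x_{(1)}}\oplus\cdots\oplus(L_k)_{x_{(k)}}$ is isotropic if and only if each summand is. Maximality then follows by a dimension count: each $(L_i)_{x_{(i)}}$ is maximally isotropic in $\mathbb{T}_{x_{(i)}}M_i$, hence has dimension $\dim M_i$, so their direct sum has dimension $\sum_i\dim M_i=\dim(M_1\times\cdots\times M_k)$, which is exactly the dimension of a maximally isotropic subspace of the generalised tangent bundle of the product. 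This settles the almost Dirac part.

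Next I would verify involutivity under the Dorfman bracket. Here the cleanest route is the Whitney-sum description \eqref{eq:dir_prod}: a local frame for $L_1\times\cdots\times L_k$ is furnished by the lifts $\widetilde{s}_{(i)}=\pr_i^*s_{(i)}$ of local frames $\{s_{(i)}\}$ of each $L_i$, as recorded after \eqref{eq:lift}. The key computation is that the Dorfman bracket of two such lifts coming from \emph{different} factors vanishes, while the bracket of two lifts from the \emph{same} factor $i$ equals the lift of their Dorfman bracket in $\mathbb{T}M_i$. The first fact rests on the observation that, for sections pulled back along complementary projections, the vector-field part of one annihilates the pulled-back form of the other and the relevant Lie brackets vanish, exactly as in the product formula \eqref{eq:bracket_direct}; the second is the naturality of $\llbracket\cdot,\cdot\rrbracket$ under $\pr_i$. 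Since each $L_i$ is involutive, every bracket of frame sections again lies in $\Gamma(L_1\times\cdots\times L_k)$, and involutivity follows. I expect this to be the main obstacle: one must treat the three constituents $[X,Y]$, $\mathcal{L}_X\beta$ and $\iota_Y\dd\alpha$ of the Dorfman bracket and check carefully that the cross-terms between factors drop out, which requires that exterior differentiation and Lie derivative commute appropriately with the pull-backs $\pr_i^*$.

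Finally, for the forward-image statement I would unwind the definition of forward Dirac map with $\Psi=\pr_i$. Because $\pr_i$ is a surjective submersion and the factor structure makes $\T_x\pr_i$ and its transpose act blockwise, the condition $v+\T^*\pr_i(\alpha)\in(L_1\times\cdots\times L_k)_x$ forces the components in the factors $j\neq i$ to be constrained by the other $L_j$, while $\T\pr_i$ projects the $i$-th slot isomorphically onto $\mathbb{T}_{x_{(i)}}M_i$. A short verification then shows the forward image $(\T_x\pr_i)_!(L_1\times\cdots\times L_k)_x$ is precisely $(L_i)_{x_{(i)}}$, using that each $L_j$ is maximally isotropic so its projection to the kernel directions is unobstructed. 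I would remark that this is the direct-product analogue of the corresponding result for diagonal prolongations in \cite{Carinena2014}, and the arguments are formally parallel once \eqref{eq:dir_prod} and \eqref{eq:can_iden} are in hand.
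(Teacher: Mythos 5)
Your argument is correct, and it is essentially the self-contained version of what the paper only sketches: the paper disposes of the first assertion by citing \cite[Proposition 2]{Jacobs2014} and of the second by taking the local frame of lifted sections $\widetilde{X}_{(i)}^{j}+\widetilde{\alpha}_{(i)}^{j}$ and invoking the ansatz of \cite[Proposition 7.3]{Carinena2014}, whereas you carry out the pointwise dimension count, the Dorfman-bracket computation on the lifted frame, and the unwinding of the forward-image condition explicitly. The blockwise splitting of the pairing and the frame $\{\pr_i^*s_{(i)}^j\}$ are exactly the ingredients the paper (and its references) rely on, so the two routes coincide in substance; what your version buys is independence from the external citations, at the cost of one point you leave implicit and should record: verifying involutivity on a local frame suffices only because the Dorfman bracket is $C^\infty(M)$-Leibniz in its second argument and, on an \emph{isotropic} subbundle, the anomalous term $2\langle e_1,e_2\rangle_+\,\dd f$ in the first argument vanishes, so closure under the bracket propagates from frame sections to arbitrary sections. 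For the forward image, your phrase about the projection being ``unobstructed'' can be made concrete in one line: given $v_{(i)}+\alpha_{x_{(i)}}\in (L_i)_{x_{(i)}}$, the element $(0,\ldots,v_{(i)},\ldots,0)+\T_x^*\pr_i(\alpha_{x_{(i)}})$ lies in $(L_1\times\cdots\times L_k)_x$ because $0\in (L_j)_{x_{(j)}}$ for $j\neq i$, which gives the reverse inclusion immediately.
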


\begin{proof}[Sketch of the proof]
The first statement was proved in \cite[Proposition 2]{Jacobs2014}. For the second, let $n_{i} := \dim M_{i}$ for $i = 1, \ldots, k$, and let $p = \big(x_{(1)}, \ldots, x_{(k)}\big) \in M_{1} \times \cdots \times M_{k}$ be any point. Consider a basis of local sections $\big\{X_{(i)}^{j} + \alpha_{(i)}^{j}: j = 1, \ldots, n_{i} \big\}$ of $L_{i}$ around $x_{(i)} \in M_{i}$. Then,   $\big\{\widetilde{X}_{(i)}^{j} + \widetilde{\alpha}_{(i)}^{j}: i = 1, \ldots, k, j = 1, \ldots, n_{i}\big\}$ is a basis of local sections of $L_{1} \times \cdots \times L_{k}$ around $p$. The result follows by adopting the same ansatz used in \cite[Proposition 7.3]{Carinena2014}.
\end{proof}

The canonical identifications \eqref{eq:can_iden} of the tangent (resp.~cotangent) bundle of $M_{1} \times \cdots \times M_{k}$ as the direct product of tangent (resp.~cotangent) bundles of $M_{i}$ gives rise to the canonical identification $\mathbb{T}(M_{1} \times \cdots \times M_{k}) = \mathbb{T}M_{1} \times \cdots  \times \mathbb{T}M_{k}$. Let $\rho_{i}$ and $\rho_{i}^{*}$ denote the canonical projections $\rho_{i}: \mathbb{T}M_{i} \to \T M_{i}$ and $\rho_{i}^{*}: \mathbb{T}M_{i} \to \T^{*}M_{i}$, for $i =1, \ldots, k$. Then,
\begin{equation*}
    \begin{split}
        &\rho_{1} \times \cdots \times \rho_{k}: \mathbb{T}(M_{1} \times \cdots \times M_{k}) \to \T (M_{1} \times \cdots \times M_{k}),\\
        &\rho_{1}^{*} \times \cdots \times \rho_{k}^{*}: \mathbb{T}(M_{1} \times \cdots \times M_{k}) \to \T^{*} (M_{1} \times \cdots \times M_{k})
    \end{split}
\end{equation*}
are the canonical projections of $\mathbb{T}(M_{1} \times \cdots \times M_{k})$ onto $\T(M_{1} \times \cdots \times M_{k})$ and $\T^{*}(M_{1} \times \cdots \times M_{k})$, respectively. 

The following results are immediate consequences of Proposition~\ref{prop:Dirac_prod}.

\begin{cor}
    Let $(M_{i}, L_{i})$ be Dirac manifolds for $i = 1, \ldots, k$. Then, 
    $$[\rho_{1} \times \cdots \times \rho_{k}](L_{1} \times \cdots \times L_{k}) = \rho_{1}(L_{1}) \times \cdots  \times \rho_{k}(L_{k}).$$
    Consequently, if the $X_{(i)}$ are $L_{i}$-Hamiltonian vector fields on $M_{i}$, their direct product $X_{(1)} \times \cdots \times X_{(k)}$ is an $(L_{1} \times \cdots \times L_{k})$-Hamiltonian vector field. Moreover, 
    $$
    [\rho_{1}^{*} \times \cdots \times \rho_{k}^{*}](L_{1} \times \cdots \times L_{k}) = \rho_{1}^{*}(L_{1}) \times \cdots \times \rho_{k}^{*}(L_{k}).
    $$
\end{cor}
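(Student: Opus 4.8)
The plan is to reduce all three assertions to the elementary behaviour of direct products under the canonical fibre-wise projections, together with the compatibility of the exterior derivative with pull-backs; none of the three requires more than Proposition~\ref{prop:Dirac_prod} and the identifications recorded just above the statement.

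First I would establish the two displayed identities by a pointwise argument. Fix a point $p = \big(x_{(1)}, \ldots, x_{(k)}\big)$ of $M_{1} \times \cdots \times M_{k}$. By the definition of the direct product of subbundles and the identification $\mathbb{T}(M_{1} \times \cdots \times M_{k}) = \mathbb{T}M_{1} \times \cdots \times \mathbb{T}M_{k}$, the fibre $(L_{1} \times \cdots \times L_{k})_{p}$ is precisely the Cartesian product $(L_{1})_{x_{(1)}} \times \cdots \times (L_{k})_{x_{(k)}}$. Since $\rho_{1} \times \cdots \times \rho_{k}$ acts on such a product componentwise, $\big(w_{(1)}, \ldots, w_{(k)}\big) \mapsto \big(\rho_{1}(w_{(1)}), \ldots, \rho_{k}(w_{(k)})\big)$, and the image of a Cartesian product under a product map equals the product of the images, the fibre of the left-hand side at $p$ coincides with $\rho_{1}\big((L_{1})_{x_{(1)}}\big) \times \cdots \times \rho_{k}\big((L_{k})_{x_{(k)}}\big)$, which is the fibre of $\rho_{1}(L_{1}) \times \cdots \times \rho_{k}(L_{k})$ at $p$. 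This gives the first identity, and the identity for the cotangent projections follows verbatim upon replacing each $\rho_{i}$ by $\rho_{i}^{*}$.

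For the Hamiltonian statement I would argue by direct construction. Assume each $X_{(i)}$ is $L_{i}$-Hamiltonian, so that there is $f_{i} \in C^{\infty}(M_{i})$ with $X_{(i)} + \dd f_{i} \in \Gamma(L_{i})$. Set $F := \lambda(f_{1}, \ldots, f_{k}) = \pr_{1}^{*}f_{1} + \cdots + \pr_{k}^{*}f_{k}$. Because $\dd$ commutes with pull-backs, $\dd F = \pr_{1}^{*}\dd f_{1} + \cdots + \pr_{k}^{*}\dd f_{k} = \dd f_{1} \times \cdots \times \dd f_{k}$. Forming the direct product of the sections $X_{(i)} + \dd f_{i} \in \Gamma(L_{i})$, which by Proposition~\ref{prop:Dirac_prod} is a section of $L_{1} \times \cdots \times L_{k}$, and splitting it into its tangent and cotangent parts under the identification above, one obtains exactly $\big(X_{(1)} \times \cdots \times X_{(k)}\big) + \dd F$. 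Hence $F$ is an $(L_{1} \times \cdots \times L_{k})$-Hamiltonian function for $X_{(1)} \times \cdots \times X_{(k)}$.

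The one point demanding genuine care, and the step I would treat as the main obstacle, is the bookkeeping of the canonical identifications in the Hamiltonian part: one must verify that the direct product of the $\mathbb{T}M_{i}$-sections $X_{(i)} + \dd f_{i}$ decomposes, under $\mathbb{T}(M_{1} \times \cdots \times M_{k}) = \mathbb{T}M_{1} \times \cdots \times \mathbb{T}M_{k}$, with tangent component the direct product vector field $X_{(1)} \times \cdots \times X_{(k)}$ and cotangent component $\dd f_{1} \times \cdots \times \dd f_{k}$. This is guaranteed by the compatibility of the two splittings in \eqref{eq:can_iden}, together with the identity $\dd F = \dd f_{1} \times \cdots \times \dd f_{k}$ noted above; once these identifications are made explicit the remaining computation is routine.
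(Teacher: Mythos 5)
Your proposal is correct and follows essentially the route the paper intends: the corollary is stated there as an immediate consequence of Proposition~\ref{prop:Dirac_prod}, with the two displayed identities reducing to the fibrewise fact that $(L_{1}\times\cdots\times L_{k})_{p}=(L_{1})_{x_{(1)}}\times\cdots\times(L_{k})_{x_{(k)}}$ together with the componentwise action of the projections, exactly as you argue. Your construction of the Hamiltonian function $\lambda(f_{1},\ldots,f_{k})$ and the decomposition $\big(X_{(1)}\times\cdots\times X_{(k)}\big)+\dd\,\lambda(f_{1},\ldots,f_{k})=\big(X_{(1)}+\dd f_{1}\big)\times\cdots\times\big(X_{(k)}+\dd f_{k}\big)$ coincides with the computation the paper itself carries out in the proof of Proposition~\ref{prop:adm_Dirac}(1).
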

\begin{cor}
    Let $\big(M_{i}, L_{i}, X_{(i)}\big)$ be Dirac--Lie systems for $i = 1, \ldots, k$. Then, 
    $$\big(M_{1} \times \cdots \times M_{k}, L_{1} \times \cdots \times L_{k}, X_{(1)} \times \cdots \times X_{(k)}\big)
    $$ is a Dirac--Lie system. 
\end{cor}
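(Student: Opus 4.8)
The plan is to check directly the two requirements in the definition of a Dirac--Lie system for the triple in the statement: that the base pair is a Dirac manifold, and that the direct product of the $t$-dependent vector fields admits a Vessiot--Guldberg Lie algebra all of whose elements are Hamiltonian relative to the product Dirac structure. The first requirement is already granted: by Proposition~\ref{prop:Dirac_prod}, $L_{1} \times \cdots \times L_{k}$ is a Dirac structure on $M_{1} \times \cdots \times M_{k}$, so $(M_{1} \times \cdots \times M_{k}, L_{1} \times \cdots \times L_{k})$ is a Dirac manifold and nothing further is needed here.

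For the second requirement, I would first produce the candidate Vessiot--Guldberg Lie algebra. For each $i$, let $V_{i}$ be a VG Lie algebra of $L_{i}$-Hamiltonian vector fields for $X_{(i)}$, which exists by hypothesis, and set $V_{E} := V_{1} \times \cdots \times V_{k}$, their direct product. This is a finite-dimensional Lie algebra of vector fields on $M_{1} \times \cdots \times M_{k}$: finite-dimensionality follows from the fact established earlier that a direct product of finite-dimensional Lie algebras of vector fields is finite-dimensional, and the bracket rule \eqref{eq:bracket_direct} identifies it with $\widetilde{V}_{1} \oplus \cdots \oplus \widetilde{V}_{k}$, the direct sum of the mutually commuting lifts $\widetilde{V}_{i} \simeq V_{i}$. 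Since $(X_{(i)})_{t} \in V_{i}$ for every $t \in \R$, the definition of the direct product of $t$-dependent vector fields gives $(X_{(1)} \times \cdots \times X_{(k)})_{t} = (X_{(1)})_{t} \times \cdots \times (X_{(k)})_{t} \in V_{E}$, so $X_{(1)} \times \cdots \times X_{(k)}$ is a Lie system admitting $V_{E}$ as a VG Lie algebra.

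It then remains to verify that every element of $V_{E}$ is $(L_{1} \times \cdots \times L_{k})$-Hamiltonian. The key observation is that, by \eqref{eq:lift} and the pointwise description of direct products, an arbitrary element $\sum_{i} \widetilde{Z}_{(i)}$ of $V_{E} = \widetilde{V}_{1} \oplus \cdots \oplus \widetilde{V}_{k}$, with $Z_{(i)} \in V_{i}$, coincides with the single direct product $Z_{(1)} \times \cdots \times Z_{(k)}$. As each $Z_{(i)}$ is $L_{i}$-Hamiltonian, the preceding corollary applies and shows that $Z_{(1)} \times \cdots \times Z_{(k)}$ is $(L_{1} \times \cdots \times L_{k})$-Hamiltonian. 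Hence $V_{E}$ is a VG Lie algebra of $(L_{1} \times \cdots \times L_{k})$-Hamiltonian vector fields, and the triple is a Dirac--Lie system.

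The argument is essentially an assembly of Proposition~\ref{prop:Dirac_prod} and the preceding corollary, so I do not expect a genuine obstacle. The only delicate point is the identification of a general element of the direct sum $\widetilde{V}_{1} \oplus \cdots \oplus \widetilde{V}_{k}$ with one direct product $Z_{(1)} \times \cdots \times Z_{(k)}$: the preceding corollary is phrased for direct products of individual $L_{i}$-Hamiltonian fields rather than for arbitrary elements of $V_{E}$, and it is precisely the identity $\sum_{i} \widetilde{Z}_{(i)} = Z_{(1)} \times \cdots \times Z_{(k)}$ that closes this gap and lets the corollary be invoked verbatim.
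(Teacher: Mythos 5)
Your proof is correct and follows exactly the route the paper intends: the paper presents this corollary as an immediate consequence of Proposition~\ref{prop:Dirac_prod} together with the preceding corollary, and your argument simply makes that explicit by taking $V_E = V_1\times\cdots\times V_k$ as the VG Lie algebra. Your ``delicate point'' is handled correctly: since each $\widetilde{Z}_{(i)}$ is itself a direct product (with zero vector fields in the remaining factors) and $Z_{(1)}\times\cdots\times Z_{(k)}=\sum_i\widetilde{Z}_{(i)}$, every element of $V_E$ is indeed a single direct product of $L_i$-Hamiltonian vector fields, so the preceding corollary applies verbatim.
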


Recall from Subsection~\ref{subsection:direct_products} that, given functions $f_{(i)} \in C^{\infty}(M_{i})$, with $i = 1, \ldots, k$, their direct product is the function  $\lambda\big(f_{(1)}, \ldots, f_{(k)}\big)  \in C^{\infty}(M_{1} \times \cdots \times M_{k})$ given by 
$$
 \lambda\big(f_{(1)}, \ldots, f_{(k)}\big)= \pr_{1}^{*}f_{(1)} + \cdots + \pr_{k}^{*}f_{(k)}.
$$
\begin{prop} \label{prop:adm_Dirac} Let $(M_{i}, L_{i})$ be Dirac manifolds, let $X_{(i)}$ be a vector field, and let $f_{(i)}$ be a function on $M_{i}$ for $i = 1, \ldots, k$. Then:
   \begin{enumerate}[label=(\arabic*), font=\normalfont]
    \item If $f_{(i)}$ is an $L_{i}$-Hamiltonian function for $X_{(i)}$,   the direct product $\lambda\big(f_{(1)}, \ldots, f_{(k)}\big)$ is an $(L_{1} \times \cdots \times L_{k})$-Hamiltonian function for the direct product $X_{(1)} \times \cdots \times X_{(k)}$; and
    \item The map 
    $$ \big(\mathrm{Adm}(M_{i}, L_{i}), \set{\cdot, \cdot}_{L_{i}}\! \big) \ni f_{(i)} \mapsto \pr_{i}^{*} f_{(i)} \in \big(\mathrm{Adm}(M_{1} \times \cdots \times M_{k} ), \set{\cdot, \cdot}_{L_{1} \times \cdots \times L_{k}} \!\big)$$
     is an injective morphism of Poisson algebras. Consequently, the map
    \begin{equation}
     \big(f_{(1)}, \ldots, f_{(k)}\big) \mapsto \lambda\big(f_{(1)}, \ldots f_{(k)}\big),
        \label{eq:prol_morph}
    \end{equation}
    mapping admissible functions $f_{(i)} \in \Adm(M_{i}, L_{i})$ to their direct product is a morphism of Lie algebras.
\end{enumerate}
\end{prop}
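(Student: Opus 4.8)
The plan is to reduce both assertions to the behaviour of direct products under the canonical identification $\mathbb{T}(M_{1} \times \cdots \times M_{k}) = \mathbb{T}M_{1} \times \cdots \times \mathbb{T}M_{k}$ recalled above, together with the observation that each lift $\widetilde{X}_{(i)} = \pr_{i}^{*} X_{(i)}$ is $\pr_{i}$-related to $X_{(i)}$ and $\pr_{j}$-related to the zero vector field for $j \neq i$. Writing $M := M_{1} \times \cdots \times M_{k}$ and $L := L_{1} \times \cdots \times L_{k}$, which is a Dirac structure by Proposition~\ref{prop:Dirac_prod}, I would prove (1) as follows. Since the exterior derivative commutes with pull-backs,
$$
\dd \lambda\big(f_{(1)}, \ldots, f_{(k)}\big) = \sum_{i = 1}^{k} \dd \pr_{i}^{*} f_{(i)} = \sum_{i = 1}^{k} \pr_{i}^{*} \dd f_{(i)} = \dd f_{(1)} \times \cdots \times \dd f_{(k)}.
$$
Hence, under the identification of $\mathbb{T}M$ with $\mathbb{T}M_{1} \times \cdots \times \mathbb{T}M_{k}$, the section $\big(X_{(1)} \times \cdots \times X_{(k)}\big) + \dd \lambda\big(f_{(1)}, \ldots, f_{(k)}\big)$ is exactly the direct product $\big(X_{(1)} + \dd f_{(1)}\big) \times \cdots \times \big(X_{(k)} + \dd f_{(k)}\big)$ of the sections $X_{(i)} + \dd f_{(i)} \in \Gamma(L_{i})$, and by the definition of $L$ as a direct product it lies in $\Gamma(L)$.

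For (2), I would first note that the single-factor map is the special case of (1) in which all but the $i$-th entry vanish: taking $f_{(j)} = 0$, which is $L_{j}$-Hamiltonian for the zero field since the zero section lies in $\Gamma(L_{j})$, shows that $\pr_{i}^{*} f_{(i)} = \lambda\big(0, \ldots, f_{(i)}, \ldots, 0\big)$ is $L$-admissible with $\widetilde{X}_{(i)}$ as an $L$-Hamiltonian vector field. Injectivity is immediate because $\pr_{i}$ is a surjective submersion, and $\pr_{i}^{*}$ preserves the algebra structure because it is a pull-back. For the bracket, given $f_{(i)}, g_{(i)} \in \Adm(M_{i}, L_{i})$ with $L_{i}$-Hamiltonian vector fields $X_{(i)}, Y_{(i)}$, the relatedness of $\widetilde{X}_{(i)}$ to $X_{(i)}$ yields $\widetilde{X}_{(i)}\big(\pr_{i}^{*} g_{(i)}\big) = \pr_{i}^{*}\big(X_{(i)} g_{(i)}\big)$, so that
$$
\big\{\pr_{i}^{*} f_{(i)}, \pr_{i}^{*} g_{(i)}\big\}_{L} = -\widetilde{X}_{(i)}\big(\pr_{i}^{*} g_{(i)}\big) = -\pr_{i}^{*}\big(X_{(i)} g_{(i)}\big) = \pr_{i}^{*}\big\{f_{(i)}, g_{(i)}\big\}_{L_{i}},
$$
where I use that $\{\cdot, \cdot\}_{L}$ is well-defined independently of the chosen Hamiltonian vector field.

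The final (``consequently'') claim then follows by bilinearity of the bracket: expanding $\big\{\lambda\big(f_{(1)}, \ldots, f_{(k)}\big), \lambda\big(g_{(1)}, \ldots, g_{(k)}\big)\big\}_{L} = \sum_{i, j} \big\{\pr_{i}^{*} f_{(i)}, \pr_{j}^{*} g_{(j)}\big\}_{L}$, the diagonal terms reproduce $\sum_{i} \pr_{i}^{*}\{f_{(i)}, g_{(i)}\}_{L_{i}} = \lambda\big(\{f_{(1)}, g_{(1)}\}_{L_{1}}, \ldots, \{f_{(k)}, g_{(k)}\}_{L_{k}}\big)$ by the previous step, while each off-diagonal term vanishes because $\widetilde{X}_{(i)}$ is $\pr_{j}$-related to zero for $j \neq i$, giving $\big\{\pr_{i}^{*} f_{(i)}, \pr_{j}^{*} g_{(j)}\big\}_{L} = -\widetilde{X}_{(i)}\big(\pr_{j}^{*} g_{(j)}\big) = 0$. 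I expect the only genuinely delicate point to be the identification step in (1): one must verify that the canonical isomorphism $\mathbb{T}M \cong \mathbb{T}M_{1} \times \cdots \times \mathbb{T}M_{k}$ sends the sum of the direct products of vector fields and of differentials to the direct product of the summands $X_{(i)} + \dd f_{(i)}$, which rests on the compatibility of the identifications \eqref{eq:can_iden} and on $\dd$ commuting with $\pr_{i}^{*}$; everything else reduces formally to the relatedness of the lifts $\widetilde{X}_{(i)}$.
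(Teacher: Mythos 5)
Your proof is correct and follows essentially the same route as the paper's: part (1) via the identification of the sum section with the direct product of the sections $X_{(i)} + \dd f_{(i)}$, and part (2) via the $\pr_{i}$-relatedness of the lifts $\widetilde{X}_{(i)}$ and the fact that pull-backs are injective algebra morphisms. The only difference is that you spell out the bilinear expansion and the vanishing of the off-diagonal brackets $\{\pr_{i}^{*}f_{(i)},\pr_{j}^{*}g_{(j)}\}_{L}=0$ for $i\neq j$ in the final step, which the paper leaves implicit behind the remark that $\lambda(f_{(1)},\ldots,f_{(k)})=\sum_{i}\pr_{i}^{*}f_{(i)}$.
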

\begin{proof}
Let us prove both points. 
\begin{itemize} 
    \item[(1)] Since $f_{(i)}$ is an $L_{i}$-Hamiltonian function for $X_{(i)}$, then $X_{(i)} + \dd f_{(i)} \in \Gamma(L_{i})$. Hence, 
$$
\big(X_{(1)} \times \cdots \times X_{(k)} \big) + \dd \big(f_{(1)}\times \ldots\times f_{(k)} \big) = \big(X_{(1)} + \dd f_{(1)}\big) \times \cdots \times \big(X_{(k)} + \dd f_{(k)}\big) \in \Gamma(L_{1}\times \cdots  \times L_{k}),
$$
as required.
\item[(2)] Let $f_{(i)}, g_{(i)} \in \mathrm{Adm}(M_{i}, L_{i})$ for $i = 1, \ldots, k$. Since 
$$
\big(\mathrm{Adm}(M_{i}, L_{i}), \set{\cdot, \cdot}_{L_{i}}\! \big) \ni f_{(i)} \mapsto \pr_{i}^{*} f_{(i)} \in \big(\mathrm{Adm}(M_{1} \times \cdots \times M_{k}), \set{\cdot, \cdot}_{L_{1} \times \cdots \times L_{k}}\!\big)
 $$
  is the restriction of $\pr_{i}^{*}$ to $\Adm(M_{i}, L_{i})$, it is clear that is an injective morphism of $\R$-algebras. Moreover, it is also a morphism of Poisson algebras, since  
\begin{equation*}
    \begin{split}
    \pr_{i}^{*}\big(\{f_{(i)}, g_{(i)}\}_{L_{i}}\big) &= -\pr_{i}^{*}\big(X_{f_{(i)}} g_{(i)}\big) = -\widetilde{X}_{f_{(i)}}\big(\pr_{i}^{*}g_{(i)}\big) \\
    &=- X_{\pr_{i}^{*}f_{(i)}}\big(\pr_{i}^{*}g_{(i)}\big) = \set{\pr_{i}^{*}\big(f_{(i)}\big), \pr_{i}^{*}\big(g_{(i)}\big)}_{L_{1} \times \cdots \times L_{k}}
    \end{split}
\end{equation*}
for every $i = 1, \ldots, k$. Finally, observe that  $\lambda\big(f_{(1)}, \ldots, f_{(k)}\big)= \pr_{1}^{*}f_{(1)} + \cdots + \pr_{k}^{*}f_{(k)}$.
\end{itemize}
\end{proof}

In general, the map \eqref{eq:prol_morph} is not a morphism of $\R$-algebras, and hence neither a morphism of Poisson algebras. For example, consider the Hamiltonian function $h_{1} = xy$ on $\R^{2}$ from \eqref{eq:mixed:Ham}. Then, its diagonal prolongation  to $(\R^{2})^{2}$ is $x_{(1)}y_{(1)} + x_{(2)}y_{(2)} \neq \big(x_{(1)}+ x_{(2)})(y_{(1)} + y_{(2)}\big) = \lambda(x, x) \lambda(y, y)$.


\section{Coalgebra formalism for Dirac--Lie systems}\label{Section:coalgebras}
Let us develop a novel coalgebra-based framework to study Dirac--Lie systems, generalising the ideas introduced in \cite{Carinena2014} and geometrically improving the method proposed in \cite{Ballesteros2013} for LH systems.


\subsection{On the underlying algebraic structures}
The coalgebra method for LH systems \cite{Ballesteros2013} relies on the Poisson coalgebra structure of the symmetric algebra $ S(\g) $ of a finite-dimensional real Lie algebra $\g$. A subsequent formulation in \cite{Carinena2014} replaced $S(\g)$ by $C^{\infty}(\g^{*})$, where $\g^{*}$ is equipped with the KKS bracket, albeit without addressing its underlying coalgebra structure. This subsection presents a careful treatment of the coalgebra structure of $C^{\infty}(\g^{*})$, which requires recalling the Fr\'echet topology on spaces of smooth functions and some properties of  topological tensor products. Particular attention is devoted to the case in which $ \g $ is a semidirect sum, since this naturally leads to a topological formulation of Poisson comodule algebras~\cite{Ballesteros2002}.


\subsubsection{The Fr\'echet topology on the space of smooth functions}
 Let us describe the standard nuclear and Fr\'echet topology on the space $C^{\infty}(M)$ of smooth functions of $M$. Denote by $J^{k}(M, \mathbb{R})$ the space of $k$-jets on $M$, with $0 \leq k \leq \infty$. The space of continuous sections of the bundle $J^{k}(M, \R) \to M$ is canonically identified with the space $C\big(J^{k}(M, \R)\big)$ of continuous functions $J^{k}(M, \R) \to \R$. Equip $C\big(J^{k}(M, \R)\big)$ with the {\it $C^{\infty}$-compact open topology}; that is, the topology generated by the subbasis consisting of sets of the form
 $$\big\{f \in C\big(J^{k}(M, \R)\big): f(K) \subset W\big\},  $$
 where $K$ ranges over all compact subsets of $J^{k}(M, \R)$ and $W$ ranges over all open subsets of $\R$. We now consider the initial topology on $C^{\infty}(M)$ induced by the $k$-jet extensions 
\[ j^{k}: C^{\infty}(M) \to C\big(J^{k}(M, \R)\big).\]
With this topology, $C^{\infty}(M)$ is a nuclear and Fr\'echet space \cite[p. 67]{Kriegl1997}. 

Given two manifolds $M$ and $N$, one can endow the algebraic tensor product $C^{\infty}(M) \otimes C^{\infty}(N)$ with two main topologies (see \cite[Chapter 43]{Treves1967}).
\begin{itemize}
    \item The {\it projective tensor product topology} (also known as the $\pi$-topology) is the strongest locally convex topology making the canonical bilinear map $C^{\infty}(M) \times C^{\infty}(N) \to C^{\infty}(M) \otimes C^{\infty}(N)$ continuous. 
    \item  The {\it injective tensor product topology} (also referred to as the $\varepsilon$-topology) is the finest locally convex topology making the canonical bilinear map $C^{\infty}(M) \times C^{\infty}(N) \to C^{\infty}(M) \otimes C^{\infty}(N)$ separately continuous.
\end{itemize}
 We denote by $C^{\infty}(M) \,\widehat{\otimes}_{\pi}\, C^{\infty}(N)$ and by $C^{\infty}(M) \, \widehat{\otimes}_{\varepsilon} \, C^{\infty}(N)$ the completions of $C^{\infty}(M) \otimes C^{\infty}(N)$ with respect to the projective and the injective tensor product topologies, respectively. As both $C^{\infty}(M)$ and $C^{\infty}(N)$ are nuclear and  Fr\'echet spaces, it follows that $$C^{\infty}(M) \, \widehat{\otimes}_{\pi} \, C^{\infty}(N) = C^{\infty}(M) \, \widehat{\otimes}_{\varepsilon} \, C^{\infty}(N).$$
 Accordingly, we shall omit the indices $\pi$ and $\varepsilon$, and write $\widehat{\otimes}$ for both $\widehat{\otimes}_{\pi}$ and $\widehat{\otimes}_{\varepsilon}$. A noteworthy result, originally due to Grothendieck \cite[pp. 104--106]{Grothendieck1952}, establishes that
$$
C^{\infty}(M) \, \widehat{\otimes} \, C^{\infty}(N) = C^{\infty}(M \times N). 
$$
Within this context, from now on we identify $C^{\infty}(M) \otimes C^{\infty}(N)$  with the subspace of $C^{\infty}(M \times N)$ consisting of  finite linear combinations of elements of the form $(\pr_{1}^{*}f)(\pr_{2}^{*}g)$, with $f \in C^{\infty}(M), g \in C^{\infty}(N)$, and $\pr_{1}: M \times N \to M$ and $\pr_{2}: M \times N \to N$ the canonical projections. 

Consider a map $\Psi: M \to N$. Its pull-back $\Psi^{*}: C^{\infty}(N) \to C^{\infty}(M)$ is continuous with respect to the Fr\'echet topologies. Given  maps $\Psi_{i}: M_{i} \to N_{i}$, for $i = 1, 2$, the  tensor product of their pull-backs is a continuous linear map 
$\Psi_{1}^{*} \otimes \Psi_{2}^{*}: C^{\infty}(N_{1}) \otimes C^{\infty}(N_{2}) \to C^{\infty}(M_{1}) \otimes C^{\infty}(M_{2})$, where both tensor products are endowed with either the projective or the injective tensor product topology \cite[Proposition 43.6]{Treves1967}. The map $\Psi_{1}^{*} \otimes \Psi_{2}^{*}$ can be uniquely extended to a continuous  linear map $\Psi_{1}^{*} \, \widehat{\otimes} \, \Psi_{2}^{*}: C^{\infty}(M_{1} \times M_{2}) \to C^{\infty}(N_{1} \times N_{2})$. The pull-back of the product map $\Psi_{1} \times \Psi_{2}: M_{1} \times M_{2} \to N_{1} \times N_{2}$ is a continuous linear map $(\Psi_{1} \times \Psi_{2})^{*} : C^{\infty}(N_{1} \times N_{2}) \to C^{\infty}(M_{1} \times M_{2})$ which coincides with $\Psi_{1}^{*} \times \Psi_{2}^{*}$ on $C^{\infty}(N_{1}) \otimes C^{\infty}(N_{2})$. Hence, 
$$\Psi_{1}^{*} \, \widehat{\otimes} \, \Psi_{2}^{*} = (\Psi_{1} \times \Psi_{2})^{*}.$$


\subsubsection{Well-behaved topological Poisson coalgebras and comodules} \label{subsection:well_Poisson}
Let us now examine the Poisson coalgebra structure of $C^{\infty}(\g^{*})$, where $\g$ is a finite-dimensional real Lie algebra. 

First of all, recall that the KKS bracket on $\g^{*}$ is given by 
$$
\set{f, g}(\xi):= \langle \xi, [\dd f_{\xi}, \dd g_{\xi}] \rangle, \qquad f, g \in C^{\infty}(\g^{*}), \qquad \xi \in \g^{*},
$$
where $\dd f_{\xi}, \dd g_{\xi}: \T_{\xi} \g^{*} \to \R$ are naturally identified as elements of $\g \simeq (\T_{\xi} \g^{*})^{*}$ and $\langle \cdot, \cdot \rangle$ denotes the canonical pairing between $\g^{*}$ and $\g$. Then, 
$$
m: \g^{*} \times \g^{*} \ni (\xi_{1}, \xi_{2}) \mapsto \xi_{1} + \xi_{2} \in \g^{*}
$$
is a Poisson map, where $\g^{*} \times \g^{*}$ is equipped with the product Poisson manifold structure.  
Hence, the pull-back of $m$ is a morphism of Poisson algebras 
$$
\Delta: = m^{*}: C^{\infty}(\g^{*}) \to C^{\infty}(\g^{*}) \, \widehat{\otimes}\, C^{\infty}(\g^{*}) = C^{\infty}(\g^{*} \times \g^{*}). 
$$
From the associativity of $m$, it follows that 
\[\begin{tikzcd}
	{C^{\infty}(\mathfrak{g}^{*}) } && {C^{\infty}(\g^{*}) \, \widehat{\otimes}\, C^{\infty}(\g^{*})} \\
	 {C^{\infty}(\g^{*}) \, \widehat{\otimes}\, C^{\infty}(\g^{*})} && {C^{\infty}(\g^{*}) \, \widehat{\otimes}\, C^{\infty}(\g^{*}) \, \widehat{\otimes}\, C^{\infty}(\g^{*})}
	\arrow["\Delta", from=1-1, to=1-3]
	\arrow["\Delta"', from=1-1, to=2-1]
	\arrow["{\Delta \, \widehat{\otimes} \,\mathrm{id}}", from=1-3, to=2-3]
	\arrow["{\mathrm{id} \, \widehat{\otimes} \,\Delta}"', from=2-1, to=2-3]
\end{tikzcd}\]
is a commutative diagram of morphisms of Poisson algebras.

The latter entails that the triple $(C^{\infty}(\mathfrak{g}^*), \{\cdot, \cdot\}, \Delta)$ constitutes a {\it well-behaved topological Poisson coalgebra} \cite{Bonneau1994, Bidegain1996}. This refers to the standard definition of a Poisson coalgebra $(\mathcal{A}, \set{\cdot, \cdot}_{\mathcal{A}}, \Delta_{\mathcal{A}})$ given in Subsection~\ref{subsection:coalgebras} with the distinction that the underlying vector space $\mathcal{A}$ is now  a nuclear, and Fr\'echet or dual of a Fr\'echet space,  the structure maps are considered to be continuous in the Fr\'echet topology, and the algebraic tensor product $\otimes$ is replaced by the topological tensor product $\widehat{\otimes}$. The principal advantage of this formulation lies in its enhanced geometric character: the coproduct arises from a Poisson map $m: \g^{*} \times \g^{*} \to \g^{*}$, and the Poisson structure on $C^{\infty}(\mathfrak{g}^{*} \times \mathfrak{g}^{*})$ is given by the product Poisson structure on $\mathfrak{g}^{*} \times \mathfrak{g}^{*}$.

The  {\it $k^{th}$-order coproduct map} $\Delta^{[k]} : C^{\infty}(\g^{*}) \to C^{\infty}\big((\g^{*})^{k}\big)$,
recursively defined by 
$$
\Delta^{[2]} := \Delta, \qquad \Delta^{[k]}: = \big(\mathrm{id}^{\widehat{\otimes}(k-2)} \, \widehat{\otimes} \,  \Delta^{[2]}\big) \circ \Delta^{[k-1]} \qquad k > 2,
$$
where $\mathrm{id}^{\widehat{\otimes}(k-2)} := \mathrm{id} \, \widehat{\otimes} \, \overset{(k-2)}{\cdots} \, \widehat{\otimes} \, \mathrm{id}$,
is a morphism of Poisson algebras.  Equivalently, $\Delta^{[k]}$ is the pull-back of the Poisson map   $m^{[k]}: (\g^{*})^{k} \to \g^{*}$ recursively defined by 
$$
m^{[2]}:= m, \qquad m^{[k]} :=  m^{[k-1]} \circ \big(\mathrm{id}^{\times (k-2)} \times m^{[2]}\big) \qquad k > 2,
$$
where $\mathrm{id}^{\times (k-2)}:= \mathrm{id} \times \overset{(k-2)}{\cdots} \times \mathrm{id}$. 

Let $\h$ be a Lie subalgebra of $\g$, and consider $\p \subset \g$ a vector subspace such that $\g = \h \oplus \p$ as a direct sum of vector spaces. Then, the annihilator $\p^{\circ} \subset \g^{*}$ of $\p$ is a Poisson submanifold of $\g^{*}$ if, and only if, $\p$ is an ideal of $\g$. In such case, $\g$ is a semidirect sum $\g = \h \overrightarrow{\oplus} \p$, and $\p^{\circ}$ and $\h^{*}$ are canonically isomorphic Poisson manifolds \cite[Example 8.16]{Crainic2021}. 

Suppose now that $\g$ is a semidirect sum $\g = \h \overrightarrow{\oplus} \p$, and consider the Poisson submanifold $\h^{*} \subset \g^{*}$. Then, 
$$\overline{m}: \g^{*} \times \h^{*}\ni (\xi_{1}, \xi_{2}) \mapsto \xi_{1} + \xi_{2} \in   \g^{*}$$
is a Poisson map. We denote its pull-back by 
$$
\varphi:= \overline{m}^{*}: C^{\infty}(\g^{*}) \to C^{\infty}(\g^{*} \times \h^{*}).
$$
 Clearly, 
\[\begin{tikzcd}
	{\mathfrak{g}^{*}} && {\mathfrak{g}^{*} \times \mathfrak{h}^{*}} \\
	{\mathfrak{g}^{*} \times \mathfrak{h}^{*}} && {\mathfrak{g}^{*} \times \mathfrak{h}^{*} \times \mathfrak{h}^{*}}
	\arrow["\overline{m}"', from=1-3, to=1-1]
	\arrow["\overline{m}", from=2-1, to=1-1]
	\arrow["{\overline{m} \times \mathrm{id}}"', from=2-3, to=1-3]
	\arrow["{\mathrm{id} \times \overline{m}}", from=2-3, to=2-1]
\end{tikzcd}\]
is a commutative diagram of Poisson maps and, therefore, 
\[\begin{tikzcd}
	{C^{\infty}(\mathfrak{g}^{*})} && {C^{\infty}(\mathfrak{g}^{*} \times \mathfrak{h}^{*})} \\
	{C^{\infty}(\mathfrak{g}^{*} \times \mathfrak{h}^{*})} && {C^{\infty}(\mathfrak{g}^{*} \times \mathfrak{h}^{*} \times \mathfrak{h}^{*})}
	\arrow["\varphi", from=1-1, to=1-3]
	\arrow["\varphi"', from=1-1, to=2-1]
	\arrow["{\varphi \, \widehat{\otimes} \, \mathrm{id}}", from=1-3, to=2-3]
	\arrow["{\mathrm{id} \, \widehat{\otimes} \, \Delta}"', from=2-1, to=2-3]
\end{tikzcd}\]
is a commutative diagram of morphism of Poisson algebras. 
In analogy with the notion of well-behaved topological (co)algebras \cite{Bonneau1994}, we refer to $C^{\infty}(\g^{*})$ as a {\it well-behaved topological $C^{\infty}(\h^{*})$-Poisson comodule algebra}. This terminology extends the standard definition of a Poisson $\mathcal{B}$-comodule algebra $\mathcal{A}$ \cite{Ballesteros2002}, as recalled in Subsection~\ref{subsection:coalgebras}, to the setting where the vector spaces $\mathcal{A}$ and $\mathcal{B}$ are nuclear, Fr\'echet or dual of a Fr\'echet, all structure maps are continuous with respect to the Fr\'echet topology, and the algebraic tensor product $\otimes$ is replaced by the topological tensor product $\widehat{\otimes}$.

Similarly to the $k^{\textup{th}}$-order coproduct, the coaction $\varphi$ also gives rise to a {\it $\ell^{\textup{th}}$-order coaction map} $\varphi^{[\ell]}: C^{\infty}(\g^{*}) \to C^{\infty}\big(\g^{*} \times (\h^{*})^{\ell -1}\big)$
recursively defined by \cite{Ballesteros2002}
\begin{equation}
\varphi^{[2]} := \varphi, \qquad \varphi^{[\ell]} := \big(\varphi^{[2]} \, \widehat{\otimes} \, \mathrm{id}^{\widehat{\otimes} (\ell -2)}\big) \circ \varphi^{[\ell -1]}, \qquad \ell  >2,
    \label{eq:lth_coaction}
\end{equation}
which is a morphism of Poisson algebras. Of course, this is just the pull-back of the Poisson map $\overline{m}^{[\ell]}: \g^{*} \times (\h^{*})^{\ell-1} \to \g^{*}$ recursively defined by 
$$ 
\overline{m}^{[2]} := \overline{m}, \qquad \overline{m}^{[\ell]} := \overline{m}^{[\ell -1]} \circ \big(\overline{m}^{[2]} \times \mathrm{id}^{\times (\ell -2)} \big), \qquad \ell > 2.
$$
Combining with the $k^{\textup{th}}$-order coproduct $\Delta^{[k]}$ of $C^{\infty}(\g^{*})$ and the $\ell^{\textup{th}}$-order coaction $\varphi^{[\ell]}$, we obtain the following commutative diagram of morphisms of Poisson algebras
\[\begin{tikzcd}
	{C^{\infty}(\mathfrak{g}^{*})} && {C^{\infty}\big(\mathfrak{g}^{*} \times (\mathfrak{h}^{*})^{\ell -1}\big)} \\
	\\
	{C^{\infty}\big((\mathfrak{g}^{*})^{k}\big)} && {C^{\infty}\big((\mathfrak{g}^{*})^{k} \times (\mathfrak{h}^{*})^{\ell -1}\big)}
	\arrow["{\varphi^{[\ell]}}", from=1-1, to=1-3]
	\arrow["{\Delta^{[k]}}"', from=1-1, to=3-1]
	\arrow["{\Delta^{[k]} \, \widehat{\otimes} \, \mathrm{id}^{\widehat{\otimes}(\ell -1)}}", from=1-3, to=3-3]
	\arrow["{\mathrm{id}^{\widehat{\otimes}(k-1)} \, \widehat{\otimes} \, \varphi^{[\ell]}}"', from=3-1, to=3-3]
\end{tikzcd}\]


\subsection{On momentum maps and time-independent constants of motion}
This section develops the coalgebra method for Dirac--Lie systems, extending some results from \cite{Carinena2014} and introducing new techniques that will be useful in the construction of mixed superposition rules.

It was shown in \cite[Proposition 3.2]{Carinena2014} that the Poisson algebra $\big(\Adm(M,L), \set{\cdot, \cdot}_{L}\!\big)$ of admissible functions of a Dirac manifold $(M,L)$ is a {\it closed geometrical $\R$-algebra} in the sense of \cite[p.~33]{Nestruev2020}. That is, if $f_{1}, \ldots, f_{r} \in \Adm(M, L)$ and $F \in C^{\infty}(\R^{r})$, then $F(f_{1}, \ldots, f_{r}) \in \Adm(M, L)$.  This is one of the key properties underlying the coalgebra method for Dirac--Lie systems.  

An {\it infinitesimal $L$-Hamiltonian action} of a finite-dimensional Lie algebra $\g$ on a Dirac manifold $(M, L)$ is a morphism of Lie algebras $\phi: \g \to \Adm(M,L)$. The {\it momentum map} associated to $\phi$ is the map $\J_{\phi}: M \to \g^{*}$ given by 
$$
\langle \J_{\phi}(x), v \rangle  := [\phi(v)](x), \qquad x\in M, \qquad  v \in \g. 
$$
Then, 
$$
\J_{\phi}^{*}: C^{\infty}(\g^{*}) \ni f \mapsto f \circ \J_{\phi} \in  \Adm(M, L)
$$
is a morphism of Poisson algebras  taking  values on $\Adm(M, L)$. 
Let $\g_{i}$, for $i = 1, \ldots, k$, be finite-dimensional real Lie algebras.  Let $f_{\mathbf{v}}$ denote the linear function on $\g_{1}^{*} \times \cdots \times \g_{k}^{*}$ associated to a $\mathbf{v} := (v_{(1)}, \ldots, v_{(k)}) \in \g_{1} \times \cdots \times \g_{k}$, given by $f_{\mathbf{v}}(\xi_{(1)}, \ldots, \xi_{(k)}) = \sum_{i = 1}^{k} \langle \xi_{(i)},v_{(i)}\rangle $ for every $(\xi_{(1)}, \ldots, \xi_{(k)}) \in \g_{1}^{*} \times \cdots \g_{k}^{*}$.
\begin{prop} \label{prop:momentum}
    Let $(M_{i}, L_{i})$ be Dirac manifolds, and let $\phi_{i}: \g_{i} \to \Adm(M_{i}, L_{i})$ be Lie algebra morphisms for finite-dimensional Lie algebras $\g_{i}$, with $i = 1, \ldots, k$. Then,  $\J_{\phi_{1}} \times \cdots \times \J_{\phi_{k}}: M_{1} \times \cdots \times M_{k} \to \g_{1}^{*} \times \cdots \times \g_{k}^{*}$ induces a Poisson algebra morphism 
    $$
    (\J_{\phi_{1}} \times \cdots \times \J_{\phi_{k}})^{*}: C^{\infty}(\g_{1}^{*} \times \cdots \times \g_{k}^{*}) \to \Adm(M_{1} \times \cdots \times M_{k}, L_{1} \times \cdots \times L_{k}).
    $$
\end{prop}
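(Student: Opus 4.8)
The plan is to assemble the individual Hamiltonian actions $\phi_i$ into a single infinitesimal Hamiltonian action of the direct-sum Lie algebra on the product Dirac manifold, and then invoke the general momentum-map result recalled just above the proposition. First I would set $\g := \g_1 \oplus \cdots \oplus \g_k$, the direct sum of Lie algebras with componentwise bracket, and observe that its dual $\g^{*} = \g_1^{*} \times \cdots \times \g_k^{*}$, endowed with its KKS bracket, coincides with the product Poisson manifold built from the individual KKS structures. This is immediate on the linear functions $f_{\mathbf{v}}$, whose differentials are constant and equal to $\mathbf{v}$, so that $\{f_{\mathbf{v}}, f_{\mathbf{w}}\}(\xi) = \langle \xi, [\mathbf{v}, \mathbf{w}]\rangle = \sum_{i} \langle \xi_{(i)}, [v_{(i)}, w_{(i)}]\rangle$; in particular the brackets across distinct factors vanish, which is exactly the product Poisson structure. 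Consequently $C^{\infty}(\g^{*}) = C^{\infty}(\g_1^{*} \times \cdots \times \g_k^{*})$ with the product Poisson bracket, and no topological tensor product is needed since we work on the single manifold $\g^{*}$.

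Next I would define $\phi: \g \to \Adm(M_1 \times \cdots \times M_k, L_1 \times \cdots \times L_k)$ by $\phi(v_{(1)}, \ldots, v_{(k)}) := \sum_{i=1}^{k} \pr_i^{*}\phi_i(v_{(i)})$ and check that it is a morphism of Lie algebras. By Proposition~\ref{prop:adm_Dirac}(2), each $\pr_i^{*}$ restricts to an injective morphism of Poisson algebras $\Adm(M_i, L_i) \to \Adm(M_1 \times \cdots \times M_k, L_1 \times \cdots \times L_k)$, so brackets within a single factor are respected. The one computation requiring care is the vanishing of the cross-brackets $\{\pr_i^{*} f_{(i)}, \pr_j^{*} g_{(j)}\}_{L_1 \times \cdots \times L_k} = 0$ for $i \neq j$. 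This is already essentially contained in the proof of Proposition~\ref{prop:adm_Dirac}(2): the $(L_1 \times \cdots \times L_k)$-Hamiltonian vector field of $\pr_i^{*} f_{(i)}$ is the lift $\widetilde{X}_{f_{(i)}}$, which by \eqref{eq:lift} equals $\pr_i^{*} X_{f_{(i)}}$ and hence has components only along the $i^{\mathrm{th}}$ factor; applied to $\pr_j^{*} g_{(j)}$, a function depending solely on the $j^{\mathrm{th}}$ factor, it yields zero. Combining this with the componentwise bracket of $\g$ gives $\phi([\mathbf{v}, \mathbf{w}]) = \{\phi(\mathbf{v}), \phi(\mathbf{w})\}_{L_1 \times \cdots \times L_k}$, so $\phi$ is an infinitesimal $(L_1 \times \cdots \times L_k)$-Hamiltonian action of $\g$.

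Finally, I would identify the associated momentum map. From $\langle \J_{\phi}(x_1, \ldots, x_k), \mathbf{v} \rangle = [\phi(\mathbf{v})](x_1, \ldots, x_k) = \sum_i [\phi_i(v_{(i)})](x_i) = \sum_i \langle \J_{\phi_i}(x_i), v_{(i)}\rangle$, it follows that $\J_{\phi} = \J_{\phi_1} \times \cdots \times \J_{\phi_k}$ under the identification $\g^{*} = \g_1^{*} \times \cdots \times \g_k^{*}$. Since $\phi$ is an infinitesimal Hamiltonian action, the general momentum-map result recalled above guarantees that $\J_{\phi}^{*} = (\J_{\phi_1} \times \cdots \times \J_{\phi_k})^{*}$ is a morphism of Poisson algebras with image contained in $\Adm(M_1 \times \cdots \times M_k, L_1 \times \cdots \times L_k)$, which is exactly the assertion. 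I expect the cross-bracket vanishing to be the only genuine obstacle, and even that is largely inherited from Proposition~\ref{prop:adm_Dirac}; the remainder is bookkeeping around the direct-sum identification.
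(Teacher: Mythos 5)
Your proposal is correct, but it follows a genuinely different route from the paper's proof. The paper verifies directly that $(\J_{\phi_{1}} \times \cdots \times \J_{\phi_{k}})^{*}$ is a morphism of Poisson algebras: it is an $\mathbb{R}$-algebra morphism because it is a pull-back, it lands in $\Adm(M_{1}\times\cdots\times M_{k}, L_{1}\times\cdots\times L_{k})$ because that algebra is a closed geometrical $\mathbb{R}$-algebra, and the Lie bracket is checked on the linear functions $f_{\mathbf{v}}$ through a chain of identities that combines Proposition~\ref{prop:adm_Dirac} (to convert the bracket of direct products into the direct product of componentwise brackets) with the fact that each $\phi_{i}$ is a Lie algebra morphism and each $\J_{\phi_{i}}^{*}$ a Poisson morphism. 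You instead package the data into a single infinitesimal $(L_{1}\times\cdots\times L_{k})$-Hamiltonian action $\phi$ of the direct sum $\g_{1}\oplus\cdots\oplus\g_{k}$, identify $\J_{\phi}$ with the product of the $\J_{\phi_{i}}$, and invoke the single-manifold momentum-map statement recalled just before the proposition. The two arguments rest on the same ingredients — your cross-bracket vanishing $\{\pr_{i}^{*}f_{(i)},\pr_{j}^{*}g_{(j)}\}=0$ for $i\neq j$ is exactly what underlies the paper's use of Proposition~\ref{prop:adm_Dirac}(2), and both proofs ultimately lean on the unproved assertion that $\J_{\phi}^{*}$ is a Poisson morphism for a single Dirac manifold — but your reduction is more conceptual and makes the role of the direct-sum Lie algebra explicit, at the cost of deferring the actual bracket verification to the quoted single-manifold fact rather than exhibiting it. Two small points you leave implicit and should mention: that $(M_{1}\times\cdots\times M_{k}, L_{1}\times\cdots\times L_{k})$ is indeed a Dirac manifold (Proposition~\ref{prop:Dirac_prod}), and that agreement of two Poisson brackets on linear functions suffices to identify them, since the bivector at each point is determined by its action on differentials of linear coordinates.
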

\begin{proof}
    Note that $(\J_{\phi_{1}} \times \cdots \times \J_{\phi_{k}})^{*}$ takes values in $\Adm(M_{1} \times \cdots \times M_{k}, L_{1} \times \cdots \times L_{k})$ since every $\J_{\phi_{i}}^{*}$ takes values in $\Adm(M_{i}, L_{i})$, and $\Adm(M_{1} \times \cdots \times M_{k}, L_{1} \times \cdots \times L_{k})$ is a closed geometrical $\R$-algebra. As $(\J_{\phi_{1}} \times \cdots \times \J_{\phi_{k}})^{*}$ is a pull-back, it is a morphism of $\R$-algebras. Let us now prove that it is a morphism of Lie algebras. Let $\mathbf{v} = (v_{(1)}, \ldots, v_{(k)}), \mathbf{w} = (w_{(1)}, \ldots, w_{(k)}) \in \g_{1} \times \cdots \times \g_{k}$. By definition,
\begin{multline}
    \set{(\J_{\phi_{1}} \times \cdots \times \J_{\phi_{k}})^{*} f_{\mathbf{v}}, (\J_{\phi_{1}} \times \cdots \times\J_{\phi_{k}})^{*} f_{\mathbf{w}}}_{L_{1} \times \cdots \times L_{k}} \\
    = \set{f_{\mathbf{v}} \circ (\J_{\phi_{1}} \times \cdots \times \J_{\phi_{k}}), f_{\mathbf{w}} \circ (\J_{\phi_{1}} \times \cdots \times  \J_{\phi_{k}})}_{L_{1} \times \cdots \times L_{k}}.
    \label{eq:p1}
\end{multline}
From Proposition~\ref{prop:adm_Dirac}, it follows that 
\begin{multline}
     \set{f_{\mathbf{v}} \circ (\J_{\phi_{1}} \times \cdots \times \J_{\phi_{k}}), f_{\mathbf{w}} \circ (\J_{\phi_{1}} \times \cdots \times  \J_{\phi_{k}})}_{L_{1} \times \cdots \times L_{k}}      \\[2pt]
   = \set{\lambda\big(\phi_{1}(v_{(1)}), \ldots, \phi_{k}(v_{(k)})\big), \lambda\big(\phi_{1}(w_{(1)}), \ldots, \phi_{k}(w_{(k)}) \big)}_{L_{1} \times \cdots \times L_{k}}  \\[2pt]
      = \lambda \big(\{\phi_{1}(v_{(1)}), \phi_{1}(w_{(1)})\}_{L_{1}}, \ldots, \{\phi_{k}(v_{(k)}), \phi_{k}(w_{(k)})\}_{L_{k}} \big),
    \label{eq:p2}
\end{multline}
and, since each $\phi_{i}: \g_{i} \to \Adm(M_{i}, L_{i})$ is a morphism of Lie algebras and each $\J_{\phi_{i}}^{*}: C^{\infty}(\g_{i}^{*}) \to \Adm(M_{i}, L_{i})$ is a morphism of Poisson algebras, it follows that 
\begin{multline}
    \lambda  \big(\{\phi_{1}(v_{(1)}), \phi_{1}(w_{(1)})\}_{L_{1}}, \ldots, \{\phi_{k}(v_{(k)}), \phi_{k}(w_{(k)})\}_{L_{k}} \big) = \lambda \big(\phi_{1}([v_{(1)}, w_{(1)}]), \ldots, \phi_{k}([v_{(k)}, w_{(k)}]) \big) \\[2pt]
     = \lambda \big(f_{[v_{(1)}, w_{(1)}]} \circ \J_{\phi_{1}}, \ldots, f_{[v_{(k)}, w_{(k)}]} \circ \J_{\phi_{k}} \big) = \lambda\big(\J_{\phi_{1}}^{*}\big(\{f_{v_{(1)}}, f_{w_{(1)}}\}_{\g_{1}^{*}} \big), \ldots, \J_{\phi_{k}}^{*} \big(\{f_{v_{(k)}}, f_{w_{(k)}}\}_{\g_{k}^{*}} \big) \big) \\[2pt]
     = \big(\J_{\phi_{1}} \times \cdots \times \J_{\phi_{k}}\big)^{*}\big(\{f_{\mathbf{v}}, f_{\mathbf{w}}\}_{\g_{1}^{*} \times \cdots \times \g_{k}^{*}}\big).
     \label{eq:p3}
\end{multline}
The identities \eqref{eq:p1}--\eqref{eq:p3} show that $\big(\J_{\phi_{1}} \times \cdots \times \J_{\phi_{k}}\big)^{*}$ is a morphism of Lie algebras and, since it is also a morphism of $\R$-algebras, it becomes a morphism of Poisson algebras. 
\end{proof}

\begin{cor}
Let $(M_{i}, L_{i})$ be Dirac manifolds, and let $\phi_{i}: \g \to \Adm(M_{i}, L_{i})$ be infinitesimal $L_{i}$-Hamiltonian actions of a finite-dimensional real Lie algebra $\g$, for $i = 1, \ldots, k$. If $C \in C^{\infty}(\g^{*})$ is a Casimir function, then $\big(\J_{\phi_{1}} \times \cdots \times \J_{\phi_{k}}\big)^{*}\big[\Delta^{[k]}(C) \big]$ commutes with  the direct products $\lambda\big(\phi_{1}(v), \ldots, \phi_{k}(v)\big)$ for all $v \in \g$. Hence, it is a constant of motion of all direct products $X_{\phi_{1}(v)} \times \cdots \times X_{\phi_{k}(v)}$ for all $v \in \g$.  
\end{cor}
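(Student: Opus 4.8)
The plan is to exhibit $F := \big(\J_{\phi_{1}} \times \cdots \times \J_{\phi_{k}}\big)^{*}\big[\Delta^{[k]}(C)\big]$ and each diagonal Hamiltonian $\lambda\big(\phi_{1}(v), \ldots, \phi_{k}(v)\big)$ as images, under one and the same morphism of Poisson algebras, of the Casimir $C$ and of the linear function $f_{v} \in C^{\infty}(\g^{*})$, $f_{v}(\xi) := \langle \xi, v\rangle$, respectively. Concretely, I would work with the composite
\[
\Psi := \big(\J_{\phi_{1}} \times \cdots \times \J_{\phi_{k}}\big)^{*} \circ \Delta^{[k]} \colon C^{\infty}(\g^{*}) \to \Adm\big(M_{1} \times \cdots \times M_{k}, L_{1} \times \cdots \times L_{k}\big).
\]
The first step is to record that $\Psi$ is a morphism of Poisson algebras: $\Delta^{[k]} = \big(m^{[k]}\big)^{*}$ is one by the well-behaved topological Poisson coalgebra structure of $C^{\infty}(\g^{*})$ discussed above, while $\big(\J_{\phi_{1}} \times \cdots \times \J_{\phi_{k}}\big)^{*}$ is one by Proposition~\ref{prop:momentum}; a composite of Poisson morphisms is again a Poisson morphism.

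Next I would identify the two relevant images explicitly. Since $m^{[k]}$ is the sum map $(\xi_{1}, \ldots, \xi_{k}) \mapsto \xi_{1} + \cdots + \xi_{k}$, a direct evaluation gives $\Delta^{[k]}(f_{v})(\xi_{1}, \ldots, \xi_{k}) = \sum_{i=1}^{k} \langle \xi_{i}, v \rangle = f_{\mathbf{v}}(\xi_{1}, \ldots, \xi_{k})$ with $\mathbf{v} = (v, \ldots, v)$. Pulling back by the product momentum map and using $\langle \J_{\phi_{i}}(x_{i}), v\rangle = [\phi_{i}(v)](x_{i})$ together with the definition of $\lambda$ then yields $\Psi(f_{v})(x_{1}, \ldots, x_{k}) = \sum_{i=1}^{k} [\phi_{i}(v)](x_{i}) = \lambda\big(\phi_{1}(v), \ldots, \phi_{k}(v)\big)$. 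Hence $\Psi(f_{v}) = \lambda\big(\phi_{1}(v), \ldots, \phi_{k}(v)\big)$, while $\Psi(C) = F$ by definition.

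The conclusion is then immediate. Because $C$ is a Casimir of the KKS bracket on $\g^{*}$, one has $\{C, f_{v}\}_{\g^{*}} = 0$, and applying the Poisson morphism $\Psi$ gives
\[
\big\{F, \lambda\big(\phi_{1}(v), \ldots, \phi_{k}(v)\big)\big\}_{L_{1} \times \cdots \times L_{k}} = \Psi\big(\{C, f_{v}\}_{\g^{*}}\big) = 0,
\]
which is the asserted commutativity. For the final clause, Proposition~\ref{prop:adm_Dirac}(1) identifies $X_{\phi_{1}(v)} \times \cdots \times X_{\phi_{k}(v)}$ with the $(L_{1} \times \cdots \times L_{k})$-Hamiltonian vector field of $\lambda\big(\phi_{1}(v), \ldots, \phi_{k}(v)\big)$; combining this with the defining relation $\{g, h\}_{L} = - X_{g}h$ of the Dirac Poisson bracket and the antisymmetry of the bracket gives $\big(X_{\phi_{1}(v)} \times \cdots \times X_{\phi_{k}(v)}\big) F = -\big\{\lambda\big(\phi_{1}(v), \ldots, \phi_{k}(v)\big), F\big\}_{L_{1} \times \cdots \times L_{k}} = 0$, so $F$ is a $t$-independent constant of motion of every such direct product.

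I do not expect any serious obstacle here, since the analytically delicate facts, namely that $\Delta^{[k]}$ and the product momentum-map pull-back are morphisms of Poisson algebras in the topological tensor-product setting, are already in place. The only points demanding care are bookkeeping: verifying that the $k^{\mathrm{th}}$-order coproduct sends $f_{v}$ to the totally diagonal linear functional $f_{(v, \ldots, v)}$ and that its pull-back reproduces exactly $\lambda$ of the diagonal, and keeping the sign convention $\{g, h\}_{L} = -X_{g}h$ consistent in the passage from Poisson commutativity to the statement that $F$ is a constant of motion.
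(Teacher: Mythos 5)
Your proposal is correct and follows essentially the same route as the paper: both arguments rest on the identity $\lambda\big(\phi_{1}(v),\ldots,\phi_{k}(v)\big)=\big(\J_{\phi_{1}}\times\cdots\times\J_{\phi_{k}}\big)^{*}\big[\Delta^{[k]}(f_{v})\big]$, on the fact that $\Delta^{[k]}$ and $\big(\J_{\phi_{1}}\times\cdots\times\J_{\phi_{k}}\big)^{*}$ are Poisson morphisms (the latter by Proposition~\ref{prop:momentum}), and on the sign convention $X_{f}g=-\{f,g\}_{L}$ to pass from Poisson commutativity with the Casimir to the constant-of-motion statement. Packaging the composite as a single morphism $\Psi$ and spelling out $\Delta^{[k]}(f_{v})=f_{(v,\ldots,v)}$ is just a tidier presentation of the same computation.
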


\begin{proof}
If $C \in C^{\infty}(\g^{*})$ is a Casimir function then, for every $v \in \g$, we have that
\begin{multline*}
    \big[X_{\phi_{1}(v)} \times \cdots \times X_{\phi_{k}(v)}\big] \big(\J_{\phi_{1}} \times \cdots \times \J_{\phi_{k}} \big)^{*}\big[\Delta^{[k]}(C)\big] \\[2pt]
    = -\big\{\lambda\big(\phi_{1}(v), \ldots, \phi_{k}(v)\big), \big(\J_{\phi_{1}} \times \cdots \times \J_{\phi_{k}}\big)^{*}\big[\Delta^{[k]}(C)\big]\big\}_{L_{1} \times \cdots \times L_{k}}. 
\end{multline*}
Since $\lambda\big(\phi_{1}(v), \ldots, \phi_{k}(v)\big) = \big(\J_{\phi_{1}} \times \cdots \times \J_{\phi_{k}}\big)^{*}\big [\Delta^{[k]}(f_{v})\big]$, it follows that 
\begin{multline*}
    \big\{\lambda\big(\phi_{1}(v), \ldots, \phi_{k}(v)\big), \big(\J_{\phi_{1}} \times \cdots \times \J_{\phi_{k}}\big)^{*}\big[\Delta^{[k]}(C)\big]\big\}_{L_{1} \times \cdots \times L_{k}}
    \\[2pt]
    = \big\{\big(\J_{\phi_{1}} \times \cdots \times \J_{\phi_{k}}\big)^{*} \big[\Delta^{[k]}(f_{v})\big], \big(\J_{\phi_{1}} \times \cdots \times \J_{\phi_{k}}\big)^{*}\big[\Delta^{[k]}(C)\big]\big\}_{L_{1} \times \cdots \times L_{k}} \\[2pt]
     = \big(\J_{\phi_{1}} \times \cdots \times \J_{\phi_{k}}\big)^{*}\big \{\Delta^{[k]}(f_{v}), \Delta^{[k]}(C)\big\}_{(\g^{*})^{k}} \\[2pt]
     = \big(\J_{\phi_{1}} \times \cdots \times \J_{\phi_{k}}\big)^{*} \big[\Delta^{[k]}\big(\set{f_{v}, C}_{\mathfrak{g}^{*}}\big)\big] = 0.
\end{multline*}
\end{proof}

The following diagram summarises the previous construction.
\[\begin{tikzcd}[column sep = scriptsize]
	{C^{\infty}(\mathfrak{g}^{*})} & {C^{\infty}(\mathfrak{g}^{*} \times \mathfrak{g}^{*})} & \cdots & {C^{\infty}\big((\mathfrak{g}^{*})^{k}\big)} \\
	{\mathrm{Adm}(M_{1}, L_{1})} & {\mathrm{Adm}(M_{1} \times M_{2}, L_{1} \times L_{2})} & \cdots & {\mathrm{Adm}(M_{1} \times\cdots \times M_{k}, L_{1} \times \cdots \times L_{k})}
	\arrow["{\Delta^{[2]}}", from=1-1, to=1-2]
	\arrow["{\Delta^{[k]}}", curve={height=-24pt}, from=1-1, to=1-4]
	\arrow["{\mathbf{J}_{\phi_{1}}^{*}}"', from=1-1, to=2-1]
	\arrow[from=1-2, to=1-3]
	\arrow["{(\mathbf{J}_{\phi_{1}} \times \mathbf{J}_{\phi_{2}})^{*}}"', from=1-2, to=2-2]
	\arrow[from=1-3, to=1-4]
	\arrow["{(\mathbf{J}_{\phi_{1}} \times \cdots \times \mathbf{J}_{\phi_{k}})^{*}}"', from=1-4, to=2-4]
	\arrow[hook, from=2-1, to=2-2]
	\arrow["\pr_{1}^{*}"', hook, curve={height=24pt}, from=2-1, to=2-4]
	\arrow[hook, from=2-2, to=2-3]
	\arrow[hook, from=2-3, to=2-4]
\end{tikzcd}\]

\begin{example}
   As established in Example~\ref{ex:ho}, the harmonic oscillator system \eqref{eq:tdho:system} admits a mixed superposition rule in terms of three particular solutions of the Riccati equation \eqref{eq:tdho:Riccati}.  Let us  derive such a mixed superposition rule using the preceding results. Consider the $t$-dependent vector field $X = X_{3} +\Omega^{2}(t)  X_{1}$ associated with system \eqref{eq:tdho:system} expressed in the coordinates $(x, y)$ on $\R^{2}_{y \neq 0}$ given in \eqref{eq:tdho:diff}, and let $\pi_{*}X$ be the $t$-dependent vector field \eqref{eq:tdho:Riccati_tdep} associated with the Riccati equation \eqref{eq:tdho:Riccati}. We seek two $t$-independent constants of motion $I_{1}, I_{2}$ defined at least locally for the direct product $X_{E}:= X \times (\pi_{*}X)^{[3]}$ on $\mathbb{R}^5$, namely
   \begin{equation}
X_{E} :=    x^{2}\pdv{x} + xy \pdv{y} + \sum_{i=1}^3  x^2_{(i)} \pdv{x_{(i)}}+\Omega^2(t)\left[\pdv{x}+   \sum_{i=1}^3    \pdv{x_{(i)}}\right] ,
       \label{eq:ho:XE}
   \end{equation}
  such that $\partial(I_{1}, I_{2})/\partial(x, y) \neq 0$. 

In what follows, consider the Lie algebra $\mathfrak{sl}(2, \R)$ and a basis $\set{e_{1}, e_{2}, e_{3}}$ with commutation relations $[e_{1}, e_{2}] = -2 e_{1},\ [e_{1}, e_{3}] = -e_{2},\ [e_{2}, e_{3}] = - 2e_{3}$, and the   Casimir function $C$ on $\mathfrak{sl}(2, \R)^{*}$ given by 
\begin{equation}
C := e_{1}e_{3}- \frac{1}{4}e_{2}^{2}.
    \label{eq:tdho:Cas}
\end{equation}
    We recall from \cite{Ballesteros2013,Ballesteros2015} that the vector fields vector fields $X_{i}$ given in  \eqref{eq:tdho:I5_vf}, and the diagonal prolongations $(\pi_{*}X_{i})^{[2]}$ of the vector fields \eqref{eq:tdho:Riccativf}, are Hamiltonian vector fields relative to the symplectic forms on $\R^{2}_{y \neq 0}$ and $\R^{2}_{x_{(1)}\neq x_{(2)}}$,  respectively, given by 
    $$
    \omega_{1} := \frac{1}{y^{3}} \dd x \wedge \dd y, \qquad  \omega_{2} := \frac{1}{ (x_{(1)} - x_{(2)} )^{2}} \dd x_{(1)} \wedge \dd x_{(2)}.
    $$
    The morphisms of Lie algebras $\phi_{1}: \mathfrak{sl}(2, \R) \to C^{\infty}\big(\R^{2}_{y \neq 0}\big)$ and $\phi_{2}: \mathfrak{sl}(2, \R) \to C^{\infty}\big(\R^{2}_{x_{(1)}\neq x_{(2)}}\big)$ defined by 
    \begin{equation*}
    \begin{aligned}
        &\phi_{1}(e_{1}):= - \frac{1}{2y^{2}}, \qquad && \phi_{1}(e_{2}):= - \frac{x}{y^{2}}, \qquad && \phi_{1}(e_{3}):= - \frac{x^{2}}{2y^{2}}, \\[2pt]
        &\phi_{2}(e_{1}):= \frac{1}{x_{(1)}- x_{(2)}}, \qquad && \phi_{2}(e_{2}):= \frac{x_{(1)} + x_{(2)}}{x_{(1)}- x_{(2)}}, \qquad && \phi_{2}(e_{3}):= \frac{x_{(1)}x_{(2)}}{x_{(1)} - x_{(2)}},
    \end{aligned}
    \end{equation*}
    yield Hamiltonian functions associated with $X_{i}$ and $(\pi_{*}X_{i})^{[2]} $, respectively, for $i = 1, 2, 3$. Therefore, the Casimir function \eqref{eq:tdho:Cas} leads to a constant of motion of $X \times (\pi_{*}X)^{[2]}$, and hence of $X_{E}$, given by
    $$
    (\J_{\phi_{1}} \times \J_{\phi_{2}})^{*}C = - \left[\frac{1}{4} + \frac{(x - x_{(1)})(x -x_{(2)})}{2y^{2}(x_{(1)}- x_{(2)})} \right],
    $$
    from which the constant of motion
    $$I_{1}:=\frac{(x - x_{(1)})(x -x_{(2)})}{y^{2}(x_{(1)}- x_{(2)})}$$
    follows. Note that this constant of motion can also be considered as a constant of motion to $X\times (\pi_*X)^{[3]}$ (\ref{eq:ho:XE}) on an open dense subset of $\mathbb{R}^5$ in a natural manner. 
    
To derive the required second constant of motion, observe that the vector fields $X_{i} \times \pi_{*}X_{i}$, for $i = 1, 2, 3$, span a VG Lie algebra that is locally automorphic on  $M:= \{(x, y, x_{(1)}) \in \R^{3}: y(x- x_{(1)}) \neq 0\}$. Hence, by employing the standard methods for determining invariant forms for locally automorphic Lie systems \cite{Gracia2019}, one readily finds that
$$\eta := \frac{1}{y} \dd y - \frac{1}{x - x_{(1)}} \dd x$$
is a contact form on $(M\times \mathbb{R})\cap \mathbb{R}^2_{y\neq 0}\times \mathbb{R}$ turning $X \times (\pi_{*}X) $ into a contact Lie system of Liouville type on that open subset. Consequently, $X \times \pi_{*}X$  is a Dirac--Lie system relative to the Dirac structure $L^{\omega}$ induced by the presymplectic form $\omega := \dd \eta$. In this context, the morphism of Lie algebras $\phi_{3}: \mathfrak{sl}(2, \R) \to \Adm(M, L^{\omega})$ defined by 
    $$
    \phi_{3}(e_{1}) := \frac{1}{x - x_{(1)}}, \qquad \phi_{3}(e_{2}):= \frac{ x+ x_{(1)}}{x - x_{(1)}}  , \qquad \phi_{3}(e_{3}):= \frac{xx_{(1)}}{x- x_{(1)}}  ,
    $$ 
    is such that $\phi_{3}(e_{i})$ is an $L^{\omega}$-Hamiltonian function for $X_{i} \times \pi_{*}X_{i}$, where $i = 1, 2, 3$. Then, a second  constant of motion 
    $$
    I_{2}:= (\J_{\phi_{3}} \times \J_{\phi_{2}})^{*} C = \frac{(x- x_{(3)})(x_{(1)} - x_{(2)})}{(x - x_{(1)})(x_{(2)}- x_{(3)})},
    $$
    arises, where $C$ is the Casimir function \eqref{eq:tdho:Cas}. This constant of motion is defined on an open subset of $M\times \mathbb{R}^2\subset \mathbb{R}^2\times \mathbb{R}^3$ and it becomes a constant of motion of $X_{E}$ (\ref{eq:ho:XE}). 
    
    Since $I_{1}$ and $I_{2}$ satisfy the regularity condition
    $\partial(I_{1}, I_{2})/\partial(x, y) \neq 0$, the system $I_{1} = k_{1}, I_{2} = k_{2}$ for $(x,y)$, gives rise to the general solution  of $X$ in the form  
    \begin{equation*}
        \begin{split}
            & x(t) = \frac{ \big[x_{(1)}(t) - x_{(2)}(t) \big]x_{(3)}(t) + k_{2} x_{(1)}(t) \big[x_{(3)}(t) - x_{(2)}(t) \big]}{x_{(1)}(t) - (1 + k_{2}) x_{(2)}(t) + k_{2}x_{(3)}(t)}, \\[2pt]
            &y(t) = \pm \left[ \frac{ (1 + k_{2} )\big(x_{(1)}(t) - x_{(2)}(t) \big)\big(x_{(1)}(t) - x_{(3)}(t)\big)\big(x_{(2)}(t) - x_{(3)}(t)\big)}{k_{1} \big(x_{(1)}(t) - (1 + k_{2})x_{(2)}(t) + k_{2}x_{(3)}(t)\big)^{2}} \right]^{1/2},
        \end{split}
    \end{equation*} 
    where $x_{(1)}(t), x_{(2)}(t), x_{(3)}(t)$ are three particular solutions of the Riccati equation \eqref{eq:tdho:Riccati} and $k_{1}, k_{2} \in \R$ are  constants. 
\end{example}
Finally, let us show how can the preceding results be used in the derivation of mixed superposition rules when combined with the Poisson comodule algebras studied in Subsection~\ref{subsection:well_Poisson}. 

Let $(M, L)$ be a Dirac manifold, and let $\phi: \mathfrak{g} \to \Adm(M, L)$ be an $L$-infinitesimal Hamiltonian action of a finite-dimensional real Lie algebra $\g$ that is a semidirect sum $\g =  \h \overrightarrow{\oplus} \p$.   Since $\h \subset \g$ is a Lie subalgebra, $\phi$ restricts to an $L$-Hamiltonian action $\phi \vert_{\h}: \h \to \Adm(M, L)$ of $\h$. For each $s \geq 1$, let us denote the product of the momentum maps as   
$$
\J_{\phi}^{[s]}:= \J_{\phi} \times \overset{s}{\cdots} \times \J_{\phi}, \qquad \J_{\phi \vert_{\h}}^{[s]}:= \J_{\phi\vert_{\h}} \times \overset{s}{\cdots} \times \J_{\phi \vert_{\h}}.$$
According to
Proposition~\ref{prop:momentum}, for any $r, s \geq 1$, the mapping  $\J_{\phi}^{[r]} \times \J_{\phi \vert_{\h}}^{[s]}: M^{r+s} \to (\g^{*})^{r} \times (\h^{*})^{s}$ induces  a morphism of Poisson algebras 
$$
\big(\J_{\phi}^{[r]} \times \J_{\phi \vert_{\h}}^{[s]} \big)^{*}: C^{\infty}\big((\g^{*})^{r} \times (\h^{*})^{s}\big) \to \Adm\big(M^{r+s}, L^{[r+s]}\big).
$$ 
Recall that, for  $\ell \geq 2$, the $\ell^{\textup{th}}$-coaction map $\varphi^{[\ell]}: C^{\infty}(\g^{*}) \to C^{\infty}\big(\g^{*} \times (\h^{*})^{\ell -1}\big)$, as defined in \eqref{eq:lth_coaction}, is a morphism of Poisson algebras. As a consequence, the map $\big(\Delta^{[k]} \, \widehat{\otimes} \, \mathrm{id}^{\widehat{\otimes}(\ell -1)}\big) \circ \varphi^{[\ell]}: C^{\infty}(\g^{*}) \to C^{\infty}\big((\g^{*})^{k} \times (\h^{*})^{\ell -1}\big)$ is also a morphism of Poisson algebras for $k, \ell \geq 2$. The following result  follows from Proposition~\ref{prop:momentum}.

\begin{cor} Let $(M, L)$ be a Dirac manifold, and let $\phi: \g \to \Adm(M, L)$ be an $L$-infinitesimal Hamiltonian action of a Lie algebra $\g$ which is a semidirect sum $\g = \h \overrightarrow{\oplus} \p$. If $C \in C^{\infty}(\g^{*})$ is a Casimir function of $\g^{*}$, then 
 \begin{enumerate}[label=(\arabic*), font=\normalfont]
    \item $\big(\J_{\phi} \times \J_{\phi \vert_{\h}}^{[\ell -1]} \big)^{*}\big[\varphi^{[\ell]}(C)\big]$ commutes with all  diagonal prolongations $\phi(v)^{[\ell]}$ and all pull-backs  $\widetilde{\phi(u)}$ of $\phi(u)$ along the projection $M^{\ell} \to M$ , for every $v \in \h$ and $u \in \p$.
    \item $\big(\J_{\phi}^{[k]} \times \J_{\phi \vert_{\h}}^{[\ell -1]}\big)^{*}  \big[\big(\Delta^{[k]} \, \widehat{\otimes} \, \mathrm{id}^{\widehat{\otimes}(\ell -1)}\big) \circ \varphi^{[\ell]}(C)\big]$ commutes with all diagonal prolongations $\phi(v)^{[k+\ell-1]}$ and all pull-backs $\widetilde{\phi(u)^{[k]}}$ of the diagonal prolongations $\phi(u)^{[k]}$ along the projection $M^{k + \ell - 1} \to M^{k}$, for every $v \in \h$ and $u \in \p$. 
\end{enumerate}
\end{cor}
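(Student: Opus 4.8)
The plan is to mirror the computation of the preceding corollary, the only genuinely new ingredient being the explicit identification of the pull-backs of the coaction images as the claimed diagonal prolongations and lifts; this is where the semidirect-sum hypothesis $\g = \h \overrightarrow{\oplus} \p$ enters decisively, through the canonical identification $\h^{*} = \p^{\circ} \subset \g^{*}$ recorded in Subsection~\ref{subsection:well_Poisson}.

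First I would record, for each $w \in \g$, the associated linear function $f_{w} \in C^{\infty}(\g^{*})$ and compute its image under the $\ell^{\textup{th}}$-order coaction. Since $\varphi^{[\ell]}$ is the pull-back of the Poisson map $\overline{m}^{[\ell]} \colon \g^{*} \times (\h^{*})^{\ell-1} \to \g^{*}$, $(\xi_{0}, \zeta_{1}, \dots, \zeta_{\ell-1}) \mapsto \xi_{0} + \zeta_{1} + \cdots + \zeta_{\ell-1}$, one obtains
\[
\varphi^{[\ell]}(f_{w})(\xi_{0}, \zeta_{1}, \dots, \zeta_{\ell-1}) = \langle \xi_{0}, w \rangle + \sum_{i=1}^{\ell-1} \langle \zeta_{i}, w \rangle .
\]
The crucial point is that each $\zeta_{i} \in \h^{*} = \p^{\circ}$ annihilates $\p$, so that for $v \in \h$ every summand survives whereas for $u \in \p$ only the first one does. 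Pulling back along $\J_{\phi} \times \J_{\phi \vert_{\h}}^{[\ell-1]}$ and using $\langle \J_{\phi \vert_{\h}}(x), v \rangle = [\phi(v)](x)$ for $v \in \h$, I get
\[
\big(\J_{\phi} \times \J_{\phi \vert_{\h}}^{[\ell-1]}\big)^{*}\big[\varphi^{[\ell]}(f_{v})\big] = \phi(v)^{[\ell]} \quad (v \in \h), \qquad \big(\J_{\phi} \times \J_{\phi \vert_{\h}}^{[\ell-1]}\big)^{*}\big[\varphi^{[\ell]}(f_{u})\big] = \widetilde{\phi(u)} \quad (u \in \p),
\]
which are exactly the generators appearing in assertion (1).

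With these identifications in hand, part (1) follows by the standard chain of Poisson-morphism identities. Writing $F := \big(\J_{\phi} \times \J_{\phi \vert_{\h}}^{[\ell-1]}\big)^{*}[\varphi^{[\ell]}(C)]$ and letting $w$ be either $v \in \h$ or $u \in \p$, I use that $\big(\J_{\phi} \times \J_{\phi \vert_{\h}}^{[\ell-1]}\big)^{*}$ is a morphism of Poisson algebras (Proposition~\ref{prop:momentum}) and that $\varphi^{[\ell]}$ is a morphism of Poisson algebras (Subsection~\ref{subsection:well_Poisson}) to compute
\[
\big\{ \big(\J_{\phi} \times \J_{\phi \vert_{\h}}^{[\ell-1]}\big)^{*}[\varphi^{[\ell]}(f_{w})], \, F \big\}_{L^{[\ell]}} = \big(\J_{\phi} \times \J_{\phi \vert_{\h}}^{[\ell-1]}\big)^{*}\big[ \varphi^{[\ell]}\big( \{ f_{w}, C \}_{\g^{*}} \big) \big] = 0,
\]
the last equality being the Casimir condition $\{ f_{w}, C \}_{\g^{*}} = 0$, valid for all $w \in \g$. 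Since the bracket vanishes and $\{ \cdot, \cdot \}_{L^{[\ell]}} = -X_{(\cdot)}(\cdot)$, this is precisely the asserted commutation with $\phi(v)^{[\ell]}$ and $\widetilde{\phi(u)}$.

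For part (2) I would repeat the argument verbatim with the composite $\big(\Delta^{[k]} \, \widehat{\otimes} \, \mathrm{id}^{\widehat{\otimes}(\ell-1)}\big) \circ \varphi^{[\ell]}$ in place of $\varphi^{[\ell]}$. By the final commutative diagram of Subsection~\ref{subsection:well_Poisson}, this composite is the pull-back of the Poisson map $(\g^{*})^{k} \times (\h^{*})^{\ell-1} \to \g^{*}$ summing all of its arguments; evaluating on $f_{w}$ and invoking once more $\h^{*} = \p^{\circ}$ shows that its pull-back along $\J_{\phi}^{[k]} \times \J_{\phi \vert_{\h}}^{[\ell-1]}$ equals $\phi(v)^{[k+\ell-1]}$ for $v \in \h$ and $\widetilde{\phi(u)^{[k]}}$ for $u \in \p$. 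The same Poisson-morphism-plus-Casimir computation then yields the vanishing bracket. I expect the main obstacle to be purely the bookkeeping of these two identifications --- in particular, tracking which factors are annihilated by $\p^{\circ}$ and confirming that the composite maps are indeed the stated summation maps --- after which the conclusion is immediate from the morphism and Casimir properties already established.
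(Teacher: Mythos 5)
Your proposal is correct and follows essentially the route the paper intends: the paper gives no explicit proof beyond noting that the result follows from Proposition~\ref{prop:momentum}, and your argument spells this out by combining the same morphism-plus-Casimir computation used for the preceding corollary with the identification, via $\h^{*}=\p^{\circ}\subset\g^{*}$, of the pull-backs of $\varphi^{[\ell]}(f_{w})$ (resp.\ of $\big(\Delta^{[k]}\,\widehat{\otimes}\,\mathrm{id}^{\widehat{\otimes}(\ell-1)}\big)\circ\varphi^{[\ell]}(f_{w})$) as $\phi(v)^{[\ell]}$ and $\widetilde{\phi(u)}$ (resp.\ $\phi(v)^{[k+\ell-1]}$ and $\widetilde{\phi(u)^{[k]}}$). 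No gaps.
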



\section{Applications}\label{Section:Applications}
Let us now provide several applications of the methods introduced in previous sections. 


\subsection{A time-dependent thermodynamical system}
Let us consider the thermodynamical phase space $\R^{5}$ with global coordinates $(U, S, \nu, T, P)$ \cite{Balian2001}. Among them, $(U, S, \nu)$ are {\it extensive variables} corresponding, in this order, to the internal energy, the entropy, and the volume\footnote{In thermodynamics, the volume is typically denoted by $V$ instead of $\nu$. However, as discussed in Section~\ref{section:fundamentals}, we reserve the symbol $V$  for Lie algebras.}. On the other hand, $(T, P)$ are {\it intensive variables} corresponding to the temperature and the pressure, respectively. The so-called {\it Gibbs one-form} \cite{Baz}, given by 
\begin{equation}
    \eta_{\mathrm{Gibbs}}:= \dd U - T \dd S + P \dd \nu
    \label{eq:thermo:Gibbs}
\end{equation}
is a contact form on $\R^{5}$. Consider now the following first-order system of ODEs,
\begin{equation}
    \begin{split}
    \dv{U}{t} &=- \big(b_{1}(t) - b_{6}(t) \big) U + b_{3}(t) \nu + b_{5}(t)S, \\[2pt]
        \dv{S}{t} &= b_{6}(t) S, \\[2pt]
    \dv{\nu}{t} &= b_{2}(t) U + \big(b_{1}(t) + b_{6}(t) \big) \nu + b_{4}(t) S, \\[2pt]
    \dv{T}{t} &= -b_{1}(t) T + b_{2}(t) TP + b_{4}(t) P + b_{5}(t), \\[2pt]
    \dv{P}{t} & = b_{2}(t) P^{2} - 2 b_{1}(t) P - b_{3}(t),
    \end{split}
    \label{eq:thermo:system}
\end{equation}
where $b_{i}(t) \in C^{\infty}(\R)$ are arbitrary $t$-dependent functions, for $i = 1, \ldots, 6$. This system is associated with the $t$-dependent vector field given by 
$$
X := \sum_{i = 1}^{6} b_{i}(t) X_{i},
$$
where the vector fields
\begin{equation}
    \begin{aligned}
        &X_{1} := -U \pdv{U} + \nu \pdv{\nu} - T \pdv{T} - 2P \pdv{P}, \qquad && X_{2}:= U \pdv{\nu} + TP \pdv{T} + P^{2} \pdv{P}, \\[2pt]
        &X_{3}:= \nu \pdv{U} - \pdv{P}, \qquad && X_{4}:= S \pdv{\nu} + P \pdv{T}, \\[2pt]
        & X_{5}:= S \pdv{U} + \pdv{T}, \qquad && X_{6} :=U \pdv{U} + S \pdv{S} + \nu\pdv{\nu}.
    \end{aligned}
    \label{eq:thermo_vf}
\end{equation}
 have non-vanishing commutation relations \eqref{eq:mixed:VG_cr},   i.e. the same that determine  the five-dimensional Schr\"odinger Lie algebra $\mathcal{S}(1)$  in Subsection~\ref{subsection:rodinger}. 
  Accordingly, $X$ is a Lie system, and the above vector fields generate a VG Lie algebra $V$ of $X$ and, therefore, $V$ is isomorphic to $\mathcal{S}(1) \oplus \R$, where $\langle X_{1}, \ldots, X_{5} \rangle \simeq \mathcal{S}(1)$ and $X_{6}$ generates the centre of $V$. It readily follows that $V$ admits the Levi decomposition $V = \langle X_{1}, X_{2}, X_{3} \rangle \overrightarrow{\oplus} \langle X_{4}, X_{5}, X_{6} \rangle \simeq \mathfrak{sl}(2, \R) \overrightarrow{\oplus} \R^{3}$. From now on, let us denote $\g := \mathfrak{sl}(2, \R) \overrightarrow{\oplus} \R^{3}$. System \eqref{eq:thermo:system} is a particular case of the family of systems recently examined in~\cite{Campoamor-Stursberg2025a}, where it was established that $X$ is a contact Lie system with respect to the Gibbs one-form \eqref{eq:thermo:Gibbs}. Nevertheless, a direct computation reveals that the system is not of Liouville type, since the Reeb vector field corresponding to $\eta_{\mathrm{Gibbs}}$, namely $\mathcal{R}_{\mathrm{Gibbs}} := \pdv{U}$, fails to be a Lie symmetry of all the elements of $V$. 
 
 System \eqref{eq:thermo:system} exhibits several remarkable features. Most notably, the extensive variables $(U, S, \nu)$ evolve independently of the intensive variables $(T, P)$, a property that aligns  with the formalism of direct products.  More concretely, let us regard $\R^{5}$ as the Cartesian product $\R^{5} = \R^{3} \times \R^{2}$, where $(U, S, \nu)$ define global coordinates on $\R^{3}$, while $(T,P)$ furnish global coordinates on $\R^{2}$. Consider the canonical projections
$$
\pr_{1}: \R^{5} \ni (U, S, \nu, T, P) \mapsto (U, S, \nu) \in \R^{3}, \qquad \pr_{2}: \R^{5} \ni (U, S, \nu, T, P) \mapsto (T, P) \in \R^{2}. 
$$
The decoupling between the extensive and the intensive variables arises from the fact that $X$ can be expressed as the direct product
\begin{equation}
X = (\pr_{1})_{*}X \times (\pr_{2})_{*}X.
    \label{eq:direct_prod2}
\end{equation}
It then follows that, if   $(U(t), S(t), \nu(t), P(t), V(t))$ is the general solution of $X$, then $(U(t), S(t), \nu(t))$ and $(P(t), V(t))$ are the general solutions of $(\pr_{1})_{*}X$ and $(\pr_{2})_{*}X$, respectively. The system corresponding to $(\pr_{1})_{*}X$, given by 
 \begin{equation*}
 \begin{split}
      \dv{U}{t} &= -\big( b_{1}(t) - b_{6}(t) \big) U + b_{3}(t) \nu + b_{5}(t)S, \\[2pt]
        \dv{S}{t} &= b_{6}(t) S, \\[2pt]
    \dv{\nu}{t} &= b_{2}(t) U + \big(b_{1}(t) + b_{6}(t) \big) \nu + b_{4}(t) S, 
    \end{split}
 \end{equation*}
 is a linear homogeneous system of ODEs on $\R^{3}$ and, as such, admits a linear superposition rule in terms of three constants and three particular solutions. On the other hand, the system corresponding to $(\pr_{2})_{*}X$ reads
\begin{equation}
    \begin{split}
        \dv{T}{t} &= -b_{1}(t) T + b_{2}(t) TP + b_{4}(t) P + b_{5}(t), \\[2pt]
    \dv{P}{t} & = b_{2}(t) P^{2} - 2 b_{1}(t) P - b_{3}(t),
    \end{split}
    \label{eq:thermo:Riccati_proj}
\end{equation}
and is a particular case of a projective Riccati equation on $\R^{2}$ \cite{Anderson1981,Anderson1982,Grundland2017}. As proved in \cite{Anderson1981,Anderson1982}, it admits a superposition rule depending on four particular solutions and two constants. It is also worth noting that this system belongs to the so-called I$_{19}^{r = 1}$ class on $\R^{2}$, which cannot be described by means of any of the geometric structures presented in Section~\ref{Section:Dirac} (see \cite{Lucas2020}). 

Let us now derive compatible geometric structures for $X$ turning it into a Dirac--Lie system by applying Proposition~\ref{prop:invariants}, following the same procedure of   Subsection~\ref{ex:Riccati_system}. Clearly, the VG Lie algebra $V$ spanned by \eqref{eq:thermo_vf} is not locally automorphic  on $\R^{5}$. Nevertheless, we note that
$$
X_{1} \wedge X_{3} \wedge X_{4} \wedge X_{5} \wedge X_{6} = -S^{2}(U + P\nu -ST ) \pdv{U} \wedge \pdv{S} \wedge \pdv{\nu} \wedge \pdv{T} \wedge \pdv{P}
$$
does not vanish on the submanifold $M:= \set{S(U + P\nu-ST) \neq 0} \subset \R^{5}$, and that these vector fields span a Lie subalgebra $V':=\langle X_{1}, X_{3}, X_{4}, X_{5},X_{6} \rangle$ of $V$, which is isomorphic to $\big(\mathfrak{b}_{2} \overrightarrow{\oplus} \R^{2} \big) \oplus \R$,  where $\mathfrak{b}_{2}=\langle X_{1}, X_{3} \rangle$ denotes the Borel subalgebra of $\mathfrak{sl}(2, \R)$. Moreover, observe that
$$
G:= U + P \nu - ST 
$$
corresponds to the so-called {\it Gibbs free-energy}. Let $(U(t), S(t), \nu(t), T(t), P(t))$ be a particular solution of $X$ (\ref{eq:thermo:system}). Then, its Gibbs free energy $G(t) = U(t) + P(t) \nu(t) -S(t) T(t)$ satisfies the ODE
\begin{equation}
\dv{G}{t} = -\big(b_{1}(t) - b_{6}(t) - b_{2}(t) P\big)G.
    \label{eq:thermo_GibbsF}
\end{equation}
From \eqref{eq:thermo:system} and \eqref{eq:thermo_GibbsF}, it follows that, if the entropy $S(t)$ (resp., the Gibbs free energy $G(t)$) of a particular solution of $X$ is such that $S(t_{0}) = 0$  (resp.~$G(t_{0}) = 0$) at a certain $t_{0} \in \R$, then $S(t)$ (resp.~$G(t)$) vanishes identically. Therefore, in what follows we restrict ourselves to studying system $X$ on the submanifold $M = \set{S(U + P \nu - ST) \neq 0} \subset \R^{5}$, where both the entropy and the Gibbs free energy are non-vanishing.

The foregoing discussion entails that the Lie algebra $V' = \langle X_{1}, X_{3}, X_{4}, X_{5}, X_{6} \rangle$ is locally automorphic on $M$. The Lie algebra of Lie symmetries of $V'$, denoted~$\mathrm{Sym}(V')$,  is generated by 
\begin{equation}
    \begin{aligned}
        &Z_{1} := - (U + P \nu - ST) \pdv{U}, \qquad &&  Z_{2}:= - \left( \frac{U + P \nu - ST}{S} \right)^{2} \left(  \frac{\nu}{S} \pdv{T} + \pdv{P} \right),  \\[2pt]
        & Z_{3}:= \frac{S^{2}}{U + P \nu - ST} \left( P \pdv{U} - \pdv{\nu} \right), \qquad && Z_{4}:= - (U + P \nu -ST) \left( \frac{1}{S} \pdv{T} + \pdv{U} \right), \\[2pt]
        & Z_{5}:= -X_{6} = - U \pdv{U} - S \pdv{S} - \nu \pdv{\nu},
    \end{aligned}
    \label{eq:thermo:syn}
\end{equation}
and   likewise $\mathrm{Sym}(V') \simeq \big(\mathfrak{b}_{2} \overrightarrow{\oplus} \R^{2} \big) \oplus \R$ with  non-vanishing commutation rules given by
\begin{equation*}
[ Z_1,Z_2] =-2 Z_2 ,\qquad [ Z_1,Z_3] =  Z_3 ,\qquad [ Z_1,Z_4] =-  Z_4 ,\qquad [ Z_2,Z_3] = Z_4 ,
\end{equation*}
so that $Z_5$ generates the centre of $\mathrm{Sym}(V')$.
Note that $Z_{3}$ is also a Lie symmetry of $V$. The dual frame to $\set{Z_{1}, \ldots, Z_{5}}$ is given by the one-forms
\begin{equation*}
\begin{split}
    & \alpha_{1} :=- \frac{1}{U + P \nu - ST} \left(  \dd U -   \frac {U + P \nu}{S} \dd S  +P \dd \nu - S \dd T + \nu \dd P \right), \\[2pt]
   & \begin{aligned}
    &\alpha_{2}:= - \left( \frac{S}{U + P \nu - ST} \right)^{2} \dd P, \qquad &&\alpha_{3}:= \left( \frac{U + P \nu - ST}{S^{2}} \right) \left( \frac{\nu}{S} \dd S - \dd  \nu \right), \\[2pt]
    &\alpha_{4}:= \frac{1}{U + P \nu - ST} (- S \dd T + \nu \dd P), \qquad &&\alpha_{5}:= - \frac{1}{S} \dd S,
    \end{aligned}
    \end{split}
\end{equation*}
which satisfy the structure equations 
\begin{equation}
  \dd \alpha_{2} = 2 \alpha_{1} \wedge \alpha_{2}, \qquad \dd \alpha_{3} = - \alpha_{1} \wedge \alpha_{3}, \qquad \dd \alpha_{4} = \alpha_{1} \wedge \alpha_{4} - \alpha_{2} \wedge \alpha_{3}, \qquad \dd \alpha_{1} = \dd \alpha_{5} = 0.
    \label{eq:thermo:MC}
\end{equation}

Let $\alpha:= \sum_{i = 1}^{5} c_{i} \alpha_{i}$, with $c_{i} \in \R$, be an invariant form for $V'$. Then, $\alpha$ is invariant for $V$ if and only if 
$$
\cL_{X_{2}} \alpha = c_{3} \frac{U + P \nu - ST}{S^{2}} \left( - \dd U + \frac{U+ P \nu}{S} \dd S - P \dd \nu \right) -(c_{1} - c_{4}) \dd P = 0
 \Longleftrightarrow c_{3} = 0,  c_{4} = c_{1}.
 $$
Hence, $\alpha = c_{1} (\alpha_{1} + \alpha_{4}) + c_{2} \alpha_{2} + c_{5} \alpha_{5}$ is an invariant form for $V$ for every $c_{1}, c_{2}, c_{5} \in \R$. Among these, those that define contact structures on $M$ are precisely those for which $c_{1} c_{5} \neq 0$, since the structure equations \eqref{eq:thermo:MC} imply that
$$
\alpha \wedge \dd \alpha \wedge \dd \alpha = -2 c_{1}^{2}c_{5} \,\alpha_{1} \wedge  \alpha_{2} \wedge \alpha_{3} \wedge\alpha_{4}  \wedge \alpha_{5} \neq 0 \Longleftrightarrow c_{1}c_{5} \neq 0.
$$
In particular, taking $c_{1} = c_{5} = -1$ and $c_{2} = 0$, we obtain the contact form
$$
\eta := - \alpha_{1} - \alpha_{4} - \alpha_{5} = \frac{1}{U + P \nu - ST} \, \eta_{\mathrm{Gibbs}} = \frac{1}{U + P \nu - ST} (\dd U - T \dd S +  P \dd \nu )
$$
which turns $X$ into a contact Lie system of Liouville type. Therefore, $(M, L^{\omega}, X)$ is a Dirac--Lie system, where $L^{\omega}$ is the Dirac structure on $M$ induced by the exact presymplectic form $\omega = \dd \eta$.  Note also that the contact form $\eta$ lies in the same conformal class of the restriction of $\eta_{\mathrm{Gibbs}}$ to $M$; that is, $\ker \eta = \ker \eta_{\mathrm{Gibbs}} \vert_{M}$ defines the same contact distribution, and that the Reeb vector field associated with $\eta$ is $\cR := X_{6}$. The contact Hamiltonian functions corresponding to the vector fields \eqref{eq:thermo_vf} are given by
\begin{equation*}
    \begin{aligned}
        &h_{1} := \frac{U - P \nu}{U + P \nu - ST}, \qquad &&h_{2}:= - \frac{UP}{U + P \nu - ST}, \qquad &&h_{3}:= - \frac{\nu}{U + P \nu - ST}, \\[2pt]
        & h_{4}:= - \frac{SP}{U + P \nu - ST}, \qquad &&h_{5}:= - \frac{S}{U + P \nu - ST}, \qquad &&h_{6} := - 1, 
    \end{aligned}
\end{equation*}
and they generate a Lie algebra isomorphic to $\g = \mathfrak{sl}(2, \R) \overrightarrow{\oplus} \R^{3}$ with respect to the Poisson bracket $\set{\cdot, \cdot}_{\omega}$ induced by the  presymplectic form $\omega = \dd \eta$ on $\Adm(M, L^{\omega})$. Let us now consider $\g$ over a basis $\set{v_{1}, \ldots, v_{5}}$ satisfying the same commutation relations as $\set{h_{1}, \ldots, h_{5}}$. Define the morphism of Lie algebras $\phi: \g \to \Adm(M, L^{\omega})$ determined by $\phi(v_{i}):= h_{i}$, for $i = 1, \ldots, 5$, and let $\J_{\phi}: M \to \g^{*}$ be its associated momentum map. Similarly, let $\phi\vert_{\mathfrak{sl}(2, \R)}: \mathfrak{sl}(2, \R) \to \Adm(M, L^{\omega})$ be the restriction of $\phi$ to the subalgebra $\mathfrak{sl}(2, \R) \simeq \langle v_{1}, v_{2}, v_{3} \rangle$, with associated momentum map $\J_{\phi \vert_{\mathfrak{sl}(2, \R)}}: M \to \mathfrak{sl}(2, \R)^{*}$. 

Let us now obtain a mixed superposition rule for the projective Riccati equation \eqref{eq:thermo:Riccati_proj} by means of the results of the preceding sections applied to the Dirac--Lie system $(M, L^{\omega}, X)$. Let now $Y$ be the part of $X$ taking values in  $\langle X_{1}, X_{2}, X_{3} \rangle$. Consider the direct product $X_{E} := X \times Y$ on $M^{2}$, explicitly given by
 \begin{align}
 X_E& =b_1(t) \sum_{i = 1}^{2} \left( -U_{(i)} \pdv{U_{(i)}} + \nu_{(i)} \pdv{\nu_{(i)}} - T_{(i)} \pdv{T_{(i)}} - 2P_{(i)} \pdv{P_{(i)}} \right)\nonumber\\[2pt]
 &\quad +b_2(t) \sum_{i = 1}^{2} \left(U_{(i)} \pdv{\nu_{(i)}} + T_{(i)}P_{(i)} \pdv{T_{(i)}} + P_{(i)}^{2}      \pdv{P_{(i)}} \right)+b_3(t) \sum_{i = 1}^{2} \left(  \nu_{(i)} \pdv{U_{(i)}} - \pdv{P_{(i)}} \right)   \nonumber\\[2pt]
 &\quad +b_4(t)\left( S_{(1)} \pdv{\nu_{(1)}} + P_{(1)} \pdv{T_{(1)}} \right)+b_5(t)\left(S_{(1)} \pdv{U_{(1)}} + \pdv{T_{(1)}}  \right)
 \nonumber\\[2pt]
  &\quad  +b_6(t)\left( U_{(1)} \pdv{U_{(1)}} + S_{(1)} \pdv{S_{(1)}} + \nu_{(1)}\pdv{\nu_{(1)}} \right),
\label{eq:XEthermo}
\end{align} 
and which, by means of \eqref{eq:direct_prod2}, can be expressed as 
$$
X_{E} = X \times Y = (\pr_{1})_{*}X \times (\pr_{2})_{*}X \times (\pr_{1})_{*}Y \times (\pr_{2})_{*}Y. 
$$
Let now $\cD^{X_{E}} \subset \T M$ be the generalised distribution spanned by $V^{X_{E}}$, and consider the projection 
$$\pr_{\hat{2}}: \big(U_{(1)}, S_{(1)}, \nu_{(1)}, P_{(1)}, T_{(1)}, U_{(2)}, S_{(2)}, \nu_{(2)}, P_{(2)}, T_{(2)} \big) \mapsto \big(U_{(1)}, S_{(1)}, \nu_{(1)}, U_{(2)}, S_{(2)}, \nu_{(2)}, P_{(2)}, T_{(2)}\big).
$$
Then,  $\T \, \pr_{\hat{2}} \vert_{\cD^{X_{E}}}: \cD^{X_{E}} \to \T \big(\R^{3} \times M\big)$ is  an injective map on an open and dense subset of $M^{2}$. Hence, from Theorem~\ref{th:mixed_injective} it follows that $(\pr_{2})_{*}X$ admits a mixed superposition rule in terms of one particular solution of $(\pr_{1})_{*}X$, one particular solution of $(\pr_{1})_{*}Y$, and one particular solution of $(\pr_{2})_{*}Y$. 
 
Consider the $(L^{\omega})^{[2]}$-admissible functions on $M^{2}$ given by 
\begin{equation*}
    \begin{aligned}
        &h_{1}^{[2]} := \sum_{i = 1}^{2} \frac{U_{(i)} - P_{(i)} \nu_{(i)}}{U_{(i)} + P_{(i)} \nu_{(i)} - S_{(i)}T_{(i)}}, \qquad &&h_{2}^{[2]}:=  - \sum_{i = 1}^{2} \frac{U_{(i)}P_{(i)}}{U_{(i)} + P_{(i)} \nu_{(i)} - S_{(i)}T_{(i)}},\\
        &h_{3}^{[2]}:= - \sum_{i = 1}^{2} \frac{\nu_{(i)}}{U_{(i)} + P_{(i)} \nu_{(i)} - S_{(i)}T_{(i)}},  \qquad &&h_{4}^{[1]} := - \frac{S_{(1)}P_{(1)}}{U_{(1)} + P_{(1)} \nu_{(1)} - S_{(1)}T_{(1)}}, \\
        &  h_{5}^{[1]} := - \frac{S_{(1)}}{U_{(1)} + P_{(1)} \nu_{(1)} - S_{(1)}T_{(1)}}, \qquad && h_{6}^{[1]} :=  - 1.
    \end{aligned}
\end{equation*}   
Let $\varphi$ be the coaction $\varphi: C^{\infty}(\g^{*}) \to C^{\infty}(\g^{*} \times \mathfrak{sl}(2, \R)^{*})$ according to the decomposition $\g = \mathfrak{sl}(2, \R) \overrightarrow{\oplus} \R^{3}$ (see Subsection~\ref{subsection:well_Poisson}), and $C$ be the Casimir function on $\g^{*}$ given by 
$$
C := v_{3} v_{4}^{2} - v_{2} v_{5}^2- v_{1}v_{4} v_{5},
$$ 
which is the usual Casimir function on $\mathcal{S}(1)^{*}$ \cite{Campoamor2005,Campoamor2020}. Then, 
 \begin{equation*}
    \begin{aligned}
    F_{1} := (\J_{\phi} \times \J_{\phi \vert_{\mathfrak{sl}(2, \R}})^{*}[\varphi(C)] &= h_{3}^{[2]} \big(h_{4}^{[1]} \big)^{2} - h_{2}^{[2]}\big(h_{5}^{[1]}\big)^2 - h_{1}^{[2]} h_{4}^{[1]}h_{5}^{[1]}\\[2pt]
   & = - \frac{S_{(1)}^{2}\big(P_{(1)}- P_{(2)}\big)\big(U_{(2)} + P_{(1)} \nu_{(2)}\big)}{\big(U_{(1)} + P_{(1)} \nu_{(1)} - S_{(1)}T_{(1)}\big)^2\big(U_{(2)} + P_{(2)} \nu_{(2)} - S_{(2)}T_{(2)}\big)}
    \end{aligned}
\end{equation*}   
is a constant of motion of $X_{E}$ (\ref{eq:XEthermo}). Since $Z_{3}$ given in \eqref{eq:thermo:syn} is a symmetry of $X$, its diagonal prolongation
$$
Z_{3}^{[2]} = \sum_{i = 1}^{2} \frac{S_{(i)}^{2}}{U_{(i)} + P_{(i)}\nu_{(i)} - S_{(i)}T_{(i)}} \left( P_{(i)} \pdv{U_{(i)}} - \pdv{\nu_{(i)}} \right)
$$
is a symmetry of $X_{E}$. This gives rise to another constant of motion of $X_{E}$ in the form 
  $$
Z_{3}^{[2]} F_{1} =  \left( \frac{S_{(1)}S_{(2)} \big(P_{(1)}- P_{(2)}\big)}{\big(U_{(1)}+P_{(1)} \nu_{(1)} - S_{(1)}T_{(1)}\big)\big(U_{(2)} + P_{(2)} \nu_{(2)} - S_{(2)}T_{(2)}\big)} \right)^{2}.
$$ 
From this, it follows that 
$$
I_{2} := \frac{S_{(1)}S_{(2)} \big(P_{(1)}- P_{(2)}\big)}{\big(U_{(1)}+P_{(1)} \nu_{(1)} - S_{(1)}T_{(1)}\big)\big(U_{(2)} + P_{(2)} \nu_{(2)} - S_{(2)}T_{(2)}\big)}
$$ 
is a constant of motion of $X_{E}$, since $I_{2}^{2} = Z_{3}^{[2]}F_{1}$. { Moreover, we have that $F_{1} = - I_{1} I_{2}$, where  
$$
I_{1}:=\frac{S_{(1)}\big(U_{(2)} + P_{(1)}\nu_{(2)}\big)}{S_{(2)}\big(U_{(1)} + P_{(1)} \nu_{(1)} - S_{(1)}T_{(1)}\big)}
$$
is also a  constant of motion of $X_{E}$}. These constants of motion satisfy the regularity condition 
$$
\pdv{(I_{1}, I_{2})}{\big(P_{(1)}, T_{(1)}\big)} \neq 0
$$
and, therefore, the system $I_{1} = k_{1}, I_{2} = k_{2}$ for $\big(T_{(1)},P_{(1)}\big)$ gives rise to the general solution of the projective Riccati equation \eqref{eq:thermo:Riccati_proj}. This expression can be   simplified by taking into account that $S_{(2)}$ is a constant of motion of $X_{E}$ and introducing thus a third constant $k_{3} = S_{(2)}$, namely  
\begin{equation*}
    \begin{split}
          T_{(1)}(t) &=   \frac{1}{S_{(1)}(t)\Theta(t)} \Big[  k_{1}k_{3}^2 \big[U_{(1)}(t) + P_{(2)}(t) \nu_{(1)}(t)\big] - k_{3}S_{(1)}(t)\big[U_{(2)}(t) + P_{(2)}(t) \nu_{(2)}(t)\big]   \\
        & \qquad \qquad \qquad \quad  + k_{2} \big[U_{(2)}(t) + P_{(2)}(t) \nu_{(2)}(t) - k_{3}   T_{(2)}(t) \big] \big[U_{(2)}(t) \nu_{(1)}(t) - U_{(1)}(t) \nu_{(2)}(t) \big]   \Big] ,\\[2pt]
        P_{(1)}(t) &= \frac{1}{\Theta(t)} \Big[ k_{1}k_{3}^{2}P_{(2)}(t)  + k_{2}U_{(2)}(t) \big[U_{(2)}(t) +P_{(2)}(t)\nu_{(2)}(t) - k_{3} T_{(2)}(t)\big]\Big],
    \end{split}
\end{equation*}
  where  
  $$
  \Theta(t):= k_{1}k_{3}^{2} - k_{2} \nu_{(2)}(t) \big[U_{(2)}(t) + P_{(2)}(t) \nu_{(2)}(t) - k_{3} T_{(2)}(t)\big],
  $$
  being $\big(U_{(1)}(t), S_{(1)}(t), \nu_{(1)}(t)\big)$    a particular solution of $(\pr_{1})_{*}X$, $\big(U_{(2)}, S_{(2)}(t) = k_{3}, \nu_{(2)}(t)\big)$   a particular solution of $(\pr_{1})_{*}Y$, $\big(T_{(2)}(t), P_{(2)}(t)\big)$   a particular solution of $(\pr_{2})_{*}Y$, and $k_{1}, k_{2}, k_{3} \in \R$   real constants.


\subsection{Extension to PDE Lie systems}
Consider a system of PDEs of the form
\begin{equation}
\frac{\partial u}{\partial t^i}=F_i(t,u),\qquad u\in \mathbb{R}^n,\qquad i=1,\ldots,s,\qquad t\in \mathbb{R}^s,
    \label{eq:pdes}
\end{equation}
and a particular known solution $u_0(t)$. Let us obtain an approximate system that allows us to obtain approximate solutions close to $u_0(t)$. To do so, consider $u(t)=u_0(t)+\epsilon v(t) + \epsilon^{2} w(t)$  for $0\leq \epsilon \ll 1$ and substitute this into system~\eqref{eq:pdes}. Then,
$$
\frac{\partial u_0}{\partial t^i}+\epsilon \frac{\partial v}{\partial t^i}+\epsilon^2\frac{\partial w}{\partial t^i}= F_{i}(t, u_{0}) + \epsilon \sum_{j = 1}^{n} \pdv{F_{i}}{u^{j}}(t,u_0(t)) v^{j}+ \epsilon^{2} \left( \sum_{j = 1}^{n} \pdv{F_{i}}{u^{j}}(t,u_0(t))w^{j}  + \frac{1}{2} \sum_{j, k = 1}^{n} \pdv{F_{i}}{u^{j}}{u^{k}}(t,u_0(t)) v^{j}v^{k} \right) + O(\epsilon^{3}),
$$with $i = 1, \ldots, s.
$ 
Considering $v(t), w(t)$ uniformly bounded, such that $\norm{u(t, \cdot ) - u_{0}(t, \cdot )}_{L^{\infty}(\R^{n})} \to 0$  as $\epsilon \to 0$ for every $t \in \R$, we are led to the following system of PDEs
\begin{equation*}
\begin{split}
    &\pdv{v}{t^{i}} = \sum_{j = 1}^{n} \pdv{F_{i}}{u^{j}}(t,u_0(t))v^{j}, \\
    & \pdv{w}{t^{i}} = \sum_{j = 1}^{n} \pdv{F_{i}}{u^{j}}(t,u_0(t))w^{j} + \frac{1}{2} \sum_{j,k= 1}^{n} \pdv{F_{i}}{u^{j}}{u^{k}}(t,u_0(t))v^{j}v^{k},
    \end{split} \qquad i = 1, \ldots, s.
\end{equation*}
This is a PDE Lie system \cite{Carinena2007,Odzijewicz2000} related to the VG Lie algebra spanned by
$$
v^{j}\pdv{v^{k}}, \qquad w^{j} \pdv{w^{k}}, \qquad v^{j}v^{k} \pdv{w^{\ell}}, \qquad j, k, \ell = 1, \ldots, n, 
$$
provided that the system of PDEs is integrable. This VG Lie algebra is isomorphic to the semidirect sum  $\big(\mathfrak{gl}(n,\mathbb{R})\oplus \mathfrak{gl}(n,\mathbb{R}) \big)\overrightarrow{\oplus }\mathbb{R}^{n^2(n+1)/2}$, where $\mathfrak{gl}(n,\mathbb{R})$ is the matrix Lie algebra of $n\times n$ matrices with real entries.

As an illustrative example, let us consider the first-order system of PDEs on $\R^{2}$ given by 
\begin{equation}\label{eq:Initial}
    \begin{split}
        &\pdv{u_{1}}{x} = \left( \pdv{\psi}{x} u_{1} + a u_{2}\right)\frac{1}{\psi} - u_{1}^{2}, \\[2pt]
        & \pdv{u_{2}}{x} = \psi - u_{1}u_{2}, \\[2pt] 
        & \pdv{u_{1}}{y} = \psi - u_{1}u_{2}, \\[2pt]
        & \pdv{u_{2}}{y} = \left( \pdv{\psi}{y} u_{2} + \frac{1}{a} u_{1} \right) \frac{1}{\psi} - u_{2}^{2},
    \end{split}
\end{equation}
where $\psi(x, y) \in C^{\infty}(\R^{2})$ is non-vanishing and $a \neq 0$ is a constant. This system is integrable if and only if $\psi$ satisfies the {\it hyperbolic Tzitz\'eica equation} \cite{Tzitzeica1908,Tzitzeica1909,Rogers2002,Hu2020}
$$
\pdv{\ln \psi}{x}{y} = \psi - \frac{1}{\psi^{2}}.
$$
This is not a PDE Lie system as it is not written in normal form. Suppose now that $u_{0}(t)$ is a known particular solution for \eqref{eq:Initial}, and let us find solutions close to $u_{0}(t)$ by means of the previous ansatz.  In this way, we arrive at the system of PDEs
\begin{equation*}
\begin{gathered}
\begin{pmatrix}
\displaystyle \pdv{v_{1}}{x} \\[2ex]
\displaystyle \pdv{v_{2}}{x}
\end{pmatrix}
= A(u,\psi,\psi_x)
\begin{pmatrix}
v_{1}\\[2pt] v_{2}
\end{pmatrix},
\qquad
\begin{pmatrix}
\displaystyle \pdv{v_{1}}{y} \\[2ex]
\displaystyle \pdv{v_{2}}{y}
\end{pmatrix}
= B(u,\psi,\psi_y)
\begin{pmatrix}
v_{1}\\[2pt] v_{2}
\end{pmatrix},
\\[8pt]
\begin{pmatrix}
\displaystyle \pdv{w_{1}}{x} \\[2ex]
\displaystyle \pdv{w_{2}}{x}
\end{pmatrix}
= A(u,\psi,\psi_x)
\begin{pmatrix}
w_{1}\\[2pt] w_{2}
\end{pmatrix}
+
\begin{pmatrix}
-\,v_{1}^{2}\\[2pt] -\,v_{1}v_{2}
\end{pmatrix},
\qquad
\begin{pmatrix}
\displaystyle \pdv{w_{1}}{y} \\[2ex]
\displaystyle \pdv{w_{2}}{y}
\end{pmatrix}
= B(u,\psi,\psi_y)
\begin{pmatrix}
w_{1}\\[2pt] w_{2}
\end{pmatrix}
+
\begin{pmatrix}
-\,v_{1}v_{2}\\[2pt] -\,v_{2}^{2}
\end{pmatrix},
\end{gathered}
\end{equation*}
where $\psi=\psi(x,y)$ is a solution of the Tzitz\'eica equation, $u_0(t)=(u_1(t),u_2(t))$ and
\[
A(u,\psi,\psi_x=\partial \psi/\partial x)=\begin{pmatrix} \displaystyle -2 u_{1} + \frac{1}{\psi} \pdv{\psi}{x} & \displaystyle \frac{a}{\psi} \\
        \displaystyle - u_{2} & \displaystyle - u_{1}
    \end{pmatrix} ,\qquad 
B(u,\psi,\psi_y=\partial \psi/\partial y)=\begin{pmatrix}
        \displaystyle - u_{2} & \displaystyle - u_{1} \\
          \displaystyle \frac{1}{a \psi}   & \displaystyle - 2 u_{2} + \frac{1}{\psi} \pdv{\psi}{y} 
     \end{pmatrix}
    \]
\begin{table}[t!]
\footnotesize
\caption{\small Commutation rules of the vector fields \eqref{eq:pde:vfapp}. They span a  non-solvable VG Lie algebra.}
$ 
\!\!\!\!\begin{array}{c|cccccccccccccc}
\multispan{15}{ } \\[6pt]
[\cdot, \cdot] & X_1 & X_2 & X_3 & X_4 & X_5 & X_6 & X_7 & X_8 & X_9 & X_{10} & X_{11} & X_{12} & X_{13} & X_{14} \\[4pt]
\hline \\[-8pt]
X_1 & 0 & -X_2 & X_3 & 0 & 0 & 0 & 2 X_7 & X_8 & 0 & 0 & 0 & 2 X_{12} & X_{13} & 0 \\[2pt]
X_2 &  & 0 &\! -X_1 + X_4 & -X_2 & 0 & 0 & 2 X_8 & X_9 & 0 & 0 & 0 & 2 X_{13} & X_{14} & 0 \\[2pt]
X_3 &  &   & 0 & X_3 & 0& 0 & 0 & X_7 & 2 X_8 & 0 & 0 & 0 & X_{12} & 2 X_{13}  \\[2pt]
X_4 &  &  &  & 0 & 0 & 0 & 0 & X_8 & 2 X_9 & 0 & 0 & 0 & X_{13} & 2 X_{14} \\[2pt]
X_5 &  &  &  &  & 0 & -X_6 & -X_7 & -X_8 & -X_9 & X_{10} & 0 & 0 & 0 & 0 \\[2pt]
X_6 &  &  &  &  &  & 0 & 0 & 0 & 0 &\! -X_5 + X_{11} & -X_6 & -X_7 & -X_8 & -X_9 \\[2pt]
X_7 &  &  &  &  &  &  & 0 & 0 & 0 & X_{12} & 0 & 0 & 0 & 0 \\[2pt]
X_8 &  &  &  &  &  &  &  & 0 & 0 & X_{13} & 0 & 0 & 0 & 0 \\[2pt]
X_9 &  &  &  &  & &  &  &  & 0 & X_{14} & 0 & 0 & 0 & 0 \\[2pt]
X_{10} &  &  &  &  &  &  &  &  &  & 0 & X_{10} & 0 & 0 & 0 \\[2pt]
X_{11} &  &  &  &  &  &  &  &  &  &  & 0 & -X_{12} & -X_{13} & -X_{14} \\[2pt]
X_{12} &  &  &  &  &  &  &  &  &  &  &  & 0 & 0 & 0 \\[2pt]
X_{13} &  &  &  &  &  &  &  &  &  &  &  &  & 0 & 0 \\[2pt]
X_{14} &  &  & &  &  &  &  &  &  &  & &  &  & 0 
\end{array}
$
\end{table}

This is a PDE Lie system related to the VG Lie algebra spanned by 
\begin{equation}
\label{eq:pde:vfapp}
    \begin{aligned}
        &X_{1} := v_{1} \pdv{v_{1}}, \qquad && X_{2}:= v_{2} \pdv{v_{1}}, \qquad &&X_{3}:= v_{1} \pdv{v_{2}}, \qquad &&X_{4}:= v_{2} \pdv{v_{2}},\\[2pt]
        &X_{5}:= w_{1}\pdv{w_{1}}, \qquad &&X_{6}:= w_{2} \pdv{w_{1}}, \qquad && X_{7}:= v_{1}^{2} \pdv{w_{1}}, \qquad &&X_{8}:= v_{1}v_{2} \pdv{w_{1}},\\[2pt]
        & X_{9} := v_{2}^{2} \pdv{w_{1}}, \qquad &&X_{10}:= w_{1} \pdv{w_{2}}, \qquad && X_{11}:= w_{2} \pdv{w_{2}}, \qquad &&X_{12}:= v_{1}^{2} \pdv{w_{2}}, \\[2pt]
        & X_{13}:= v_{1}v_{2} \pdv{w_{2}}, \qquad && X_{14}:= v_{2}^{2} \pdv{w_{2}}. 
    \end{aligned}
\end{equation}
These vector fields can be easily seen to generate a non-solvable Lie algebra $\mathfrak{s} \overrightarrow{\oplus}_{\Gamma}\mathfrak{r}$ with Levi subalgebra $\mathfrak{s}$ isomorphic to $\mathfrak{sl}(2,\mathbb{R})\oplus\mathfrak{sl}(2,\mathbb{R})$,  generated by $\{X_1-X_4,X_2,X_3\}$ and $\{ X_5-X_{11},X_6,X_{10}\}$,   and solvable radical $\mathfrak{r}$ generated by $\{X_1+X_4,X_5+X_{11}, X_7,X_8,X_9,X_{12},X_{13},X_{14}\}$. {The action of $\mathfrak{s}$ over the radical $\mathfrak{r}$ is given by the characteristic representation $\Gamma=R_0^2\oplus\left(R_{\frac{1}{2}}\otimes R_1\right) $, where $R_k$ denotes the $(2k+1)$-dimensional irreducible representation of $\mathfrak{sl}(2,\mathbb{R})$ with highest weight $\lambda=2k$. The solvable algebra $\mathfrak{r}$ further has an Abelian nilpotent radical and a two-dimensional external torus of derivations generated by $X_1+X_4$ and $X_5+X_{11}$.}
As the VG Lie algebra is a semidirect sum, { we can potentially apply} Proposition~\ref{prop:mixed} as mixed superposition rules for PDE Lie systems are obtained exactly as in the case of Lie systems in terms of an associated VG Lie algebra (cf.~\cite{Carinena2011}). Moreover,  the vector fields of the VG Lie algebra leave invariant the regular and integrable rank-two distribution $\cD \subset \T \R^{4}$ spanned by $\partial_{w_{1}}$ and $\partial_{w_{2}}$, { implying that} Theorem~\ref{thm:imprimitive} can be applied as well.


\section{Concluding remarks}\label{Section:Concluding}
In this work, { a formal and comprehensive framework to derive mixed superposition rules for Lie systems admitting some specific geometric and algebraic features has been developed. In this context, the Poisson coalgebra method, previously restricted to the case of  superposition rules for LH systems, has been successfully extended to deal with mixed superposition rules, as well as Dirac--Lie systems. The formalism has also shown its usefulness in solving and stating correctly a technical problem about the Poisson coalgebra method and the isomorphism between Poisson algebras on manifolds and the Poisson algebra naturally related to tensor products, that had not been explained satisfactorily in the existing literature. Several physically relevant examples,   including Riccati systems, Schr\"odinger Lie systems, time-dependent oscillators, and  a time-dependent Calogero--Moser system, have been presented, illustrating the  effectiveness of our approach and the relevant connections to both symplectic and contact geometries. In this context, it is worthy to be mentioned that the application to time-dependent thermodynamical models shows the wide range of systems of differential equations potentially analyzable by the method. Of special interest are first-order systems of Mayer type \cite{Flanders1963}, as well as differential equations  associated with systems related to geometric thermodynamics \cite{Baz}, and, more generally, to the context of Pfaff systems \cite{Flanders1963,CAR},  where the techniques explored in this work could lead to new criteria for either separating effectively equivalence classes  or establish new hierarchies  of Pfaff systems in arbitrary dimensions. In particular, it is of interest  to study under which conditions these systems constitute a contact Lie system for an appropriate Pfaff form, and whether they can be further described as Dirac--Lie systems. 

The case of PDE Lie systems also constitutes a rich field where the use of mixed superposition rules can be of relevance for  advancing in the study of structural properties of differential equations and integrable systems. As shown in the examples, the techniques can be suitable adapted to the structural analysis of PDE Lie systems, showing, in addition, that  Lie algebras admitting a nontrivial Levi decomposition appear naturally as VG algebras in this context. The procedure provides a natural ansatz for the derivation of approximate solutions of systems of partial differential equations. }

{ There are several possible continuations of this work. The notion of contractions of realizations of Lie algebras of vector fields has already been used in the context of Lie systems and generalizations (see  e.g.~\cite{Campoamor-Stursberg2025a} and references therein), and the question whether this approach can be extended to the case of PDE Lie systems with nonsolvable VG algebras as a contraction of (nonlinear) realizations of semisimple algebras is still unanswered. On the other hand, the  formalism of quantum deformations has been successfully formulated for LH systems \cite{Ballesteros2021}, and as natural continuation of this work, it is conceivable to generalize this formalism to the case of the mixed Hamiltonian coalgebra realm, which special focus on the derivation and characterization of approximate solutions of the $n^{th}$-order perturbations associated to the truncation of the analytical series determined by the   quantum  deformation parameters.   These problems, among other formal considerations, are currently under study and will be the topic of future works. 
}


\section*{Acknowledgements}

   R.C.-S., O.C.~and  F.J.H.~have been partially supported by Agencia Estatal de Investigaci\'on (Spain) under  the grant PID2023-148373NB-I00 funded by MCIN/AEI/10.13039/501100011033/FEDER, EU.  
O.C.~acknowledges financial support from the IDUB I.1.5 program of the University of Warsaw with project number PSP: 501-D111-20-0001150, as well as a fellowship (grant C15/23) funded by the Universidad Complutense de Madrid and Banco de Santander. F.J.H.~acknowledges support  by the  Q-CAYLE Project  funded by the Regional Government of Castilla y Le\'on (Junta de Castilla y Le\'on, Spain) and by the Spanish Ministry of Science and Innovation (MCIN) through the European Union funds NextGenerationEU (PRTR C17.I1).    The authors also acknowledge the contribution of RED2022-134301-T funded by  MCIN/AEI/10.13039/501100011033 (Spain).

\small

\end{document}